\tikzset{
point/.style={circle,fill=black,inner sep=1pt},
vertex/.style={circle,fill=black,inner sep=1.5pt},   
bvertex/.style={circle,fill=black,inner sep=2.8pt},
Bvertex/.style={circle,fill=black,inner sep=4pt}, 
specialEP/.style={rectangle,fill=white,draw,inner sep=3pt},  
whitevex/.style={circle,fill=white,draw, inner sep=2pt},
linelabel/.style={sloped,above,very near start, inner sep=1pt,execute at begin node=$\scriptstyle,execute at end node=$},
baseline=(current  bounding  box.center),doubled/.style={double distance= 1pt,line width=1.5pt},
th/.style={line width=0.5 pt, gray},  
med/.style={line width=1 pt}  
}
\definecolor{orange}{rgb}{1,0.5,0}
\def\bR{\mathbb{R}}
\def\bN{\mathbb{N}}
\def\NN{\mathbb{N}}
\def\bZ{\mathbb{Z}}
\def\cC{\mathcal{C}}
\def\cQ{\mathcal{Q}}
\def\cD{\mathcal{D}}
\def\cM{\mathcal{M}}
\def\cV{\mathcal{V}}
\def\cO{\mathcal{O}}
\def\cF{\mathcal{F}}
\def\cG{\mathcal{G}}
\def\cL{\mathcal{L}}
\def\cJ{\mathcal{J}}
\def\cN{\mathcal{N}}
\def\cE{\mathcal{E}}
\def\cK{\mathcal{K}}
\def\cH{\mathcal{H}}
\def\eps{\varepsilon}
\def\ph{\varphi}
\def\wt{\widetilde}
\def\indic{\hbox{\raise-2pt \hbox{\indbf 1}}}
\let\io=\infty
\def\*{{\hfill\break\null\hfill\break}}
\def\bra#1{{\langle#1|}}
\def\ket#1{{|#1\rangle}}
\def\expec#1#2{\langle{#1},#2{#1}\rangle}
\let\arr=\rightarrow
\def\norm#1{{\left|\hskip-.05em\left|#1\right|\hskip-.05em\right|}}
\def\tende#1{\,\vtop{\ialign{##\crcr\rightarrowfill\crcr
             \noalign{\kern-1pt\nointerlineskip}
             \hskip3.pt${\scriptstyle #1}$\hskip3.pt\crcr}}\,}
\def\otto{\,{\kern-1.truept\leftarrow\kern-5.truept\to\kern-1.truept}\,}
\def\fra#1#2{{#1\over#2}}
\newtheorem{theorem}{Theorem}[section]  
\newtheorem{prop}[theorem]{Proposition}
\newtheorem{lemma}[theorem]{Lemma}
\numberwithin{equation}{section}
\def\be{\begin{equation}}
\def\ee{\end{equation}}
\def\spl#1{\[ \begin{split}#1\end{split} \]}
\def\bes#1{\be \begin{split}#1\end{split} \ee}  
\newcommand{\hc}{\mathrm{h.c.}}
\let\a=\alpha \let\b=\beta    \let\g=\gamma     \let\d=\delta     \let\e=\varepsilon
  \let\h=\eta     \let\th=\vartheta \let\k=\kappa     \let\l=\lambda  \let\ka=\kappa
    \let\n=\nu      \let\x=\xi        \let\p=\pi        \let\r=\rho
\let\s=\sigma \let\t=\tau         \let\ph=\varphi   \let\c=\chi
\let\G=\Gamma \let\D=\Delta   \let\Th=\Theta    \let\L=\Lambda    \let\X=\Xi
\let\P=\Pi             
\let\O=\Omega 
\let\Y=\Upsilon 
\newcommand{\fa}{ \mathfrak a}
\def\sl{\fa_0}
\newcommand{\ret}{\Lambda^*}
\newcommand{\retp}{\ret_+}
\def\quadre#1{\left[#1\right]}
\def\wh#1{\widehat{#1}}
\def\excsp{\cF^{\le N}_+}
\def\gatn#1{\be \begin{gathered}#1\end{gathered} \ee}
\def\abs#1{\left|#1\right|}
\newcommand{\st}{\,:\,}
\def\cka{\check a}
\def\ckb{\check b}
\def\ckd{\check d}
\def\cke{\check \eta}
\def\ckg{\check \g}
\def\cks{\check \s}
\def\consB#1#2{e^{-#1B(\t)}#2e^{#1B(\t)}}
\def\conA#1{e^{-A}#1e^A}
\def\consA#1{e^{-sA}#1e^{sA}}
\def\tg{\wt\g}
\def\ts{\wt\s}
\def\convo#1{(\wh V(\cdot/N^{1-\ka})\star#1)}
\def\cNN{(\cN_++N^{3\ka/2})}
\def\cKK{(\cK+N^{\a+2\ka})}
\definecolor{lightblue}{rgb}{0, 0.33, 0.71}
\begin{document}

\title{Excitation Spectrum for Bose Gases beyond the Gross--Pitaevskii Regime} 

\author{Christian Brennecke$^1$, Marco Caporaletti$^2$, Benjamin Schlein$^3$ \\
\\
Department of Mathematics, Harvard University, \\
One Oxford Street, Cambridge MA 02138, USA$^{1}$ \\
\\
Institute of Mathematics, University of Zurich, \\
Winterthurerstrasse 190, 8057 Zurich, Switzerland$^{2,3}$}

\maketitle

\begin{abstract}
	We consider Bose gases of $N$ particles in a box of volume one, interacting through a repulsive potential with scattering length of order $N^{-1+\ka}$, for $\ka>0$. Such regimes interpolate between the Gross-Pitaevskii and thermodynamic limits. Assuming that $\ka$ is sufficiently small, we determine the ground state energy and the low-energy excitation spectrum, up to errors vanishing in the limit $N\to\io$. 
\end{abstract}


\section{Introduction and main result}
\label{sec:intro}

We consider systems of $N\in\NN$ bosons in the three-dimensional box $\Lambda=[-1/2,1/2]^3$, with periodic boundary conditions, interacting through a repulsive potential with effective range of the order $N^{-1+\kappa}$, for $\kappa \geq 0$ small enough. The Hamilton operator of the system has the form 
\be
\label{eq:hamiltoniano}
H_N=\sum_{i=1}^N-\Delta_{x_i}+\sum_{1\le i<j\le N}N^{2-2\ka}V(N^{1-\ka}(x_i-x_j))
\ee
and it is densely defined on $L^2_s(\Lambda^N)$, the subspace of $L^2 (\Lambda^N)$ consisting of permutation-invariant functions on $\Lambda^N$. Throughout the paper we will assume that the potential $V\in L^3(\bR)$ is non-negative, radial and compactly supported. We recall the zero-energy scattering equation 
\[
\left[-\D+\fra12V(x)\right]f(x)=0,
\]  with the boundary condition $f(x)\to1$ as $\abs{x}\to\infty$. It is easy to see that, outside of the range of the potential $V$, the solution must have the form $f=1-\sl/\abs x$ for some unique positive real number $\sl$ called the scattering length of $V$. By scaling, 
\[
\left[-\D+\fra12N^{2-2\ka}V(N^{1-\ka}x)\right]f(N^{1-\ka}x)=0,
\] so that the scattering length of the rescaled potential is given by $\fa_N=\fa_0/N^{1-\ka}$. For $\ka<2/3$ the Hamiltonian \eqref{eq:hamiltoniano} represents a ``dilute limit'', in the sense that the rescaled density $\rho \fa_N^3 = N^{-2+3\ka}$ tends to zero, as $N \to \infty$. 

For $\ka=0$, (\ref{eq:hamiltoniano}) describes a Bose gas in the so-called Gross-Pitaevski regime, for which several rigorous results have been obtained, including convergence of the many-body dynamics towards the time-dependent Gross-Pitaevskii equation \cite{ESY,P,BDS,BS}, complete Bose-Einstein condensation \cite{LS,LS2,NRS,BBCS1,BBCS4,NNRT,H,BSS}, bounds on the ground state energy to leading order \cite{LSY,NRS} and, recently, up to errors vanishing in the limit of large $N$ \cite{BBCS3}, as well as estimates on the low-energy excitation spectrum \cite{BBCS3}. For $\kappa = 2/3$, the Hamilton operator (\ref{eq:hamiltoniano}) describes instead (after appropriate rescaling) a gas in the thermodynamic limit, at fixed density. In this regime, only the ground state energy has been determined rigorously, at leading order \cite{LY} and, recently, including next-order corrections \cite{FS}. In this regime, the existence of Bose-Einstein condensation and the form of the excitation spectrum are still open and in fact they represent very challenging goals in mathematical physics (a review of a long-term project based on a renormalization group approach to these questions can be found in \cite{Balaban}). 

In this paper, we choose $\ka \in (0;2/3)$ small enough, interpolating between the Gross-Pitaevskii and the thermodynamic limit. Our main theorem extends precise estimates on the ground state energy and on low-lying excitations from the Gross-Pitaevskii limit to regimes with $\kappa > 0$. 
\begin{theorem}\label{thm:main}
	Let $V\in L^3(\bR)$ be pointwise nonnegative, spherically symmetric and compactly supported  and denote by $\fa_0$ its scattering length, and let $E_N$ be the ground state of the Hamiltonian $H_N$  defined in \eqref{eq:hamiltoniano}. For $\kappa \geq 0$ small enough, there exists $\eps > 0$ such that, in the limit $N \to \infty$, 	
	\bes{\label{eq:mainthm_GS}
E_N= \; &4\pi\sl N^{\ka} (N-1)+e_\L(\fa_0N^\ka)^2\\
	&+\fra{1}{2}\sum_{p\in\retp}\quadre{\sqrt{|p|^4+16\pi\sl p^2N^\ka}-p^2-8\pi\sl N^\ka+\fra{(8\pi\sl N^\ka)^2}{2p^2}}+ \cO (N^{-\eps}) \,,
	} with the notation $\retp=2\pi\bZ^3\setminus\{0\}$ and	
	\begin{equation}\label{eq:eLambda} e_\L=2-\lim_{M\to\io}\sum_{\substack{p\in\bZ^3_+\\ \abs{p_1},\abs{p_2},\abs{p_3}\le M}}\fra{4\cos(\abs p)}{\abs p^2}\,,\end{equation}  
where, in particular, the limit exists. Moreover, for if $\kappa , \mu > 0$ are small enough, there exists $\eps > 0$ such that the spectrum of $H_N-E_N$ below the threshold $N^{\ka/2 + \mu}$ consists of eigenvalues given, in the limit $N\to\io$, by
\be\label{eq:mainthm_eigenvalues}
\sum_{p\in\retp}n_p\sqrt{|p|^4+16\pi\fa_0  N^\ka p^2} + \cO (N^{-\eps}) \, ,
\ee where $n_p\in\bN$ for all $p\in\retp$ and $n_p\neq0$ only for a finite number of $p\in\retp$.
\end{theorem}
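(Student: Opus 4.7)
The plan is to follow the multi-step renormalization strategy developed in \cite{BBCS3} for the Gross--Pitaevskii regime, adapted so that all error bounds remain uniform when $\kappa > 0$ is small enough. The goal is to conjugate $H_N$, viewed on the truncated excitation Fock space $\excsp$, into a diagonal form
\[
C_N + \sum_{p\in\retp} e(p)\, a_p^* a_p + \cE_N,
\]
where $C_N$ equals the right-hand side of \eqref{eq:mainthm_GS} up to $\cO(N^{-\eps})$, where $e(p) = \sqrt{|p|^4 + 16\pi\fa_0 N^\kappa p^2}$, and where $\cE_N$ is negligible on low-energy states. Both claims of the theorem then follow by a min-max argument comparing eigenvalues of $H_N - E_N$ with those of the diagonal model, using trial states of the form $\prod_p (a_p^*)^{n_p}|\Omega\rangle$ pulled back through the inverse transformations.

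First I would pass to $\excsp$ via the Lewin--Nam--Serfaty--Solovej unitary that factors out the condensate mode $p=0$. A preliminary input is a quantitative Bose--Einstein condensation estimate of the form $\langle \cN_+ \rangle \lesssim N^\kappa$ on states of energy $\le E_N + \cO(N^\kappa)$, available from the BEC literature cited in the introduction. The first renormalization is a generalised quadratic Bogoliubov transformation $e^{B(\eta)}$, whose kernel $\eta_p$ is built from the high-momentum Fourier coefficients of the ground-state solution of the Neumann scattering equation in a ball. This resums the singular short-scale two-body correlations invisible to the condensate wave function. A careful evaluation of the resulting sums over $\eta_p$ produces the first two terms of \eqref{eq:mainthm_GS}: the leading energy $4\pi\fa_0 N^\kappa(N-1)$ from the renormalized interaction, and the finite-volume Madelung-type correction $e_\L(\fa_0 N^\kappa)^2$, whose limit representation \eqref{eq:eLambda} must be established en route.

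The conjugated Hamiltonian still carries residual cubic terms $\sim a_p^* a_q^* a_{p+q}$, which are more delicate than in the GP case because their natural momentum cutoff sits at scale $N^\kappa$ rather than $\cO(1)$. I would remove them via a second, cubic conjugation $e^A$, with $A$ an antisymmetric trilinear operator tuned so that $[A, \sum_p p^2 a_p^* a_p]$ cancels the leading cubic contributions modulo errors bounded by $\cN_+$, $\cK$ and $N^{-\eps}$. A final quadratic Bogoliubov rotation $e^{B(\tau)}$ with $\tanh(2\tau_p) = -8\pi\fa_0 N^\kappa/(p^2 + 8\pi\fa_0 N^\kappa)$ diagonalises the remaining quadratic form; the emerging zero-point sum reproduces the third term of \eqref{eq:mainthm_GS}, while the diagonal operator $\sum_p e(p) a_p^* a_p$ supplies the spectrum \eqref{eq:mainthm_eigenvalues}, upper bound by explicit trial states and lower bound by positivity of $\cE_N$ after the energy cutoff $N^{\kappa/2+\mu}$ is imposed.

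The main obstacle I anticipate is quantifying all remainder operators uniformly for $\kappa$ in a small neighbourhood of $0$. Because the interaction $N^{2-2\kappa}V(N^{1-\kappa}\cdot)$ is spread over the larger length scale $N^{-1+\kappa}$, the reference energy of the problem shifts from $\cO(1)$ to $\cO(N^\kappa)$, and subleading operators must beat this enlarged scale by a factor $N^{-\eps}$. This forces sharp moment bounds of the type $e^{-B(\eta)}(\cN_+ + 1)^k e^{B(\eta)} \lesssim (\cN_+ + N^\kappa)^k$, a careful low/mid/high momentum splitting of the renormalised potential, and a repeated use of the improved BEC input after each conjugation. The admissible range of $\kappa$ is essentially fixed by when these iterated estimates, combined with the energy threshold $N^{\kappa/2+\mu}$ used in the min-max step for the excitation spectrum, cease to close on themselves.
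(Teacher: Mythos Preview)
Your outline matches the paper's strategy in its skeleton (excitation map $U_N$, quadratic Bogoliubov $e^B$, cubic $e^A$, final diagonalizing Bogoliubov $e^T$, then min--max), but you misidentify both the crucial a-priori input and the main technical obstacle.

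First, the condensation input you invoke, $\langle\cN_+\rangle\lesssim N^\kappa$, is far too weak. The error terms after the renormalizations are of the form $N^{-1/2+5\kappa/2+\alpha}(\cH_N+\cN_+^2+1)(\cN_++1)$ and similar (see Prop.~\ref{prop:JN}, Prop.~\ref{prop:MN_Def}); to make these $o(1)$ on low-energy states one needs joint moment bounds of the type $\langle(\cH_N+1)(\cN_++1)^j\rangle\le C[N^{20\kappa+\eps}\zeta^2+N^{44\kappa+2\eps}]^{j+1}$, which is precisely the content of Theorem~\ref{thm:mainABS} (the main result of \cite{ABS}). This is the nontrivial external input and cannot be replaced by a bare $\cN_+$ estimate.

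Second, you locate the difficulty at the first Bogoliubov step, anticipating $e^{-B}(\cN_++1)^k e^{B}\lesssim(\cN_++N^\kappa)^k$. In the paper the first kernel $\eta$ carries an infrared cutoff at $|p|\ge N^\alpha$ with $\alpha>2\kappa$, so $\|\eta_H\|_2\to 0$ and $e^B$ preserves $(\cN_++1)^j$ with no $N^\kappa$ loss (Lemma~\ref{lem:B_bound_N_easy}). The genuine new difficulty is the \emph{final} diagonalization $e^T$: here $\|\tau\|_2^2\sim N^{3\kappa/2}$ diverges, so $e^T$ creates $\sim N^{3\kappa/2}$ excitations and the standard expansion used for $e^B$ fails. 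The paper handles this by exploiting $\|\tau\|_\infty\le C+\log N^{\kappa/4}$ and $\|\tau\|_1\le CN^{\alpha+\kappa}$ to derive a modified expansion (\ref{eq:Dp_def})--(\ref{eq:Dp_bounds}) and growth bounds of the form $e^{-T}\cN_+^j e^T\le CN^{j(j+1)\kappa/4}(\cN_++N^{3\kappa/2})^j$ (Lemma~\ref{lem:BtauNumberkinetic}). This is the declared novelty of the paper and is absent from your plan.

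Two smaller points: the constant $e_\Lambda(\fa_0N^\kappa)^2$ does not appear after the first conjugation; it is extracted only in the computation of $C_{\cM_N}$ after all three transformations, via the scattering equation and a nontrivial lattice-sum identity (proof of Prop.~\ref{prop:MN_Def}). And the quartic potential $\cV_N$ is not absorbed into $\cE_N$; it survives to the end, is dropped by positivity in the lower bound, and must be controlled separately on trial states for the upper bound (Lemma~\ref{lem:VNBoundD}).
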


{\it Remark:} Approximating the sum on the r.h.s. of (\ref{eq:mainthm_GS}) with an integral, we find  
\[ \begin{split}  
E_N &= 4\pi a_0 N^{1+\kappa} + \frac{N^{5\kappa/2}}{2 (2\pi)^3} \int \left[ \sqrt{|p|^4 + 16 \pi \fa_0 p^2} - p^2 - 8\pi \fa_0 + \frac{(8\pi \fa_0)^2}{2p^2} \right] dp + \cO (N^{2\kappa}) \\ &= 4 \pi a_0 N^{1+\kappa} + 4 \pi \cdot \frac{128}{15 \sqrt{\pi}} \fa_0^{5/2} N^{5\kappa/2} + \cO (N^{2\kappa}) 
\end{split} \]
which is consistent with the Lee-Huang-Yang formula, derived in \cite{FS} for $\kappa = 2/3$ (after appropriate rescaling). The expansion (\ref{eq:mainthm_GS}) is more precise, since it identifies all contributions to the ground state energy that do not vanish as $N \to \infty$ (but, of course, it only 
holds for $\kappa > 0$ small enough). 

\medskip

{\it Remark:} Eq. (\ref{eq:mainthm_eigenvalues}) implies that low-lying excited eigenvalues of the Hamiltonian $H_N$ have energy of order $N^{\kappa/2}$ (for this reason, the interval $[0; N^{\kappa/2+\mu}]$ considered in (\ref{eq:mainthm_eigenvalues}) contains several eigenvalues of the operator $H_N$, for any $\mu > 0$, in the limit of large $N$). In fact, (\ref{eq:mainthm_eigenvalues}) also shows that, to leading order, the dispersion of the excitations is linear in $|p|$. By Landau's criterion, this gives a heuristic explanation of the emergence of superfluidity \cite{Lan}. 

\medskip

{\it Remark:} The bounds (\ref{eq:mainthm_GS}), (\ref{eq:mainthm_eigenvalues}) establish the validity of the predictions of Bogoliubov theory \cite{B} for $\kappa > 0$ small enough, extending the results obtained in \cite{BBCS3} for $\kappa = 0$ (and previous results obtained in \cite{S,GS,LNSS,DN,P3,BPS} for mean-field bosons and in \cite{BBCS2} for regimes interpolating between the mean-field and the Gross-Pitaevskii limit).

\medskip

The proof of Theorem \ref{thm:main} follows the strategy developed in \cite{BBCS4} for the Gross-Pitaevskii regime with $\kappa = 0$. It makes use, crucially, of recent estimates obtained in \cite{ABS}, for sufficiently small $\kappa > 0$, for the expectation of the number and the energy of excitations of the condensate (with a rate that becomes optimal as $\kappa \to 0$). 

First of all, we factor out the Bose-Einstein condensate, focusing on its orthogonal excitations. To this end, we introduce a unitary operator $U_N$, mapping the Hilbert space $L^2_s (\Lambda^N)$ into the truncated Fock space $\cF_+^{\leq N}$, constructed on the orthogonal complement of the condensate wave function. This leads us to an excitation Hamiltonian $\cL_N = U_N H_N U_N^*$, acting on $\cF_+^{\leq N}$. 

As a second step, we conjugate $\cL_N$ with a (generalized) Bogoliubov transformation, defining the renormalized excitation Hamiltonian $\cG_N = e^{-B} \cL_N e^{B}$. The antisymmetric operator $B$ is quadratic in (modified) creation and annihilation operators. Conjugation with $e^B$ acts on high momenta; it creates  short-scale correlations among particles. Even after conjugation with $e^B$, there are still some important contributions to the energy  hidden in parts of $\cG_N$ that are cubic and quartic in creation and annihilation operators. For this reason, we need to conjugate $\cG_N$ with another unitary operator of the form $e^A$, this time with $A$ cubic, rather than quadratic, in creation and annihilation operators. We obtain the twice renormalized Hamiltonian $\cJ_N = e^{-A} \cG_N e^A$. Up to error terms that can be estimated with the a-priori bounds from \cite{ABS}, $\cJ_N$ is the sum of a constant term, a quadratic and a quartic contribution. The quartic part is positive and can be neglected, when proving lower bounds on the ground state energy and on the excited eigenvalues. To show matching upper bounds, we only have to control the quartic potential on appropriate trial states. To conclude the proof of Theorem \ref{thm:main}, we still need to diagonalize the quadratic part of $\cJ_N$. To this end, we use a last (generalized) Bogoliubov transformation to define the final excitation Hamiltonian $\cM_N = e^{-T} \cJ_N e^T$, with a diagonal quadratic part. 

While the previous steps of the analysis are quite similar to what was done in \cite{BBCS3} for $\kappa = 0$, this last step differs substantially. The reason is that, for $\kappa > 0$, the (generalized) Bogoliubov transformation $e^T$ diagonalizing the quadratic component of $\cJ_N$ creates a large number of excitations (the number of excitations of the Bose-Einstein condensate is of the order $N^{3\kappa/2}$, and thus diverges, as $N \to \infty$, for any $\kappa > 0$). For this reason, it is more difficult to determine the action of $e^T$ (being $e^T$ a generalised Bogoliubov transformation, its action is not explicit; it has to be computed through an expansion, whose convergence depends on the size of $T$) and to control the resulting growth of error terms. This part of the analysis, which is carried out in Section \ref{sec:diagonalization} and leads to the proof of Theorem \ref{thm:main} in Section \ref{sec:main_thm}, is the main novelty of our work. Together with the a-priori bounds on the number and the energy of excitations of the Bose-Einstein condensate from \cite{ABS}, this is also the main reason why we have to restrict our analysis to very small values of the parameter $\kappa > 0$. While we did not try to optimize the choice of $\kappa$, it is clear that to make it substantially larger and to approach the thermodynamic limit at $\kappa =2/3$, genuinely new ideas are needed. In this direction, let us mention the recent result obtained in \cite{F}, where the techniques of \cite{FS} have been applied to show the existence of Bose-Einstein condensation up to $\kappa < 2/5$; compared with the bounds of \cite{ABS}, this approach only controls the expectation of the number of particles and therefore it cannot be directly applied to show Theorem \ref{thm:main}. It would in fact be possible to combine the estimate in \cite{F} with the bounds in \cite[Proposition 3.3]{ABS} to get a better control of energy and high powers of number of excitations. Since, however, we would still have to restrict to very small values of $\kappa$, we decided to use directly the bounds in \cite{ABS}, to keep our analysis as short as possible. 

\medskip

\textit{Acknowledgements.} B. S. gratefully acknowledges partial support from the NCCR SwissMAP, from the Swiss National Science Foundation through the Grant ``Dynamical and energetic properties of Bose-Einstein condensates'' and the grant ``Bogoliubov theory for bosonic systems'' and from the European Research Council through the ERC-AdG CLaQS. 



\section{Fock Space and the Excitation Hamiltonian}
\label{sec:exc_ham}

We describe excitations of the Bose-Einstein condensate on the truncated Fock space 
\[ \cF_+^{\leq N} = \bigoplus_{k=0}^N L^2_\perp (\Lambda)^{\otimes_s k} \]
built over the orthogonal complement $L^2_\perp (\Lambda)$ of the zero-momentum mode $\ph_0 \equiv 1$. We map the original $N$-particle Hilbert space $L^2_s (\Lambda^N)$ into $\cF_+^{\leq N}$ through the unitary operator $U_N : L^2_s (\L^N) \to \cF_+^{\leq N}$, defined by $U_N \psi_N = \{ \alpha_0, \alpha_1, \alpha_2, \dots , \alpha_N \}$, with $\alpha_j \in L^2_\perp (\Lambda)^{\otimes_s j}$, if 
\[ \psi_N = \alpha_0 \ph_0^{\otimes N} + \alpha_1 \otimes_s \ph_0^{\otimes_s (N-1)} + \dots + \alpha_N \]
where $\otimes_s$ denotes the symmetrized tensor product. The map $U_N$ factors out the Bose-Einstein condensate and allows us to focus on its orthogonal excitations. With $U_N$, we can define the excitation Hamiltonian $\cL_N = U_N H_N U_N^*$ as a self-adjoint operator on a dense subspace of $\cF_+^{\leq N}$. Proceeding as in \cite[Sect. 2]{ABS}, we find 
\[
\cL_N = \cL_N^{(0)}+\cL_N^{(2)}+\cL_N^{(3)}+\cL_N^{(4)}\,,
\] with 
\bes{\label{eq:excitation_hamiltonian}
	\cL_N^{(0)}=\,& \fra{N-1}{2N}N^\ka\wh{V}(0)(N-\cN_+)+\fra{N^\ka\wh{V}(0)}{2N}\cN_+(N-\cN_+)\\
	\cL_N^{(2)}=\,&\sum_{p\in\retp}p^2a_p^*a_p+\sum_{p\in\retp}N^\ka\wh V(p/N^{1-\ka})\left[b_p^*b_p-\fra{1}{N}a_p^*a_p\right]\\&+\fra{N^\ka}{2}\sum_{p\in\retp}\wh V(p/N^{1-\ka})[b_p^*b_{-p}^*+b_pb_{-p}]\\
	\cL_N^{(3)}=\,&\fra{N^\ka}{\sqrt N}\sum_{\substack{p,q\in\retp\\p\neq-q}}\wh V(p/N^{1-\ka})\left[b_{p+q}^*a_{-p}^*a_q+a_q^*a_{-p}b_{p+q}\right]\\
	\cL_N^{(4)} =\; &\fra{N^\ka}{2N}\sum_{\substack{p,q\in\retp, r\in\ret\\r\neq-p,-q}}\wh V(r/N^{1-\ka})a_{p+r}^*a_q^*a_{p}a_{q+r}\,,
} where $\L_+^* = 2\pi \bZ^3 \backslash \{ 0 \}$ is the set of possible momenta of excitations of the condensate and where, for any $p \in \L^*_+$, $a^*_p, a_p$ are the usual creation and annihilation operators. Moreover, for $p \in \L^*_+$, we introduced generalized creation and annihilation operators 
\[ \begin{split} b_p^* &= U_N a_p^* \frac{a_0}{\sqrt{N}} U_N^* = a_p^* \sqrt{\frac{N- \cN_+}{N}} \\
b_p &= U_N \frac{a_0^*}{\sqrt{N}} a_p U_N^* = \sqrt{\frac{N- \cN_+}{N}}  a_p \end{split}  \]
satisfying the approximate CCR
\be\label{eq:bpCCR}
[b_p,b_q]=[b_p^*,b_q^*]=0,\qquad [b_p,b_q^*]=\d_{p,q}\left(1-\fra{\cN_+}{N}\right)-\fra{1}{N}a_q^*a_p,
\ee together with the useful commutation relations
\[
[b_p, a^*_qa_r]=\d_{p,q}b_r,\qquad [b_p^*, a_q^*a_r]=-\d_{p,r}b_q^*.
\]
Observe that, while $a_p, a_p^*$ do not preserve the truncation on the number of particles, the operators $b_p, b_p^*$ (and also products of the form $a_p^* a_q$) are well-defined on $\cF_+^{\leq N}$. 

In the following, we will also use the notation \begin{equation}\label{eq:KVN} \cK = \sum_{p \in \L^*_+} p^2 a_p^* a_p, \qquad \cV_N = \cL_N^{(4)} =\; \fra{N^\ka}{2N}\sum_{\substack{p,q\in\retp, r\in\ret\\r\neq-p,-q}}\wh V(r/N^{1-\ka})a_{p+r}^*a_q^*a_{p}a_{q+r} \end{equation}  for the kinetic and the potential energy operators. Moreover, we set $\cH_N = \cK + \cV_N$. It will sometimes also be useful to switch to position space, introducing, for $x \in \Lambda$, operator-valued distributions $\check{a}_x, \check{a}^*_x, \check{b}_x, \check{b}^*_x$ and writing the interaction as 
\begin{equation}\label{eq:cV-x} \cV_N = \frac12\int dx dy \, N^{2-2\kappa}  V (N^{1-\kappa} (x-y)) \check{a}_x^* \check{a}_y^* \check{a}_y \check{a}_x \end{equation} 



\section{Renormalized Hamiltonians}
\label{sec:renorm_ham}

The vacuum expectation of $\cL_N$ still differs from its ground state (hence that of $H_N$) to the leading order $N^{1+\ka}$. This is because correlations among particles still carry an energy of order $N^{1+\kappa}$ in low-energy states. To extract the relevant contributions to the energy from the cubic and quartic terms, we are going to conjugate $\cL_N$ with a generalized Bogoliubov transformation. To choose the appropriate Bogoliubov transformation, we consider, for a fixed $\ell \in(0;1/2)$, the ground state $f_\ell$ of the Neumann problem
\be \label{eq:scattering_f}
\left[-\D+\fra{1}{2}V\right]f_\ell = \l_\ell f_\ell
\ee
on $B_{\ell N^{1-\kappa}}$ with boundary condition $f_\ell (x)=1$ for $\abs x = \ell N^{1-\kappa}$. By scaling, $f_\ell (N^{1-\ka}\cdot)$ satisfies the equation 
\[
\left[-\D+N^{2-2\ka}\fra{1}{2}V(N^{1-\ka}\cdot)\right]f_\ell (N^{1-\ka}\cdot) = N^{2-2\ka}\l_\ell f_\ell (N^{1-\ka}\cdot)\] on $B_\ell$. We now define $f_N$ to be the extension of $f_\ell (N^{1-\ka}\cdot)$ to $\Lambda$, obtained by setting $f_N (x) = 1$ if $x \in \Lambda \backslash B_\ell$.  With $\c_\ell$ denoting the characteristic function of $B_\ell$ we have
\be\label{eq:scattering_f_rescaled}
\left[-\D+N^{2-2\ka}\fra{1}{2}V(N^{1-\ka}\cdot)\right] f_N = N^{2-2\ka}\l_\ell \c_\ell f_N\,.
\ee
We further define $w_\ell = 1- f_\ell$ on $B_{\ell N^{1-\kappa}}$ and its rescaled version $w_N = w_\ell (N^{1-\ka}\cdot) = 1 - f_N$ on $\Lambda$, with Fourier coefficients 
\[
\wh{w}_N(p)=\int_\Lambda e^{-ip\cdot x} w_N(x) dx=\fra{1}{N^{3-3\ka}}\wh w_\ell (p/N^{1-\ka})=\d_{p,0}-\wh f_N(p)\,.
\] 
Some important properties of these functions are collected in the next lemma, whose proof is a straightforward adaptation of \cite[Lemma 3.1]{BBCS3}. 
\begin{lemma} \label{lem:properties_scattering_function}
Fix $\ell  \in (0;1/2)$ and denote by $f_\ell$ the solution of \eqref{eq:scattering_f}. For $N\in\bN$ large enough the following properties hold true: 
	\begin{enumerate}[i)]
		\item We have 
		\be\label{eq:scattering_lambda}
		\l_\ell=\fra{3\fa_0}{(\ell N^{1-\ka})^3}\left(1+\fra{9}{5}\fra{\fa_0}{\ell N^{1-\ka}}+\cO\left(\fra{\fa_0^2}{(\ell N^{1-\ka})^2}\right)\right)\,.
		\ee
		\item We have $0\le f_\ell$, $w_\ell \le 1$, and there exists a constant $C>0$ such that 
		\be \label{eq:approx_8pia_0}
		\abs{\int V(x)f_\ell (x)dx - 8\pi\fa_0\left(1+\fra{3}{2}\fra{\fa_0}{\ell N^{1-\ka}}\right)}\le\fra{C\fa_0^3}{(\ell N^{1-\ka})^2}\,.
		\ee
		\item There exists a constant $C>0$ such that 
		\be\label{eq:wl_bounds}
		0\le w_\ell \le\fra{C}{1+\abs{x}}\qquad\mbox{and}\qquad\abs{\nabla w_\ell}\le\fra{C}{1+\abs x^2}\,,\ee 
and precisely
\be\label{eq:wl_integral}
\abs{\fra{1}{(\ell N^{1-\ka})^2}\int_{\bR^3}w_\ell (x)dx-\fra{2}{5}\pi\fa_0}\le \fra{C\fa_0}{\ell N^{1-\ka}}\,.
\ee
\item There exists a constant $C>0$ such that
\be \label{eq:bound_wp}
\wh w_N(p)\le\fra{CN^\ka}{N\abs p^2}
\ee for all $p\in\bR^3$ and $N\in\bN$ large enough (such that $N^\ka/N\le 1/\ell$).
\end{enumerate}
\end{lemma}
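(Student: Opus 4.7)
The plan is to adapt the proof of Lemma 3.1 in \cite{BBCS3}, which covers the $\kappa = 0$ case, to the present rescaled setting. Write $R := \ell N^{1-\kappa}$. All quantities appearing in the lemma depend on $N$ only through the combination $R$, and for $\kappa \in (0,1)$ we still have $R \to \infty$ as $N \to \infty$; thus the Neumann problem on $B_R$ remains a small perturbation of the zero-energy scattering problem on $\bR^3$, and the BBCS3 estimates transfer almost verbatim with $\ell N$ replaced by $R$ throughout.

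For parts (i) and (ii), I would start from the identity
\[ \lambda_\ell \int_{B_R} f_\ell \, dx = \tfrac{1}{2} \int V f_\ell \, dx, \]
obtained by integrating the eigenvalue equation \eqref{eq:scattering_f} over $B_R$ and using the Neumann condition $\partial_r f_\ell|_{r=R} = 0$. Outside the support of $V$ the equation becomes $-\Delta f_\ell = \lambda_\ell f_\ell$, whose small-$\lambda_\ell$ expansion reads $f_\ell(r) = A(1 - \fa_0/r) + \lambda_\ell h(r) + \cO(\lambda_\ell^2)$, with $h$ the explicit radial solution of $-\Delta h = 1 - \fa_0/r$. Matching this outer solution to the inner solution across the support of $V$ and imposing the normalization $f_\ell(R) = 1$ fixes $A = 1 + \fa_0/R + \cO((\fa_0/R)^2)$; substituting into the displayed identity produces \eqref{eq:scattering_lambda}, and an analogous substitution into the asymptotic form of $f_\ell$ near the range of $V$ yields \eqref{eq:approx_8pia_0}. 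The sign bounds $0 \leq f_\ell, w_\ell \leq 1$ then follow from the maximum principle applied to $-\Delta + V/2 - \lambda_\ell$, using $V \geq 0$ and the fact that $f_\ell$ is the positive ground state.

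For (iii), the pointwise bounds on $w_\ell$ and $\nabla w_\ell$ follow from part (ii) on the support of $V$ and from the explicit asymptotic $w_\ell(r) \approx \fa_0/r$ outside, differentiating to control $\nabla w_\ell$. The integral estimate \eqref{eq:wl_integral} is then obtained by writing
\[ \int_{B_R} w_\ell \, dx = |B_R| - \int f_\ell \, dx = |B_R| - \frac{1}{2 \lambda_\ell} \int V f_\ell \, dx \]
and substituting the second-order expansions of $\lambda_\ell$ and $\int V f_\ell$ from (i)--(ii): the leading $|B_R|$ terms cancel, and the subleading corrections combine to give $(2\pi \fa_0/5) R^2 + \cO(\fa_0^2 R)$. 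For (iv), Fourier-transforming the equation in the form $-\Delta w_\ell = (V/2 - \lambda_\ell \c_\ell) f_\ell$ gives
\[ |p|^2 \, \wh{w}_\ell(p) = \tfrac{1}{2} \int e^{-i p \cdot x} V(x) f_\ell(x) \, dx - \lambda_\ell \int_{B_R} e^{-i p \cdot x} f_\ell(x) \, dx, \]
whose modulus is bounded by $\tfrac{1}{2} \int V + \lambda_\ell |B_R| \leq C \fa_0$; hence $\wh{w}_\ell(p) \leq C \fa_0 / |p|^2$, and the rescaling identity $\wh{w}_N(p) = N^{-(3-3\kappa)} \wh{w}_\ell(p/N^{1-\kappa})$ then yields \eqref{eq:bound_wp}.

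No genuinely new obstacle arises compared with \cite{BBCS3}. The only point requiring care is tracking the $\lambda_\ell$-correction $h(r)$ in the outer expansion of $f_\ell$, which is what generates both the subleading $\tfrac{9}{5} \fa_0/R$ coefficient in (i) and the cancellation of leading terms producing the $2\pi/5$ factor in \eqref{eq:wl_integral}. Uniformity in $N$ is automatic, since all implicit constants depend on $N$ only through $R$, which diverges as $N \to \infty$ for any $\kappa \in (0,1)$.
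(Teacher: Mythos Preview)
Your proposal is correct and matches the paper's own treatment: the paper does not give an independent proof but simply states that the result is a straightforward adaptation of \cite[Lemma 3.1]{BBCS3}, which is precisely the reduction you carry out by setting $R=\ell N^{1-\kappa}$ and noting that all estimates depend on $N$ only through $R\to\infty$. Your outline of parts (i)--(iv) in fact supplies more detail than the paper itself, and the computations (the integrated eigenvalue identity, the matched asymptotics yielding the $9/5$ and $2\pi/5$ coefficients, and the Fourier argument for (iv) via $-\Delta w_\ell=(V/2-\lambda_\ell\chi_{B_R})f_\ell$ together with the rescaling $\wh w_N(p)=N^{-(3-3\kappa)}\wh w_\ell(p/N^{1-\kappa})$) are all sound.
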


We define now $\eta:\retp\arr\bR$ through
\[
\eta_p = - N \wh w_N(p) =-\fra{N^\ka}{N^{2-2\ka}}\wh w_\ell (p/N^{1-\ka})\,,
\]
In particular, $\wh f_N(p)=\d_{p,0}+\eta_p / N$. In position space, we have 
\[ 
\check\eta(x)=-Nw_\ell (N^{1-\ka}x)
\] for $x\in\L$. From Lemma \ref{lem:properties_scattering_function}, and in particular Eq. \eqref{eq:wl_bounds}, \eqref{eq:bound_wp}, we obtain  
\be\label{eq:eta_punctual_bound}
\eta_p\le\fra{CN^\ka}{p^2}\,,\qquad \norm{\eta}_2\le CN^\ka \, , \qquad \norm{\check\eta}_\io\le CN
\ee
Moreover, from \eqref{eq:wl_bounds} we deduce 
\[
\norm{\eta}_{H^1}^2 = N^{4-2\ka}\norm{\nabla w_\ell (N^{1-\ka}\cdot)}_2^2\le \int_{B_\ell}dx\fra{CN^{4-2\ka}}{(1+ N^{2-2\kappa}\abs x^2)^2}\le CN^{1+\ka}.
\]
The scattering equation \eqref{eq:scattering_f_rescaled} translates to Fourier space into 
\be \label{eq:scattering_fN_fourier}
p^2\wh f_N(p)+\fra{N^\ka}{2N}(\wh V(\cdot /N^{1-\ka})\star\wh f_N)_p=N^{2-2\ka}\l_\ell (\wh \c_\ell \star \wh f_N)_p\,,
\ee for $p\in\retp$ or, equivalently, 
\bes{\label{eq:scattering_eta}
	p^2\eta_p+\fra{1}{2}N^\ka\wh V(p/N^{1-\ka})+\fra{1}{2N}&\sum_{q\in\ret}N^\ka\wh V((p-q)/N^{1-\ka})\eta_q\\&=N^{3-2\ka}\l_\ell \wh \c_\ell (p)+N^{2-2\ka}\l_\ell \sum_{q\in\ret}\wh \c_\ell (p-q)\eta_q.}

In order to make the $\ell^2$-norm of $\eta$ small (which is important to control the action of the corresponding generalized Bogoliubov transformation), we introduce an infrared cutoff, restricting $\eta$ to high momenta. For an $\alpha > 0$ to be specified below, we define $P_H:=\{p\in\retp\st\abs{p}\ge N^\a\}$ and the coefficients 
\[
\eta_H(p)=\eta_p\c_{P_H}(p)\,.
\]
Then, we have 
\spl{\norm{\eta_H}_2&\le CN^{\ka-\alpha/2}, \hspace{0.5cm}\norm{\eta_H}_{H^1}\le CN^{(1+\ka)/2}, \hspace{0.5cm}\norm{\eta_H}_\io\le CN^{\ka-\a}\,,
} and moreover, from Lemma \ref{lem:properties_scattering_function} and \eqref{eq:eta_punctual_bound}, 
\[
\abs{\cke_H(x)}=\Big| \sum_{\abs p \ge N^\a}e^{ip\cdot x}\eta_p \Big| \le\abs{\cke(x)}+\sum_{\abs p < N^\a}\abs{\eta_p}\le C(N+N^{\ka+\alpha})\,.
\] We assume throughout the following analysis that $\alpha>2\ka,\,\ka+\alpha<1$ so that 
\gatn{\label{eq:etaH_norms}\norm{\eta_H}_2\arr 0\,, \norm{\eta_H}_\io\arr 0\, \quad\mbox{as }N\arr\io\,,\hspace{1cm}\norm{\cke_H}_\io\le CN\,.	
} 
With the coefficients $\eta_H$, we define the antisymmetric operator
\begin{equation}\label{eq:B-def} B = \frac{1}{2} \sum_{p \in \L^*_+} \eta_H (p) \big[ b_p^* b_{-p}^* - b_p b_{-p} \big] = \frac{1}{2} \sum_{p \in \Lambda^*_+ : |p| > N^\alpha} \eta_p \big[ b_p^* b_{-p}^* - b_p b_{-p} \big] \end{equation} 
and we consider the corresponding generalized Bogoliubov transformation $e^B$. With (\ref{eq:etaH_norms}), we can control the action of $e^B$ on powers of the number of particles operator. The proof of the following lemma can be found in \cite[Lemma 3.1]{BS} (see \cite{BBCS1} for the analogue in the translation invariant setting).
\begin{lemma}\label{lem:B_bound_N_easy}
Assume $\alpha > 2\kappa$. For every $j\in\bN$ there exists a constant $C>0$ such that 
\[
e^{-B}(\cN_++1)^je^{B}\le C (\cN_++1)^j\,.
\]
\end{lemma}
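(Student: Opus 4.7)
The plan is a Gr\"onwall argument for the function $\phi_\psi(s) := \langle \psi, e^{-sB}(\cN_+ + 1)^j e^{sB}\psi\rangle = \langle \xi_s, (\cN_+ + 1)^j \xi_s\rangle$, with $\xi_s := e^{sB}\psi$ and $s \in [0,1]$; the claimed operator inequality is equivalent to $\phi_\psi(1) \leq C_j\phi_\psi(0)$ for all $\psi \in \cF_+^{\leq N}$, so it suffices to establish the pointwise bound $|\phi_\psi'(s)| \leq C_j\phi_\psi(s)$ uniformly in $s$ and $\psi$. Antihermiticity of $B$ gives $\phi_\psi'(s) = \langle \xi_s, [(\cN_+ + 1)^j, B]\xi_s\rangle$, and since $b_p^* b_{-p}^*$ shifts $\cN_+$ by $+2$, one finds $[(\cN_+ + 1)^j, B] = \tfrac{1}{2}(M + M^*)$ with
\[
M := \sum_{|p|>N^\alpha}\eta_p\,b_p^* b_{-p}^*\,p(\cN_+), \qquad p(\cN_+) := (\cN_+ + 3)^j - (\cN_+ + 1)^j \leq C_j(\cN_+ + 1)^{j-1}.
\]
The task reduces to bounding $|\langle \xi_s, M\xi_s\rangle| \leq C_j\phi_\psi(s)$.

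The key manipulation is to commute $p(\cN_+)$ past $b_{-p}^*$ via $b_{-p}^* p(\cN_+) = p(\cN_+ - 1)b_{-p}^*$ and then split $p(\cN_+ - 1) = p^{1/2}(\cN_+ - 1)\cdot p^{1/2}(\cN_+ - 1)$ symmetrically; using the shifts $p^{1/2}(\cN_+ - 1)b_p = b_p\,p^{1/2}(\cN_+ - 2)$ and $p^{1/2}(\cN_+ - 1)b_{-p}^* = b_{-p}^*\,p^{1/2}(\cN_+)$, one obtains
\[
\langle \xi_s, M\xi_s\rangle = \sum_p \eta_p\,\langle b_p\,p^{1/2}(\cN_+ - 2)\xi_s,\; b_{-p}^*\,p^{1/2}(\cN_+)\xi_s\rangle.
\]
Cauchy--Schwarz in $p$, with the weight $\eta_p$ assigned to the factor carrying $b_{-p}^*$, then gives two sums that each close at level $j$: the first, via $\|b_p\chi\| \leq \|a_p\chi\|$ and $\sum_p a_p^*a_p = \cN_+$, is controlled by $\langle \xi_s, p(\cN_+ - 2)\cN_+\xi_s\rangle \leq C_j\phi_\psi(s)$; the second, using the CCR \eqref{eq:bpCCR} in the form $\|b_{-p}^*\chi\|^2 \leq \|b_{-p}\chi\|^2 + \|\chi\|^2$ and summing against $\eta_p^2$, is bounded by $C_j(\|\eta_H\|_\infty^2 + \|\eta_H\|_2^2)\phi_\psi(s)$. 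Combining and invoking the hypothesis $\alpha > 2\kappa$, which makes $\|\eta_H\|_2, \|\eta_H\|_\infty$ bounded (in fact vanishing) via \eqref{eq:etaH_norms}, I conclude $|\phi_\psi'(s)| \leq C_j\phi_\psi(s)$, and Gr\"onwall yields the lemma.

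The main subtlety I anticipate is the balanced symmetric splitting of $p(\cN_+ - 1)$ across the two legs $b_p$ and $b_{-p}^*$: a naive Cauchy--Schwarz that leaves all of $p(\cN_+)$ on one side of $b_p^*b_{-p}^*$ produces the mixed product $\sqrt{\phi_{j-1}\phi_{j+1}}$, which by Cauchy--Schwarz in the reverse direction is only bounded \emph{below} by $\phi_j$ and hence fails to close the Gr\"onwall inequality. Placing the $\eta_p$ weight on the creation-operator side is the complementary balance, making the CCR correction contribute only the $\|\eta_H\|_2^2$ term rather than a higher moment of $\cN_+$; this is what allows the estimate to close uniformly at every $j \in \bN$.
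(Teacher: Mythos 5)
Your proof is correct and follows essentially the same route as the one the paper delegates to \cite[Lemma 3.1]{BS}, namely a Gr\"onwall argument on $s\mapsto\langle e^{sB}\psi,(\cN_++1)^j e^{sB}\psi\rangle$ with the commutator $[(\cN_++1)^j,B]$ controlled by a symmetrically split Cauchy--Schwarz and the bounds $\|\eta_H\|_2,\|\eta_H\|_\infty\le C$ from $\alpha>2\kappa$; this is exactly the scheme the paper itself reproduces in the proof of Lemma \ref{lem:BtauNumberkinetic}$(i)$ for the operator $T$. Your observations about the balanced splitting of $p(\cN_+-1)$ and the placement of the weight $\eta_p$ on the creation side are the right ones and match the paper's handling of the analogous terms.
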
 
On states with few excitations, the generalized Bogoliubov transformation 
$e^B$ acts approximately like a standard Bogoliubov transformation. To make this statement more precise, for $p \in \Lambda^*_+$ we set $\g_p=\cosh (\h_H (p))$, $\s_p= \sinh  (\h_H (p))$ and we define operators $d_p, d_p^*$ through the identities 
\bes{ \label{eq:dp_def}
		e^{-B}b_pe^{B}&=\g_pb_p+\s_pb_{-p}^*+d_p\\
		e^{-B}b_p^*e^{B}&=\g_pb_p^*+\s_pb_{-p}+d_p^*.}
 Observe that $\gamma_p = 1$ and $\s_p = 0$ for $p \in P_H^c$. For $p \in P_H$, on the other hand, we can use \eqref{eq:eta_punctual_bound} and \eqref{eq:etaH_norms} to bound 
\be\label{eq:sq_gq_bounds}
\abs{\s_p}\le\fra{CN^\ka}{\abs p^2},\,\qquad\abs{\s_p-\h_p}\le\fra{CN^{3\ka}}{\abs p^6}\,,\qquad\abs{\g_p}\le C\,,\qquad\abs{\g_p -1}\le\fra{CN^{2\ka}}{\abs p^4}\, .
\ee In position space, we have 
\be\label{eq:sx_gx_bounds}
\norm{\cks}_2=\norm{\sigma}_2\le CN^{\k-\a/2}\arr 0\,,\qquad\norm{\cks}_\io\le CN\,,\qquad\norm{\cks\star\ckg}_\io\le CN \,.
\ee 
In the next lemma, taken from \cite[Lemma 2.3]{BBCS4}, we establish bounds for the remainder operators $d_p, d_p^*$.
\begin{lemma}\label{lem:action_bogoliubov}
For $p \in \L^*_+$, let $d_p, d_p^*$ be defined as in (\ref{eq:dp_def}), $\a>2\ka$ and $N$ large enough. 
Then 
\bes{\label{eq:dp_bounds}
\| (\cN_++1)^{n/2}d_p\xi \| &\le \fra{C}{N}\left[ |\eta_H (p)| \| (\cN_++1)^{(n+3)/2}\xi \| + \| \eta_H \|_2 
\| b_p (\cN_++1)^{(n+2)/2}\xi \| \right]\\
\| (\cN_++1)^{n/2}d_p^*\xi \| &\le \fra{C}{N} \| (\cN_++1)^{(n+3)/2}\xi \| \,.} Moreover, defining $\check{d}_x, \check{d}_x^*$ similarly as in (\ref{eq:dp_def}) but in position space, we have 
\bes{\label{eq:dx_bounds}
		\| (\cN_++1)^{n/2}\check d_x\xi \| &\le\fra{C}{N}\left[\| (\cN_++1)^{(n+3)/2}\xi \| +\| \ckb_x (\cN_++1)^{(n+2)/2}\xi \| \right]\\
		\| (\cN_++1)^{n/2}\cka_y\ckd_x\xi \| &\le\fra{C}{N}\Big[\| \cka_x(\cN_++1)^{(n+1)/2}\xi \| +(1+|\cke(x-y)|) \| (\cN_++1)^{(n+2)/2}\xi \| \\
		&\qquad+\| \cka_y (\cN_++1)^{(n+3)/2}\xi \| +\| \cka_x\cka_y(\cN_++1)^{(n+2)/2}\xi \| \Big]\\
		\| (\cN_++1)^{n/2}\ckd_x\ckd_y\xi \| &\le\fra{C}{N^2}\Big[\| (\cN_++1)^{(n+6)/2}\xi \| +|\cke(x-y)| \| (\cN_++1)^{(n+4)/2}\xi \| \\
		&\qquad+\| \cka_x(\cN_++1)^{(n+5)/2}\xi \|+\| \cka_y(\cN_++1)^{(n+5)/2}\xi\| \\
		&\qquad+\| \cka_x\cka_y(\cN_++1)^{(n+4)/2}\xi \| \Big]\,.
	}
\end{lemma}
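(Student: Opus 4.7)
The plan is to derive explicit series representations for $d_p$ and $d_p^*$ via a Duhamel expansion of $e^{-B}b_p e^B$ and $e^{-B} b_p^* e^B$, and then estimate term by term. Setting $F_p(s) = e^{-sB} b_p e^{sB}$, a direct computation using the approximate CCR \eqref{eq:bpCCR} gives
\[
[b_p, B] = \eta_H(p)\, b_{-p}^* + E_p, \qquad [b_{-p}^*, B] = \eta_H(p)\, b_p + E_p',
\]
where the ``main'' parts would reproduce an exact Bogoliubov rotation, while the errors $E_p, E_p'$ carry an explicit factor $1/N$ inherited from the corrections $-\cN_+/N$ and $-N^{-1} a_q^* a_p$ in \eqref{eq:bpCCR} (weighted by $\eta_H(q)/2$ and summed in $q$). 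Iterating $F_p(1) = b_p + \int_0^1 e^{-sB}[b_p,B]e^{sB}\,ds$ alternates between $b_p$ and $b_{-p}^*$; the branches in which only main pieces are retained sum up to $\g_p b_p + \s_p b_{-p}^*$, while every branch containing at least one $E_p$ or $E_p'$ is absorbed into the remainder $d_p$.

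To estimate these remaining branches I would use Lemma \ref{lem:B_bound_N_easy} to control the conjugations $e^{-sB}(\cN_++1)^j e^{sB} \le C(\cN_++1)^j$, and then apply Cauchy--Schwarz on the momentum sums. The ``diagonal'' correction $\delta_{p,q}\cN_+/N$ contributes, after multiplication by $\eta_H(q)/2$ and summation in $q$, a term bounded by $N^{-1}|\eta_H(p)|\,\|(\cN_++1)^{(n+3)/2}\xi\|$, which is exactly the first term on the right-hand side of \eqref{eq:dp_bounds}. The ``off-diagonal'' correction $-N^{-1}a_q^* a_p$ contributes a term dominated by $N^{-1}\|\eta_H\|_2 \|b_p(\cN_++1)^{(n+2)/2}\xi\|$, which accounts for the second term. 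Summation over the nesting depth of the Duhamel iteration converges thanks to $\|\eta_H\|_\infty \to 0$ in \eqref{eq:etaH_norms}. The bound on $d_p^*$ is analogous and slightly simpler, since no residual $b_p$ is produced.

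For the position-space estimates \eqref{eq:dx_bounds} I would repeat the expansion with the distributions $\check b_x, \check b_x^*$ and the position-space CCR $[\check b_x, \check b_y^*] = \delta(x-y)(1-\cN_+/N) - N^{-1}\check a_y^* \check a_x$. The Dirac delta at each iteration produces a coincidence factor $\check\eta_H(x-y)$, controlled using $\|\cke_H\|_\infty \le CN$ from \eqref{eq:etaH_norms}, while the off-diagonal correction produces, after convolution with $\check\eta_H$, a factor $\check a_y^* \check a_x$. Tracking these two options yields the terms on the right-hand side of the first line of \eqref{eq:dx_bounds}: the trivial $N^{-1}(\cN_++1)^{(n+3)/2}$ contribution and the $\ckb_x$ contribution from the off-diagonal piece. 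For the two-point estimates $\cka_y \ckd_x$ and $\ckd_x \ckd_y$, I would additionally process an external $\cka_y$ (or a second $\ckd_y$) through the iteration, picking up at most one extra coincidence $|\cke(x-y)|$ per collision and one extra factor $(\cN_++1)^{1/2}$ per $\check b$-type operator commuted across. The four, respectively five, terms on the right-hand sides of the last two bounds in \eqref{eq:dx_bounds} correspond precisely to these cases.

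The main obstacle is the combinatorial bookkeeping: one has to enumerate, for each iteration depth, the patterns of ``main'' versus ``error'' insertions, verify that each resulting term is uniformly dominated by one of the quantities on the right-hand side of \eqref{eq:dp_bounds} or \eqref{eq:dx_bounds}, and then sum the geometric series in $\|\eta_H\|_\infty$. All individual estimates reduce to Cauchy--Schwarz, Lemma \ref{lem:B_bound_N_easy}, and the pointwise and $L^p$ bounds \eqref{eq:eta_punctual_bound}--\eqref{eq:etaH_norms}; no new ingredient beyond those already recorded in this section is required.
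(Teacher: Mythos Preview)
Your outline is correct and matches the approach behind this lemma: the paper does not prove it here but simply imports it from \cite[Lemma 2.3]{BBCS4}, whose proof proceeds exactly via the Duhamel/ad-expansion you describe, isolating the $1/N$-corrections from the modified CCR \eqref{eq:bpCCR} at each step and summing the resulting series using $\|\eta_H\|_2,\|\eta_H\|_\infty\to 0$ together with Lemma \ref{lem:B_bound_N_easy}. The only cosmetic difference is that \cite{BBCS4} organizes the expansion through the iterated commutators $\mathrm{ad}_B^{(n)}(b_p)/n!$ rather than nested Duhamel integrals, but the two formulations are equivalent and lead to the same term-by-term estimates.
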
 

We can now define the renormalized excitation Hamiltonian 
\be\label{eq:GN_def}
\cG_N=e^{-B}\cL_Ne^{B}. \ee
\begin{prop}\label{prop:GN}
	Assume that $2\ka<\a<1/2$. Then we have that
	\[
	\cG_N=C_{\cG_N}+Q_{\cG_N}+\cH_N+\cC_N+\cE_{\cG_N}\,,
	\] where
	\bes{\label{eq:CGN_def}
		C_{G_N}=\,&\fra{N-1}{2}N^\ka\wh V(0)+\sum_{p\in\retp}\left[p^2\s_p^2+N^\ka\wh V(p/N^{1-\ka})(\s_p\g_p+\s_p^2)\right]\\
		&+\fra{1}{N}\sum_{p\in P_H}\eta_p\Big[ p^2\eta_p +\frac{N^\kappa}{2N}\big(\wh V(\cdot/N^{1-\kappa})\star \eta\big)_p\Big] \\
		&+\fra{1}{2N}\sum_{p,q\in\retp}N^\ka\wh V((p-q)/N^{1-\ka})\s_p\g_p\s_q\g_q\\
		&-\fra{1}{N}\sum_{u\in\retp}\s_u^2\sum_{p\in \ret}N^\ka\wh V(p/N^{1-\ka})\h_p\,,
	}
Moreover, we have that	
	\bes{\label{eq:QGN_def}
		Q_{\cG_N}=\sum_{p\in\retp}\left[\Phi_pb_p^*b_p+\frac12\G_p\left(b_p^*b_{-p}^*+b_pb_{-p}\right)\right]\,
	} with (recall the convention that $\gamma_p = 1$ and $\sigma_p = 0$ for $p \in P_H^c$)
	\[
	\begin{split}			
	\Phi_p= &\; 2p^2\s_p^2+N^\ka\wh V(p/N^{1-\ka}) (\g_p+\s_p)^2+\fra{2N^\ka\g_p\s_p}{N}(\wh V(\cdot/N^{1-\ka})\star\h)_p\\
			&-\fra{N^\ka(\g_p^2+\s_p^2)}{N}(\wh V(\cdot/N^{1-\ka})\star\h)_0
			\end{split} \]
as well as
	\[
	\begin{split} 
	\G_p = &\; 2p^2\s_p\g_p+ N^\ka(\g_p+\s_p)^2 \wh V(p/N^{1-\ka})+(\gamma_p^2+\sigma_p^2)\fra{N^{\kappa}}{N}\convo{\h}_p\\
	&-2\g_p\s_p\fra{N^\ka}N\convo{\h}_0
\end{split} \]
and furthermore that
	\bes{\label{eq:CubicGN_def}
		\cC_N=\fra{N^\ka}{\sqrt N}\sum_{\substack{p,q\in\retp\\p\neq-q}}\wh V(p/N^{1-\ka})\left[b_{p+q}^*b_{-p}^*\left(\g_qb_q+\s_qb_{-q}^*\right)+\hc\right]\,.
	} Finally, the self-adjoint error term $\cE_{\cG_N}$ satisfies the operator inequality	
	\be\label{eq:GN_error_bounds}
	\pm\cE_{\cG_N}\le CN^{-1/2+3\ka/2+\a}(\cH_N+\cN_+^2+1)(\cN_++1)\,.
	\ee
\end{prop}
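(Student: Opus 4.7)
The plan is to compute $e^{-B}\cL_N^{(j)}e^B$ for each $j=0,2,3,4$ separately and then combine. The basic tool is the Duhamel formula $e^{-B}X e^B = X + \int_0^1 e^{-sB}[X,B]e^{sB}\,ds$, iterated as needed; for quadratic operators in $b,b^*$ the iteration essentially closes because $[B,b_p]$ and $[B,b_p^*]$ produce terms of the form $\sigma_p b_{-p}^* + d_p$ and $\sigma_p b_{-p} + d_p^*$ (cf.\ (\ref{eq:dp_def})), with the remainders $d_p, d_p^*$ controlled by Lemma \ref{lem:action_bogoliubov}. Throughout, I would treat contributions in which both $b$'s (or both $b^*$'s) are replaced by their leading Bogoliubov images as ``main terms'' and push everything containing at least one $d_p$ (or an explicit $1/N$ from the approximate CCR (\ref{eq:bpCCR})) into the error $\cE_{\cG_N}$, using Lemma \ref{lem:B_bound_N_easy} to pass factors of $(\cN_++1)^k$ through $e^{\pm B}$.

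For $\cL_N^{(0)}$, conjugation only changes $\cN_+$ and $\cN_+(N-\cN_+)$ by $O(\|\sigma\|_2^2)$; the first term gives the constant $\tfrac{N-1}{2}N^\kappa \wh V(0)$ minus a contribution $\tfrac{N-1}{2N}N^\kappa \wh V(0)\cN_+$ which I keep to recombine later. For $\cL_N^{(2)}$, I substitute the images of $b_p, b_p^*$ under $e^{-B}\cdot e^B$: the pure kinetic part $\sum p^2 a_p^* a_p$ gives $\cK$ plus the diagonal constant $\sum p^2\sigma_p^2$ and contributes $2p^2\sigma_p\gamma_p$ and $2p^2\sigma_p^2$ to $\Gamma_p,\Phi_p$; the $b_p^* b_p$ and $b_p^* b_{-p}^* + \mathrm{h.c.}$ pieces similarly generate the main terms $N^\kappa \wh V(p/N^{1-\kappa})(\gamma_p+\sigma_p)^2$ entering $\Phi_p$ and $\Gamma_p$, plus a numerical constant $N^\kappa \wh V(p/N^{1-\kappa})(\sigma_p\gamma_p+\sigma_p^2)$. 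The cubic term $\cL_N^{(3)}$ is handled by replacing only the $b_{p+q}^*$ (the $a$'s commute with $B$ up to harmless terms from $[B,\cN_+]$): the leading substitution produces $\cC_N$ and, after commuting one $b^*$ with the creation/annihilation operator in (\ref{eq:CubicGN_def}) using (\ref{eq:bpCCR}), a renormalization of the quadratic form proportional to $(\wh V(\cdot/N^{1-\kappa})\star\eta)_p \gamma_p\sigma_p$, plus the vacuum constant $\tfrac{1}{2N}\sum_{p,q}N^\kappa\wh V((p-q)/N^{1-\kappa})\sigma_p\gamma_p\sigma_q\gamma_q$. The quartic term $\cL_N^{(4)}$ produces $\cV_N$ itself plus a linear-in-$\eta$ renormalization contributing $N^\kappa\wh V(p/N^{1-\kappa})$-like corrections; what survives after carefully Wick-reordering with the approximate CCR yields exactly the remaining $-\tfrac{1}{N}\sum_{u}\sigma_u^2\cdot N^\kappa(\wh V\star\eta)_0$ constant and the terms $-\tfrac{N^\kappa}{N}(\gamma_p^2+\sigma_p^2)(\wh V\star\eta)_0$, $-2\gamma_p\sigma_p \tfrac{N^\kappa}{N}(\wh V\star \eta)_0$ in $\Phi_p,\Gamma_p$.

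The bookkeeping is cleaner if at this stage I invoke the scattering equation (\ref{eq:scattering_eta}), which allows me to rewrite $2p^2\eta_p + N^\kappa\wh V(p/N^{1-\kappa}) + \tfrac{N^\kappa}{N}(\wh V(\cdot/N^{1-\kappa})\star\eta)_p$ as a lower-order quantity involving $\lambda_\ell \wh\chi_\ell$. This is exactly how the would-be large quantity in $\Phi_p,\Gamma_p$ collapses to the displayed form, and the remaining scattering-equation contribution is absorbed in the third line of (\ref{eq:CGN_def}). For the constant, the piece $\tfrac{1}{N}\sum_{p\in P_H}\eta_p[p^2\eta_p+\tfrac{N^\kappa}{2N}(\wh V\star\eta)_p]$ is produced from combining the $\cL_N^{(2)}$ and $\cL_N^{(4)}$ vacuum contributions together with the quadratic-in-$\eta$ terms from the second-order Duhamel expansion of the pair creation/annihilation terms.

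The main obstacle is controlling the error $\cE_{\cG_N}$ by $N^{-1/2+3\kappa/2+\alpha}(\cH_N + \cN_+^2 + 1)(\cN_++1)$. Every term dropped into $\cE_{\cG_N}$ comes from one of three sources: (i) a commutator of $b,b^*$ with $\cN_+$ giving an extra factor $1/N$; (ii) a $d_p$ or $d_p^*$ in the substitution formulas (\ref{eq:dp_def}); or (iii) higher-order Duhamel terms that are of order $\|\eta_H\|_2^2$ or smaller. The delicate part is the cubic-times-$B$ interaction: when I commute $\cL_N^{(3)}$ with $B$, I cannot replace $a^*a^*$ or $aa$ pieces by numbers, so I must retain full operator factors, pass them to position space to exploit (\ref{eq:cV-x}), and estimate using the pointwise bounds (\ref{eq:etaH_norms}), (\ref{eq:sx_gx_bounds}) together with Cauchy--Schwarz in $x,y$ to produce at worst $N^\kappa \|\check\eta_H\|_\infty/N \sim N^\kappa$ times factors of $\cN_+$ and $\sqrt{\cV_N}$. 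Tracking the exact power yields the claimed $N^{-1/2+3\kappa/2+\alpha}$ prefactor; the $N^\alpha$ enters through $\|\eta_H\|_\infty$ and the $N^{3\kappa/2}$ through the combined weight of $\eta_H$ in norm and in the scattering-equation rearrangement. Once the cubic/$B$ interaction is controlled, the analogous (but easier) bounds for the quartic/$B$ interaction and for the higher-order Duhamel remainders follow from Lemma \ref{lem:action_bogoliubov} and Lemma \ref{lem:B_bound_N_easy} by the same bookkeeping.
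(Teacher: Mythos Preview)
Your overall plan---conjugate each $\cL_N^{(j)}$ separately and use the substitution $e^{-B}b_pe^B=\gamma_pb_p+\sigma_pb_{-p}^*+d_p$---is the same as the paper's. But your decision to ``push everything containing at least one $d_p$ into the error $\cE_{\cG_N}$'' is where the argument breaks down. After you combine the $d$-terms from $\cG_N^{(2)}$ and $\cG_N^{(4)}$ and use the scattering equation \eqref{eq:scattering_eta} to cancel the leading pieces, you are left with a term of the form
\[
\frac{1}{2}\sum_{p\in P_H}N^\ka(\wh V(\cdot/N^{1-\ka})\star\wh f_N)_p\,\eta_p\,d_pb_p^*+\hc
\]
Here the coefficient behaves like $N^\ka\eta_p\sim N^{2\ka}/p^2$, which is \emph{not} summable over $P_H$ against the bound $\|d_p\xi\|\le CN^{-1}(|\eta_H(p)|+\|\eta_H\|_2)\|(\cN_++1)^{3/2}\xi\|$ from Lemma~\ref{lem:action_bogoliubov}. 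A direct Cauchy--Schwarz estimate gives a contribution of order one, not $N^{-1/2+3\ka/2+\a}$. The paper handles this by expanding $d_p$ one level further (writing $d_p$ itself as an integral of commutators with $B$), extracting from it an explicit constant $\cD_N^{(0)}$ and an explicit quadratic piece $\cD_N^{(2)}$; only the remainder after this second expansion fits into the error.

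This matters for your bookkeeping too. The last line of $C_{\cG_N}$, namely $-N^{-1}\sum_u\sigma_u^2\sum_pN^\ka\wh V(p/N^{1-\ka})\eta_p$, and the $(\wh V(\cdot/N^{1-\ka})\star\eta)_0$ contributions to $\Phi_p$ and $\Gamma_p$, come precisely from $\cD_N^{(0)}$ and $\cD_N^{(2)}$, not from where you place them. In particular, the constant $\tfrac{1}{2N}\sum_{p,q}N^\ka\wh V((p-q)/N^{1-\ka})\sigma_p\gamma_p\sigma_q\gamma_q$ arises from normal-ordering the \emph{quartic} term $\cG_N^{(4)}$ (contracting two $\sigma$'s), and $\cG_N^{(3)}$ produces only $\cC_N$ plus error---no quadratic renormalization and no vacuum constant. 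With your attribution, those pieces of $C_{\cG_N}$ and $\cQ_{\cG_N}$ would be missing and the decomposition would not close.
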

The proof of Proposition \ref{prop:GN} is similar to the proof of \cite[Prop. 3.2, part b)]{BBCS4}. For completeness, we sketch the proof in Appendix \ref{sec:quadratic}. The proposition describes the main contributions to $\cG_N$, up to an error that can be estimated as in \eqref{eq:GN_error_bounds}. The fact that this error is small on low-energy states is a consequence of the following theorem, which is the main result of \cite{ABS}. 
\begin{theorem}\label{thm:mainABS}
Let $\cG_N$ be as in \eqref{eq:GN_def} and assume that $6\ka<\a < 1/2-3\kappa/2$ and $\kappa\in [0;1/44)$. Let $E_N$ be the ground state energy of $H_N$, as defined in (\ref{eq:hamiltoniano}) and $\psi_N  \in L^2_s (\Lambda^N)$ so that $\psi_N = \chi(H_N-E_N\leq  \zeta)\psi_N \in L^2_s(\Lambda^N)$ for a $\zeta>0$. Moreover, let $\xi_N = e^{-B} U_N \psi_N \in \cF_+^{\leq N}$ be the excitation vector associated to $\psi_N$. Then, for every $j\in\mathbb{N}$ and every $\eps>0$, there exists $C>0$ such that
			\[ \langle\xi_N, (\cH_N+1)(\cN_++1)^j\xi_N\rangle\leq C \Big[ N^{20\kappa+\eps}\zeta^2 + N^{44\kappa+2\eps}\Big]^{j+1}. \]	
\end{theorem}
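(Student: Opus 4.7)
The plan is to prove the bound by induction on $j$, using a coercive lower bound for $\cG_N$ that converts energy bounds on $\xi_N$ into bounds on $\cH_N$ and the number of excitations $\cN_+$. For the base case $j=0$, the crucial operator inequality is of the form
\[ (\cG_N - E_N)^2 + C N^{24\kappa+\eps} \;\ge\; c\, N^{-20\kappa-\eps}(\cH_N + 1), \]
to be combined with $\langle \xi_N, (\cG_N - E_N)^2 \xi_N\rangle = \langle\psi_N, (H_N - E_N)^2\psi_N\rangle \le \zeta^2$ (this identity follows from the unitarity of $e^{-B}U_N$ and the spectral support of $\psi_N$). Taking expectations then yields the $j=0$ estimate immediately.

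To establish this coercive inequality I would use Proposition \ref{prop:GN} to split $\cG_N = C_{\cG_N}+Q_{\cG_N}+\cH_N+\cC_N+\cE_{\cG_N}$ and estimate each piece. The constant $C_{\cG_N}$ reproduces $E_N$ to leading order (as visible in \eqref{eq:mainthm_GS}) modulo controlled errors, as checked through the definitions of $\s_p,\g_p$ and the scattering equation \eqref{eq:scattering_eta}. The quadratic part $Q_{\cG_N}$ is bounded below by a formal diagonalization of each $(\Phi_p,\Gamma_p)$-block, yielding $Q_{\cG_N} \ge -C(\cN_++1) + \text{nonneg.}$ The cubic term $\cC_N$ is absorbed via Cauchy--Schwarz as $\pm\cC_N \le \delta\cH_N + \delta^{-1} N^{c_1\kappa}(\cN_++1)^3$ using Lemma \ref{lem:action_bogoliubov} and the decay of $\wh V(p/N^{1-\kappa})$ at high $|p|$. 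The error $\cE_{\cG_N}$ is handled directly by \eqref{eq:GN_error_bounds}. All remaining $\cN_+$-dependent remainders are then discarded using a prior BEC-type estimate $\langle\xi_N,\cN_+^k\xi_N\rangle \le C N^{c_2\kappa k}$, available for sufficiently small $\kappa$ from earlier condensation results.

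For the inductive step at level $j\ge 1$, given the estimate at level $j-1$, I would apply the coercive bound to the state $(\cN_++1)^{j/2}\xi_N$ and rewrite the resulting expectation in terms of $\langle\xi_N, (\cG_N-E_N)(\cN_++1)^j\xi_N\rangle$ plus double-commutator corrections. The commutator $[\cG_N, (\cN_++1)^{j/2}]$ receives contributions only from the off-diagonal part of $Q_{\cG_N}$, from $\cC_N$, and from $\cE_{\cG_N}$, since $\cH_N$ and the diagonal part of $Q_{\cG_N}$ commute with $\cN_+$. Each such contribution can be estimated, again by Lemma \ref{lem:action_bogoliubov} combined with Cauchy--Schwarz, by $(\cH_N+1)(\cN_++1)^{j-1}$ times a factor $N^{c_3\kappa}$, which fits the inductive hypothesis and produces the multiplicative structure $[N^{20\kappa+\eps}\zeta^2+N^{44\kappa+2\eps}]^{j+1}$ of the claimed bound.

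The main obstacle is the control of $\cC_N$ and of its commutator with $(\cN_++1)^{j/2}$: the cubic form couples low and high momenta through $\wh V(p/N^{1-\kappa})$, and it is delicate to trade a small fraction of $\cH_N$ against only a bounded number of $\cN_+$-factors without losing additional powers of $N^\kappa$ at each inductive step. The specific exponents $20\kappa$ and $44\kappa$ in the bound track exactly the cost of these commutator expansions and of the prior BEC input, which also explains the restriction $\kappa \in [0;1/44)$ (and $6\kappa<\alpha<1/2-3\kappa/2$) in the statement; pushing to larger $\kappa$ would require significantly sharper a priori control on $\cN_+$.
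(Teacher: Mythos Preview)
The paper does not prove Theorem~\ref{thm:mainABS}; it is quoted as the main result of \cite{ABS} and used as a black-box input for the rest of the analysis. So there is no ``paper's own proof'' to compare against.

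That said, your proposal has a genuine structural gap. You plan to obtain a coercive lower bound on $\cG_N$ from Proposition~\ref{prop:GN}, but the error term there already satisfies
\[
\pm\cE_{\cG_N}\le CN^{-1/2+3\ka/2+\a}(\cH_N+\cN_+^2+1)(\cN_++1),
\]
i.e.\ it involves the very quantities $(\cH_N+1)(\cN_++1)$ and $(\cN_++1)^3$ that you are trying to control. You cannot absorb this into a coercive bound of the form $(\cG_N-E_N)^2+C\ge cN^{-\theta}(\cH_N+1)$ without \emph{first} knowing that $\cN_+^2$ and $\cN_+^3$ are suitably bounded on $\xi_N$. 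Your remedy is to invoke ``a prior BEC-type estimate $\langle\xi_N,\cN_+^k\xi_N\rangle\le CN^{c_2\kappa k}$, available \ldots\ from earlier condensation results,'' but in this regime that estimate \emph{is} (a special case of) Theorem~\ref{thm:mainABS}: the earlier results you might appeal to either require $\kappa=0$, or (like \cite{F}) control only the first moment of $\cN_+$, which the paper explicitly notes is insufficient here. So the argument is circular as written.

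The actual proof in \cite{ABS} resolves this chicken-and-egg problem by a bootstrap/localization scheme: one first proves a weak a priori bound on $\cN_+$ (without assuming any), uses it to tame the $(\cN_++1)$-weighted errors in the analogue of Proposition~\ref{prop:GN}, extracts a first energy bound, and then iterates to reach higher moments. Your induction-on-$j$ outline is the right shape for the \emph{last} step, but the missing ingredient is the initial unconditional control of $\cN_+$ that makes the coercive inequality valid in the first place.
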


While the vacuum expectation of the renormalized excitation Hamiltonian (\ref{eq:GN_def}) captures the correct ground state energy, to leading order (the main contribution to (\ref{eq:CGN_def}) is exactly $4\pi \frak{a}_0 N^{1+\kappa}$), there are still non-negligible next-order corrections hidden in cubic and quartic parts of $\cG_N$. To extract them, we conjugate $\cG_N$ with another unitary operator of the form $e^{A}$, where $A$ is now an antisymmetric phase, cubic in generalized creation and annihilation operators. More precisely, we define the low-momentum set $P_L=\{p\in\retp\st\abs{p}< N^{\b}\}$, depending on the parameter $\b < \a$, and 
\be	\label{eq:A_def}
A=\fra{1}{\sqrt{N}}\sum_{\substack{r\in P_H,\\v\in P_L}}\eta_r\left[b^*_{r+v}b^*_{-r}b_v-\hc\right]\,.
\ee  
The next lemma will be used to control the action of the unitary operator $e^A$ on powers of the number of particles operator $\cN_+$ and on the product $\cH_N \cN_+$. It can be proven similarly as \cite[Prop. 4.2, Prop. 4.4]{BBCS4} (the second estimate requires bounds on the commutator 
$[\cH_N, A]$ that are shown below, in Lemma \ref{lem:comm_HN_A}). 
\begin{lemma} \label{lm:growNA} 
Let $A$ be defined as in (\ref{eq:A_def}), and let $\ka<\b<\a$ satisfy 	
\[ 2\ka<\b<\a<1/2-2\ka,	\qquad 4\ka <\alpha,	\qquad 3\b+4\ka-1<\alpha .	
\] For any $j \in \bN$, there exists $C > 0$ such that 
\[ e^{-A} (\cN_+ + 1)^j  e^A \leq C (\cN_+ + 1)^j \] 
and 
\[ e^{-A} (\cN_+ + 1) (\cH_N + 1) e^A \leq C(\cH_N+\cN_+^2+N^\ka\cN_+)(\cN_++1). \]
\end{lemma}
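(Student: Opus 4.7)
The plan is to prove both inequalities via Gr\"onwall's inequality applied to the functions
\[ \phi_j(s) := \langle \xi, e^{-sA}(\cN_++1)^j e^{sA}\xi\rangle \quad\text{and}\quad \psi(s) := \langle \xi, e^{-sA}(\cN_++1)(\cH_N+1)e^{sA}\xi\rangle. \]
For the first bound, one differentiates to find $\phi_j'(s) = \langle\xi, e^{-sA}[(\cN_++1)^j, A]e^{sA}\xi\rangle$. Since the summand $b^*_{r+v} b^*_{-r} b_v$ in $A$ raises $\cN_+$ by one (its adjoint lowers by one), we have $[(\cN_++1)^j, b^*_{r+v} b^*_{-r} b_v] = b^*_{r+v} b^*_{-r} b_v \, Q_{j-1}(\cN_+)$ for a polynomial $Q_{j-1}$ of degree $j-1$. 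Writing $\xi_s = e^{sA}\xi$ and distributing the weight $|\eta_r|^{1/2}$, a Cauchy--Schwarz argument combined with $\sum_v b_v^* b_v \leq \cN_+$ (a consequence of \eqref{eq:bpCCR}) leads to
\[ |\phi_j'(s)| \leq \frac{C\|\eta_H\|_2}{\sqrt{N}}\, \|(\cN_++1)^{j/2}\xi_s\|^2 . \]
By \eqref{eq:etaH_norms}, $\|\eta_H\|_2 \leq CN^{\kappa-\alpha/2}$, which under the hypothesis $2\kappa<\alpha$ is bounded (in fact vanishing as $N\to\infty$). Hence $|\phi_j'(s)| \leq C\phi_j(s)$, and Gr\"onwall at $s=1$ gives the first claim.

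For the second bound, one computes
\[ \psi'(s) = \langle\xi, e^{-sA}\big([\cN_+, A](\cH_N+1) + (\cN_++1)[\cH_N, A]\big)e^{sA}\xi\rangle. \]
The first summand is handled by the same Cauchy--Schwarz strategy as in the first part, with the extra factor $\cH_N+1$ absorbed into one of the vectors; this produces a contribution of order $\psi(s)$. The second summand is where Lemma \ref{lem:comm_HN_A} enters: its operator bound for $[\cH_N, A]$ yields
\[ \big|\langle\xi_s, (\cN_++1)[\cH_N, A]\xi_s\rangle\big| \leq C\,\langle\xi_s, (\cH_N+\cN_+^2+N^\kappa\cN_+)(\cN_++1)\xi_s\rangle. \]
Combining both contributions and integrating the resulting differential inequality from $0$ to $1$ furnishes the second bound.

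The main obstacle is the commutator $[\cH_N, A]$. The kinetic part $[\cK, A]$ generates, for each summand, combinations $(|r+v|^2 + |r|^2 - |v|^2)\eta_r$ which, by the scattering identity \eqref{eq:scattering_eta}, partially cancel the leading piece of $N^\kappa\widehat V(r/N^{1-\kappa})$ coming from the potential commutator $[\cV_N, A]$. The residual contributions, as well as the quintic terms produced by $[\cV_N, A]$, must be bounded using the pointwise estimate $|\eta_r|\leq CN^\kappa/r^2$ from \eqref{eq:eta_punctual_bound} and splitting momentum sums into the high/low regions $P_H, P_L$ and their complements. The parameter conditions $4\kappa<\alpha$ and $3\beta+4\kappa-1<\alpha$ in the hypothesis are exactly what is required so that every error term can be absorbed into one of the three operators $\cH_N$, $\cN_+^2$, or $N^\kappa \cN_+$ on the right-hand side.
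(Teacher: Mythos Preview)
Your strategy is the same as the paper's (which defers to \cite[Prop.~4.2, 4.4]{BBCS4}): Gr\"onwall on $\phi_j$ and $\psi$, with Lemma~\ref{lem:comm_HN_A} supplying the commutator bound for the second estimate. The first inequality is argued correctly.

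For the second inequality there is a subtlety in your Leibniz split that deserves a word. Neither summand $[\cN_+,A](\cH_N+1)$ nor $(\cN_++1)[\cH_N,A]$ is self-adjoint, and the bounds you claim for them do not follow quite as directly as written. For the second summand, the operator inequality $\pm[\cH_N,A]\le C(\cH_N+(\cN_++1)^2+N^\kappa(\cN_++1))$ (which is what Lemma~\ref{lem:comm_HN_A} gives with $\xi_1=\xi_2$) controls the \emph{symmetrized} form $(\cN_++1)^{1/2}[\cH_N,A](\cN_++1)^{1/2}$, not the one-sided $(\cN_++1)[\cH_N,A]$. For the first summand, literally ``absorbing $\cH_N+1$ into one vector'' via Cauchy--Schwarz produces $\|(\cH_N+1)\xi_s\|$ rather than $\|(\cH_N+1)^{1/2}\xi_s\|$, i.e.\ one power of $\cH_N$ too many. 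Both issues are fixed by the same observation: every term appearing in $[\cH_N,A]$ and in $[\cN_+,A]$ shifts $\cN_+$ by exactly $\pm1$, so the weight $(\cN_++1)$ can be commuted through and split as $(\cN_++1)^{1/2}\cdots(\cN_++1)^{1/2}$ up to bounded shifts; one then applies the bilinear bound \eqref{eq:Th_j_worse_bound} with $\xi_1,\xi_2\approx(\cN_++1)^{1/2}\xi_s$, and for the first summand the analogous symmetrization with $(\cH_N+1)^{1/2}$ works because $[\cN_+,\cH_N]=0$. With this refinement the Gr\"onwall closes (the $(\cN_++1)^3$ and $N^\kappa(\cN_++1)^2$ contributions are handled by the already-proven first part), and your final paragraph on the role of the parameter conditions is accurate.
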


With $A$, we define a second renormalized excitation Hamiltonian 
\[
\cJ_N=\conA{\cG_N}\,.
\]
Some important properties of $\cJ_N$ are collected in the next proposition. 
\begin{prop}\label{prop:JN}
	Let $\ka<\b<\a$ satisfy 	
	\bes{\label{eq:conditions_parameters_JN}
	2\ka<\b<\a<1/2-2\ka,	\qquad 4\ka <\alpha,	\qquad 3\b+4\ka-1<\alpha .	
	}
	Then we have that
	\[\label{eq:JN_terms}
	\cJ_N=C_{\cJ_N}+\cQ_{\cJ_N}+\cV_N+\cE_{\cJ_N}\,,
	\] where 	
	\be\label{eq:CJN_QJN}
	C_{\cJ_N}=C_{\cG_N}\,,\qquad\cQ_{\cJ_N}=\sum_{p\in\retp}\Big[F_pb_p^*b_p+\fra{1}{2}G_p(b_p^*b_{-p}^*+b_pb_{-p})\Big]\,,
	\ee
	for (recall that $\gamma_p = 1$ and $\s_p = 0$ for $p \in P_H^c$) 
	\be\label{eq:F_p}
	F_p = (\gamma_p^2+\sigma_p^2)p^2+(\gamma_p+\sigma_p)^2N^\ka (\wh V(\cdot/N^{1-\ka})\star\hat f_N)_p 
		\ee
and
	\be\label{eq:G_p}
	G_p = 2p^2\s_p\g_p+(\g_p+\s_p)^2 N^\ka (\wh V(\cdot/N^{1-\ka})\star\wh f_N)_p 
	\ee The self-adjoint error term $\cE_{\cJ_N}$ satisfies the bound 
	\bes{\label{eq:EJN_bounds}
		\pm\cE_{\cJ_N}\le\,& CN^{-\fra{1}{2}+\fra{5}{2}\ka+\a}(\cH_N+\cN_+^2+1)(\cN_++1)\\
		&+CN^{\frac{\ka}2-\frac{\b}2}(\cK+1)(\cN_++1).
	}
Moreover, for any $\delta> 0$, $\cE_{\cJ_N}$ also satisfies the estimate
	\bes{\label{eq:EJN_bounds2} 
		\pm\cE_{\cJ_N}\le\,& C N^{-\fra{1}{2}+\fra{5}{2}\ka+\a}(\cH_N+\cN_+^2+1)(\cN_++1)\\
		&+ C (N^{\fra12\ka-\fra12\b}+N^{2\ka-\fra\a2})(\cN+1)+ CN^{\ka-2\b}(\cK+1) + \delta \cV_N + C \delta^{-1} N^{\kappa} (\cN_++1).
	}
\end{prop}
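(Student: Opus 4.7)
The proof proceeds by a Duhamel expansion of $\cJ_N = e^{-A}\cG_N e^A$, written as
\[ e^{-A} X e^{A} = X + \int_0^1 e^{-sA}\,[X,A]\, e^{sA}\,ds, \]
applied to each summand of the decomposition $\cG_N = C_{\cG_N} + Q_{\cG_N} + \cH_N + \cC_N + \cE_{\cG_N}$ given by Proposition \ref{prop:GN}. Since $C_{\cG_N}$ is a scalar it commutes with $A$, so $C_{\cJ_N} = C_{\cG_N}$. The remaining pieces are handled as follows: $[\cE_{\cG_N}, A]$ is absorbed into $\cE_{\cJ_N}$ directly via Lemmas \ref{lem:action_bogoliubov} and \ref{lm:growNA}; $[Q_{\cG_N}, A]$ produces cubic and lower-order contributions that enter either the cubic cancellation below or the error; and the essential work lies in analyzing $[\cH_N,A] = [\cK, A] + [\cV_N, A]$ together with the second-order iterate on $\cC_N$.

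The central point is the cancellation between $\cC_N$ and the cubic contribution of $[\cK+\cV_N,A]$. A direct computation gives, schematically,
\[ [\cK+\cV_N, A] \;\approx\; \frac{1}{\sqrt N}\sum_{r\in P_H,\, v\in P_L} \Big[ 2r^2\eta_r + \frac{N^\kappa}{N}\big(\wh V(\cdot/N^{1-\kappa})\star\eta\big)_r \Big] b^*_{r+v} b^*_{-r} b_v + \hc + \cdots \]
plus contractions and error terms. The scattering equation \eqref{eq:scattering_eta} rewrites the bracket as $-N^\kappa \wh V(r/N^{1-\kappa}) + O(N^{3-2\kappa}\lambda_\ell\,\wh\chi_\ell(r))$; using the smallness of $\lambda_\ell$ from \eqref{eq:scattering_lambda}, the resulting cubic expression cancels the leading $P_H\times P_L$ part of $\cC_N$, up to boundary contributions from the cutoffs and a $v\cdot r$-correction. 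These residual cubic pieces, controlled through the Cauchy--Schwarz inequality and the bounds $\|\eta_H\|_\infty \le CN^{\kappa-\alpha}$, $|v|\le N^\beta$, give precisely the $CN^{\kappa/2-\beta/2}(\cK+1)(\cN_++1)$ contribution to \eqref{eq:EJN_bounds}.

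The second task is to identify the quadratic renormalization $\wh V(p/N^{1-\kappa}) \mapsto N^\kappa (\wh V(\cdot/N^{1-\kappa})\star\wh f_N)_p$ appearing in $F_p, G_p$. This arises from two sources: (i) contractions of creation with annihilation operators in $[\cV_N, A]$ and in the second-order iterate $[[\cC_N,A],A]$, contributing additional terms proportional to $(\gamma_p+\sigma_p)^2\tfrac{N^\kappa}{N}(\wh V(\cdot/N^{1-\kappa})\star\eta)_p$ and $2\g_p\s_p\tfrac{N^\kappa}{N}(\wh V(\cdot/N^{1-\kappa})\star\eta)_p$; (ii) the identity $\wh f_N = \delta_{\cdot,0} + \eta/N$, which combines these new contributions with the $\Phi_p$, $\Gamma_p$ of \eqref{eq:QGN_def} into exactly \eqref{eq:F_p}--\eqref{eq:G_p}. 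The matching uses the pointwise bounds \eqref{eq:eta_punctual_bound}--\eqref{eq:etaH_norms} and \eqref{eq:sq_gq_bounds}--\eqref{eq:sx_gx_bounds}, while residual quartic contributions from iterated commutators either partially reconstitute $\cV_N$ or are absorbed into $\cE_{\cJ_N}$.

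Finally, for the error bounds I would estimate each integrand $e^{-sA}[X,A]e^{sA}$ using Lemma \ref{lem:action_bogoliubov} for the remainder operators $d_p, d_p^*$ produced when the earlier $e^B$ is expanded around a pure Bogoliubov transformation, and Lemma \ref{lm:growNA} to conjugate back $e^{\pm sA}$ factors in front of $\cN_+^j$ and $\cH_N$. The alternative bound \eqref{eq:EJN_bounds2} replaces a $\cK$-factor by the weaker $\cV_N$ via Cauchy--Schwarz, trading one creation operator in a quartic error for $\cV_N^{1/2}$ and producing the additional $C\delta^{-1}N^\kappa(\cN_++1)$ term. The main obstacle will be the combinatorial bookkeeping of the second-order iterate $[[\cC_N,A],A]$: it simultaneously generates a quartic piece that partially reconstitutes $\cV_N$, a quadratic piece that must exactly complete the renormalization $\wh V \to \wh V\star\wh f_N$, and a cascade of small higher-order contributions. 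Matching everything so that the final $F_p, G_p$ come out exactly right, while keeping all exponents in \eqref{eq:conditions_parameters_JN} compatible with the sizes of $\|\eta_H\|_2$, $\|\eta_H\|_\infty$ and $\|\cke_H\|_\infty$, is what forces the precise thresholds appearing in \eqref{eq:EJN_bounds}--\eqref{eq:EJN_bounds2}.
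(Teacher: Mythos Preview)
Your overall strategy is right: Duhamel expansion, the cubic cancellation between $\cC_N$ and the leading cubic part of $[\cH_N,A]$ via the scattering equation, and propagation of errors through Lemma~\ref{lm:growNA}. The paper follows exactly this scheme. However, you have misidentified the mechanism producing the quadratic renormalization $\wh V\to \wh V\star\wh f_N$, and if you follow your plan as written you will not find the needed terms.

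Concretely: contractions in $[\cV_N,A]$ produce only quintic and cubic operators (at most two pairings between the $a,a$ in $\cV_N$ and the $b^*,b^*$ in $A$), never quadratic ones. Likewise, the second iterate $[[\cC_N,A],A]$ contributes only to the error: at first order $[\cC_N,A]$ already collapses to the \emph{quadratic} operator
\[
\Xi_0 \;=\; \frac{2}{N}\sum_{r\in P_H,\,v\in P_L} N^\ka\big(\wh V(r/N^{1-\ka})+\wh V((r+v)/N^{1-\ka})\big)\eta_r\, b_v^*b_v
\]
plus small remainders, and $[\Xi_0,A]$ is cubic and negligible. The genuine quadratic renormalization therefore comes from two places you did not list: (a) the \emph{first}-order commutator $[\cC_N,A]\approx \Xi_0$, and (b) the \emph{second}-order expansion of $\cH_N$, namely $[[\cH_N,A],A]\approx[\Theta_0,A]=\Pi_0=-\Xi_0$; after the double time-integral this contributes $\tfrac12\Pi_0$, so the net quadratic gain is $\tfrac12\Xi_0$. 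It is precisely this $\tfrac12\Xi_0$ that, combined with $\cK+\cQ_{\cG_N}$, upgrades $\Phi_p,\Gamma_p$ to $F_p,G_p$. Relatedly, $\cV_N$ is not ``reconstituted'' from iterated commutators: it simply survives as part of $\cH_N$ in $e^{-A}\cH_N e^A=\cH_N+\Theta_0+\tfrac12\Pi_0+\text{error}$. Also, $[Q_{\cG_N},A]$ is entirely error in the paper and plays no role in the cubic cancellation. Once you relocate the quadratic contributions to $[\cC_N,A]$ and $[[\cH_N,A],A]$, the rest of your outline (including the position-space argument for \eqref{eq:EJN_bounds2}) matches the paper.
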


{\it Remark:} the second bound (\ref{eq:EJN_bounds2}) on the error term $\cE_{\cJ_N}$ is useful to establish upper bounds on the eigenvalues of $\cJ_N$ (because, on trial states, we will be able to show that the contribution of $\cV_N$ is small, as $N \to \infty$).

The proof of Prop. \ref{prop:JN} is similar to the proof of \cite[Prop. 3.3]{BBCS4}. For completeness, we describe it in Appendix \ref{sec:cubic} below.



\section{Diagonalization of the Quadratic Hamiltonian}
\label{sec:diagonalization}

In this section we apply a final (generalized) Bogoliubov transformation to the renormalized Hamiltonian $\cJ_{N}$ in order to diagonalize its quadratic part. First, we prove some bounds on the coefficients $F_p,\,G_p$ defined in (\ref{eq:F_p}), (\ref{eq:G_p}). 
\begin{lemma}\label{lem:GpFpBounds}
Assume (\ref{eq:conditions_parameters_JN}). There exist constants $C>c>0$ independent of $N$, such that for $N$ large enough, we have:
	\begin{enumerate}[i)]
		\item $c(p^2+N^\ka)\le F_p\le C(p^2+N^\ka)$ for all $p\in\retp$, 
		\item For all $p \in \Lambda^*_+$, 
		\begin{equation}\label{eq:itemii} c \leq 1 + \frac{G_p}{F_p} \leq 2 \, , \quad c \cdot \min \Big[ 1 , p^2 / N^\kappa \Big] \leq 1 - \frac{G_p}{F_p} \leq C. \end{equation} 
		In particular, this implies that $|G_p|/ F_p \leq 1 - c N^{-\kappa}$, for all $p \in \Lambda^*_+$. 
		\item  $\abs{G_p}\le CN^\ka$ for $p\in P_H^c$,
		\item $\abs{G_p}\le CN^{2\ka}\abs{p}^{-2}$ for $p\in P_H$.
	\end{enumerate}
\end{lemma}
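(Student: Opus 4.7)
The plan is to reduce everything to the pair of algebraic identities
\[
F_p - G_p = (\gamma_p - \sigma_p)^2\, p^2, \qquad F_p + G_p = (\gamma_p + \sigma_p)^2 \big[ p^2 + 2 W_p \big],
\]
which follow from $\gamma_p^2 - \sigma_p^2 = 1$, where I set $W_p := N^\kappa(\widehat V(\cdot/N^{1-\kappa}) \star \widehat f_N)_p$. Each of (i)--(iv) then reduces to controlling $\gamma_p, \sigma_p$ and $W_p$ separately in the two regimes $p \in P_H^c$ (where $\gamma_p = 1$, $\sigma_p = 0$ make the prefactors trivial) and $p \in P_H$ (where \eqref{eq:sq_gq_bounds} gives $\gamma_p^2 + \sigma_p^2$ and $(\gamma_p \pm \sigma_p)^2$ uniformly bounded between positive constants and converging to $1$, since $|\eta_p| \le C N^{\kappa - 2\alpha} \to 0$ on $P_H$).

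The first step is an estimate on $W_p$. Writing the convolution in position space as $W_p = N^\kappa \int V(y) f_\ell(y)\, e^{-i p \cdot y / N^{1-\kappa}}\, dy$ and using compactness of $\supp V$ together with \eqref{eq:approx_8pia_0} gives the uniform bound $|W_p| \le C N^\kappa$ (using $V f_\ell \ge 0$) and, on $P_H^c$, the sharper asymptotics $W_p = 8\pi\fa_0 N^\kappa (1 + o(1))$ (since $|p|/N^{1-\kappa} \le N^{\alpha - 1 + \kappa} \to 0$). Item (i) then follows: on $P_H^c$, $F_p = p^2 + W_p \asymp p^2 + N^\kappa$; on $P_H$, $(\gamma_p^2 + \sigma_p^2) p^2 \ge p^2 \ge N^{2\alpha} \gg N^\kappa \ge |W_p|$, so $F_p \asymp p^2 \asymp p^2 + N^\kappa$. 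For (ii), the identities reduce the bounds to $(F_p - G_p)/F_p \asymp p^2/(p^2 + N^\kappa) \asymp \min(1, p^2/N^\kappa)$ and $(F_p + G_p)/F_p \in [1,2]$ on $P_H^c$, and both ratios of order $1$ on $P_H$; the uniform bound $|G_p|/F_p \le 1 - c N^{-\kappa}$ comes from the worst case $|p| \sim 1$ in $P_H^c$, where $p^2/F_p \ge c N^{-\kappa}$. Item (iii) is immediate since $G_p = W_p$ on $P_H^c$.

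The delicate point is (iv), where the naive bound $|2 p^2 \sigma_p \gamma_p| \le C N^\kappa$ lacks the needed $|p|^{-2}$ decay. My plan is to invoke the scattering equation \eqref{eq:scattering_eta}, which I rewrite as $W_p = -2 p^2 \eta_p + R_p$ with $R_p := 2 N^{3-2\kappa} \lambda_\ell (\widehat\chi_\ell \star \widehat f_N)_p$. Expanding $(\gamma_p + \sigma_p)^2 = 1 + 2 \sigma_p \gamma_p + 2 \sigma_p^2$ and substituting yields
\[
G_p = 2 p^2 (\sigma_p \gamma_p - \eta_p) + R_p + 2 (\sigma_p \gamma_p + \sigma_p^2)\, W_p,
\]
so the cancellation between the leading parts of $2 p^2 \sigma_p \gamma_p$ and $W_p$ produces the extra factor $|p|^{-2}$. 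For $p \in P_H$, \eqref{eq:sq_gq_bounds} gives $|\sigma_p \gamma_p - \eta_p| \le C N^{3\kappa}/|p|^6$, bounding the first term by $C N^{3\kappa}/|p|^4 \le C N^{2\kappa}/|p|^2$ (using $|p| \ge N^\alpha$ and $\alpha > \kappa/2$). The last term is bounded by $C N^{2\kappa}/|p|^2$ directly from $|\sigma_p| \le C N^\kappa/|p|^2$ and $|W_p| \le C N^\kappa$. For $R_p$, the decomposition $\widehat{\chi_\ell f_N}(p) = \widehat\chi_\ell(p) - \widehat w_N(p)$, together with the standard decay $|\widehat\chi_\ell(p)| \le C/|p|^2$ for $|p| \ge N^\alpha$, with \eqref{eq:bound_wp}, and with $N^{3-2\kappa} \lambda_\ell \le C N^\kappa$ from \eqref{eq:scattering_lambda}, gives $|R_p| \le C N^\kappa/|p|^2$. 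This scattering cancellation is the main obstacle; without it one only obtains $|G_p| \le C N^\kappa$, which is insufficient for the $\ell^2$-type summability needed in the diagonalization argument of Section \ref{sec:diagonalization}.
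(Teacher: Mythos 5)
Your proposal is correct and follows essentially the same route as the paper: items (i) and (iii) via direct estimates on the convolution $N^\kappa(\wh V(\cdot/N^{1-\ka})\star\wh f_N)_p$, item (ii) via the identity $F_p-G_p=(\g_p-\s_p)^2p^2$, and item (iv) via the scattering-equation cancellation between $2p^2\eta_p$ and the potential term, with the right-hand side $N^{3-2\ka}\l_\ell(\wh\c_\ell\star\wh f_N)_p$ estimated through $\wh\c_\ell(p)-\wh w_N(p)$ exactly as in the paper. The extra identity $F_p+G_p=(\g_p+\s_p)^2(p^2+2W_p)$ is a harmless cosmetic refinement of the paper's case distinction $|p|\lessgtr N^{\ka/2}$.
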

\begin{proof}
	We start by proving $i)$. For $p\in P_H^c$, we write 
	\[
	F_p=p^2+N^\ka\convo{\wh f_N}_0+ N^\kappa \left(\convo{\wh f_N}_p-\convo{\wh f_N}_0\right)
	\] and we observe that, by \eqref{eq:approx_8pia_0}, $\convo{\wh f_N}_0 \geq C > 0$, uniformly in $N$. With 
	\[
	\abs{\convo{\wh f_N}_p-\convo{\wh f_N}_0}\le C\abs p/N^{1-\ka} \leq C N^{\alpha + \kappa -1},
	\] for all $p \in P_H^c$, we conclude that
	\[
	\wt c(p^2+N^\ka)- CN^{2\ka+\a-1}\le F_p\le \wt C(p^2+N^\ka)+CN^{2\ka+\a-1}. 
	\] This gives $i)$ for $N$ sufficiently large, thanks to $2\ka+\a-1<0$. For $p\in P_H$ we proceed similarly, using the fact that $ \big|\convo{\wh f_N}_p\big|\le C$, uniformly in $N$.  As for $ii)$, we compute $F_p  - G_p = (\gamma_p-\sigma_p)^2 p^2$ (which in particular implies that $G_p \leq F_p$). With $i)$, distinguishing the cases $|p| \leq N^{\kappa/2}$ and $|p| > N^{\kappa/2}$ (and recalling that $\gamma_p = 1$, $\s_p = 0$ for $p \in P_H^c$), we obtain (\ref{eq:itemii}). Part $iii)$ follows immediately from $\big|\convo{\wh f_N}_p\big|\le C$. To  prove $iv)$, we can use the scattering equation. Indeed, for $p\in P_H$, we have by \eqref{eq:sq_gq_bounds} that
\[ \begin{split} 2p^2\abs{\s_p\g_p-\h_p} &\le CN^{3\ka}\abs{p}^{-4}\le CN^{2\ka}\abs p^{-2}\,,\\ 
N^\ka\convo{\wh f_N}_p\abs{(\g_p+\s_p)^2-1} &\le CN^\ka\convo{\wh f_N}_p\abs{\s_p\g_p+\s_p^2} \\ &\le CN^{2\ka}\abs p^{-2}\, . \end{split} \] Thus, it is enough to prove that $p^2\h_p+N^\ka\convo{\wh f_N}_p$ satisfies the desired bound. By the scattering equation \eqref{eq:scattering_fN_fourier} and the bound \eqref{eq:chi_f_bound}, we find
\begin{equation} \label{eq:scat-G} p^2\h_p+N^\ka\convo{\wh f_N}_p=N^{3-2\kappa}\lambda_\ell (\wh \c_\ell \star \wh f_N)_p\le CN^{\ka}\abs p ^{-2}.\end{equation}  
To show the last inequality, we remark that 
\[
\wh \c_\ell(p)=\int_{\abs x\le \ell} e^{-ip\cdot x}=\fra{4\pi}{\abs{p}^2}\left(\fra{\sin(\ell\abs p)}{\abs p}-\ell\cos(\ell\abs p)\right)\,,
\] which in particular implies that $|\wh \c_\ell (p)| \le C / p^2$ (recall that $\ell < 1/2$). Moreover, recalling that $ \check \eta$ is supported in $B_\ell$, we find, with \eqref{eq:eta_punctual_bound}, $|(\wh\c_\ell \star\h)_q | =  | \widehat{(\chi_\ell  \check{\eta}) }(q) | = |\eta_q| \leq C N^\kappa / q^2$ for all $q\in\Lambda_+^*$. Since $\kappa < 1$, we conclude that 
\be\label{eq:chi_f_bound}
\abs{(\wh \c_\ell\star\wh f_N)_p}\le \frac{C}{\abs{p}^2}\,.
\ee
which implies \eqref{eq:scat-G}. This concludes the proof of the lemma.
\end{proof}
	
Part \textit{ii)} of Lemma \ref{lem:GpFpBounds} allows us to define $\tau\in\ell^1(\retp; \bR)\subset\ell^2(\retp;\bR)$ through \be\label{eq:tau_def}
\tanh(2\t_p)=-\fra{G_p}{F_p}\,,
\ee or, equivalently, through 
\be\label{eq:tau_def2}
\t_p=\fra{1}{4}\left[\log\left(1-\fra{G_p}{F_p}\right)-\log\left(1+\fra{G_p}{F_p}\right)\right]\,,
\ee
These coefficients will be used to define the generalized Bogoliubov transformation which is going to diagonalize the quadratic part of the Hamiltonian $\cJ_N$. In the next Lemma we collect some useful properties of $\t_p$. 

\begin{lemma}\label{lem:bounds_tau}
Assume \eqref{eq:conditions_parameters_JN} and let $\tau$ be defined through \eqref{eq:tau_def}. Then there exists a constant $C > 0$ such that  
	\be\label{eq:tau_norms}
	\begin{split}
	\norm{\t}_\io &\le C + \log N^{\kappa/4}  \,,\qquad\quad \hspace{.15cm}  \norm{\t}_2^2\le CN^{3\ka/2} \,,\\
	\norm{\t}_1 &\le CN^{\a+\ka}\,,\qquad\qquad\quad\norm{\t}_{H^1}^2\le CN^{\a+2\ka}\,,
	\end{split}
	\ee for every $N$ large enough.
\end{lemma}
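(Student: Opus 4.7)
The plan is to start from the explicit expression
\[ \tau_p = \fra{1}{4}\log \fra{1 - G_p/F_p}{1 + G_p/F_p}, \]
derived in \eqref{eq:tau_def2}, and to split the sum over $p\in\retp$ into three regions determined by the ratio $|G_p|/F_p$, which dictates whether $\tau_p$ is in the logarithmic or the linear regime of $\mathrm{arctanh}$.

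For the $\ell^\infty$ bound I would appeal directly to Lemma \ref{lem:GpFpBounds}(ii). Since $1 + G_p/F_p \in [c,2]$ is bounded away from $0$ and $\infty$, the contribution of $\log(1+G_p/F_p)$ is $O(1)$. On the other side, combining $|G_p|/F_p \le 1 - cN^{-\kappa}$ with $G_p \le F_p$ (which follows from $F_p-G_p=(\gamma_p-\sigma_p)^2p^2\ge 0$) gives $cN^{-\kappa} \le 1 - G_p/F_p \le C$, whence $|\log(1 - G_p/F_p)| \le \kappa\log N + O(1)$, which yields $|\tau_p| \le (\kappa/4)\log N + O(1)$.

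Next I would establish three pointwise estimates. In the low region $\{|p|^2 \le N^\kappa\}$, note that $N^{\kappa/2} < N^\alpha$ forces $p\in P_H^c$, so $\gamma_p=1$, $\sigma_p=0$, giving the explicit identities $F_p - G_p = p^2$ and $F_p + G_p = p^2 + 2N^\kappa(\wh V(\cdot/N^{1-\kappa})\star\wh f_N)_p$; together with \eqref{eq:approx_8pia_0} this leads to $|\tau_p| \le \fra{1}{4}\log(CN^\kappa/p^2) + C$. In the intermediate region $\{N^{\kappa/2} \le |p| < N^\alpha\}$, still $p\in P_H^c$, Lemma \ref{lem:GpFpBounds}(i) and (iii) give $|G_p|/F_p \le CN^\kappa/p^2 \le 1/2$, and the elementary estimate $|\mathrm{arctanh}(x)|\le 2|x|$ for $|x|\le 1/2$ yields $|\tau_p| \le CN^\kappa/|p|^2$. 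In the high region $\{|p| \ge N^\alpha\}$, parts (i) and (iv) of the same lemma combine to $|G_p|/F_p \le CN^{2\kappa}/|p|^4$, which is small since $\alpha>2\kappa$, and therefore $|\tau_p| \le CN^{2\kappa}/|p|^4$.

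Finally I would compute each of $\|\tau\|_1$, $\|\tau\|_2^2$, and $\|\tau\|_{H^1}^2$ by comparing the sum with a radial integral, using $\#\{p\in 2\pi\bZ^3 : R_1\le |p|\le R_2\} \le C(R_2^3-R_1^3+1)$. In the low region, rescaling $u=|p|/N^{\kappa/2}$ reduces each contribution to a finite integral of the form $\int_0^1 u^k \log^j(1/u)\,du < \infty$, producing at most $N^{3\kappa/2}$ for $\|\tau\|_1$ and $\|\tau\|_2^2$, and $N^{5\kappa/2}$ for $\|\tau\|_{H^1}^2$. The intermediate region, evaluated against $r^2\,dr$, contributes $N^{\alpha+\kappa}$ to $\|\tau\|_1$, $N^{3\kappa/2}$ to $\|\tau\|_2^2$, and $N^{\alpha+2\kappa}$ to $\|\tau\|_{H^1}^2$. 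The high region, thanks to its $|p|^{-4}$ decay, produces only $N^{2\kappa-\alpha}$, $N^{4\kappa-5\alpha}$, and $N^{4\kappa-3\alpha}$ respectively, all negligible. Since $\alpha > 2\kappa > \kappa/2$, the inequality $N^{5\kappa/2} \le N^{\alpha+2\kappa}$ holds, and all four claimed estimates follow. The main technical obstacle is the low-momentum region, where the $\log(N^\kappa/p^2)$ factor threatens to inject extraneous logarithmic corrections into the final bounds; the saving grace is that the phase-space weight $r^2\,dr$ (respectively $r^4\,dr$ for the $H^1$ norm) cancels the logarithmic singularity at $r=0$ after rescaling, so no log factor persists in the polynomial prefactors $N^{3\kappa/2}$ and $N^{\alpha+2\kappa}$.
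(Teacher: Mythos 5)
Your proof is correct and follows essentially the same route as the paper: the same three-region splitting of momentum space ($|p|<N^{\kappa/2}$, $N^{\kappa/2}\le |p|<N^\alpha$, $|p|\ge N^\alpha$) with the same pointwise bounds on $\tau_p$, differing only in that you derive the $\ell^\infty$ bound directly from Lemma \ref{lem:GpFpBounds}(ii) and replace the paper's dyadic decomposition in the low-momentum region by an integral comparison. The only point to tighten is the step $|G_p|/F_p\le CN^\kappa/p^2\le 1/2$ in the intermediate region, which may fail near $|p|=N^{\kappa/2}$ if $C$ is large; instead invoke part (ii) of Lemma \ref{lem:GpFpBounds} to get $|G_p|/F_p\le 1-c$ there (which is all the linearization of $\mathrm{arctanh}$ requires), exactly as you already did for the sup bound.
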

\begin{proof} We observe that, for $|p| < N^{\kappa/2}$, part $ii)$ of Lemma \ref{lem:GpFpBounds} and  \eqref{eq:tau_def2} imply that 
\begin{equation}\label{eq:point-tau1} |\tau_p| \leq \frac{1}{4} \log (N^{\kappa} / |p|^2) + C \end{equation} 
For $N^{\kappa/2} \leq |p| < N^\alpha$, we use instead part $i)$ and part $iii)$ of Lemma \ref{lem:GpFpBounds} and we apply the mean value theorem to \eqref{eq:tau_def2} to show that 
\begin{equation}\label{eq:point-tau2} |\tau_p| \leq C |G_p| / F_p \leq C N^\kappa / p^2 \end{equation}
Similarly, for $|p| \geq N^\alpha$, part $i)$ and part $iv)$ of Lemma \ref{lem:GpFpBounds} lead us to 
\begin{equation}\label{eq:point-tau3}  |\tau_p| \leq C |G_p| / F_p \leq C N^{2\kappa} / p^4 \end{equation}
The three estimates (\ref{eq:point-tau1}), (\ref{eq:point-tau2}), (\ref{eq:point-tau3}) immediately imply the bound for $\| \tau \|_\infty$. To control the other norms, let us denote by $\tau^{(1)}$, $\tau^{(2)}$, $\tau^{(3)}$ the restriction of $\tau$ to the domains $|p| < N^{\kappa/2}$, $N^{\kappa/2} \leq |p| < N^\alpha$ and, respectively, $|p| \geq N^\alpha$. With (\ref{eq:point-tau2}), we easily find that $\| \tau^{(2)} \|^2_2  \leq C N^{3\kappa/2}$, $\| \tau^{(2)} \|_1 \leq C N^{\alpha + \kappa}$, $\| \tau^{(2)} \|_{H^1}^2 \leq C N^{\alpha + 2\kappa}$. Similarly, with (\ref{eq:point-tau3}), we obtain $\| \tau^{(3)} \|_2^2 \leq C N^{4\kappa-5\alpha}$, $\| \tau^{(3)} \|_1 \leq C N^{2\kappa-\alpha}$, $\| \tau^{(3)} \|_{H^1}^2 \leq C N^{4\kappa-3\alpha}$. To compute norms of $\tau^{(1)}$, we need to be a bit more careful. With (\ref{eq:point-tau1}) and with a dyadic decomposition, we find
\[ \begin{split} \| \tau^{(1)} \|_2^2 &\leq C N^{3\kappa/2} + C \sum_{|p| < N^{\kappa/2}} \log^2 (N^{\kappa/2} / |p|) \\ &\leq C N^{3\kappa/2} + C \sum_{j=0}^\infty  \sum_p \chi ( N^{\kappa/2} 2^{-(j+1)} \leq |p| < N^{\kappa/2} 2^{-j}) \, \log^2 (N^{\kappa/2} / |p|) \\ &\leq C N^{3\kappa/2} + C  N^{3\kappa/2} \sum_{j=0}^\infty 2^{-3j} (j+1)^2 \leq C N^{3\kappa/2} \end{split} \]
where we used the fact that the ball $|p| < N^{\kappa/2} 2^{-j}$ contains at most $C N^{3\kappa/2} 2^{-3j}$ points in $\Lambda^*_+$ (the sum could be truncated at $j = C \kappa \log N$, for sufficiently large $C$). Similarly, we find $\| \tau^{(1)} \|_1 \leq C N^{3\kappa}$ and $\| \tau^{(1)} \|_{H^1}^2 \leq C N^{5\kappa/2}$. With \eqref{eq:conditions_parameters_JN}, we obtain (\ref{eq:tau_norms}). 
\end{proof} 

With the coefficients $\tau_p$ introduced in (\ref{eq:tau_def}) we can define the antisymmetric operator \[ T = \frac{1}{2} \sum_{p \in \L^*_+} \tau_p (b_p^* b_{-p}^* - b_p b_{-p} ) \]
and we consider the corresponding generalized Bogoliubov transformation $e^T$. Using the bounds in Lemma \eqref{lem:bounds_tau}, we control the action of $e^T$ on powers of  $\cN_+$ and on the kinetic energy $\cK$, with the following extension of Lemma 3.1 of \cite{BS}. 
\begin{lemma}\label{lem:BtauNumberkinetic}
Assume \eqref{eq:conditions_parameters_JN} and let $\tau$ be defined as in \eqref{eq:tau_def}.
	\begin{enumerate}[i)]
		\item For every $j \in \NN$, there exists a constant 
	$C>0$ such that
		\bes{\label{eq:BtauN_positive}
			e^{-sT} \cN^j_+ e^{sT} &\le CN^{j(j+1)\ka/4} \big( \cN_+ + N^{3\ka/2} \big)^j\, }
			for all $s \in [0;1]$. 	
		\item For every $j \in \NN$, there exists a constant 
	$C>0$ such that
		\bes{\label{eq:BtauK}
			&e^{-sT} \cK \cN^j_+ e^{sT} \le CN^{(j+1)^2 \ka /2} \left(\cK+N^{\a+2\ka}\right)\big( \cN_+ + 
			N^{3\ka/2} \big)^{j}\,,} 
			for all $s \in [0;1]$. 	
		\item Let $\cE_{\cV_N} = e^{-T} \cV_N e^T - \cV_N$. Then, we have
		\be\label{eq:VBtau_bound}
		\pm\cE_{\cV_N}, \pm e^T \cE_{\cV_N} e^{-T} \le CN^{-\fra{1}{2}+\a+\fra{7\ka}{2}}(\cK+N^{\a+2\ka})(\cN_++N^{3\ka/2})\,.	
		\ee
	\end{enumerate}
	
\end{lemma}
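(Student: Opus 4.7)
The strategy is to adapt \cite[Lemma 3.1]{BS} to the present setting, where the Bogoliubov kernel $\tau$ has $N$-dependent norms quantified in Lemma \ref{lem:bounds_tau}. The common scheme for all three parts is a Duhamel argument combined with precise control of the action of $e^{sT}$ on $b_p$ via the modified Bogoliubov identity
\[ e^{-sT} b_p e^{sT} = \cosh(s\tau_p)\, b_p + \sinh(s\tau_p)\, b_{-p}^* + d_p(s), \]
in analogy with \eqref{eq:dp_def}, where the remainder $d_p(s)$ satisfies bounds analogous to Lemma \ref{lem:action_bogoliubov} (with $\tau$ in place of $\eta_H$).

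For Part (i), starting from $\cN_+ = \sum_p a_p^* a_p$ and the identity $a_p^* a_p = b_p^* b_p \cdot N/(N-\cN_+)$ (valid modulo negligible $\cN_+/N$ corrections), the pull-through formula yields
\[ e^{-T}\cN_+ e^T = \sum_p \cosh(2\tau_p)\, a_p^* a_p + \sum_p \sinh^2(\tau_p) + (\text{anomalous}) + (\text{error}). \]
The first term is controlled by $\|\cosh(2\tau)\|_\infty \leq e^{2\|\tau\|_\infty} \leq CN^{\kappa/2}$; the second by the refinement $\|\sinh\tau\|_2^2 \leq CN^{3\kappa/2}$, obtained by the same low/high-momentum splitting $\tau = \tau^{(1)}+\tau^{(2)}+\tau^{(3)}$ used in the proof of Lemma \ref{lem:bounds_tau}; and the anomalous terms $\sum_p \sinh\cosh(\tau_p)(b_p^* b_{-p}^*+b_p b_{-p})$ by Cauchy--Schwarz against $\|\cosh\sinh\|_\infty$ and $\|\cosh\sinh\|_1$. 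This delivers the $j=1$ case of (\ref{eq:BtauN_positive}). For $j\geq 2$ one writes $e^{-T}\cN_+^j e^T = (e^{-T}\cN_+ e^T)^j$: each factor contributes a multiplicative $N^{\kappa/2}$, and commuting the scalar shifts $N^{3\kappa/2}$ past subsequent $\cN_+$-powers produces the accumulating prefactor $N^{j(j+1)\kappa/4}$.

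For Part (ii), apply the same template with $\cK\cN_+^j$ in place of $\cN_+^{j+1}$. The essential new ingredient is $[\cK,T] = \sum_p \tau_p |p|^2 (b_p^* b_{-p}^* + b_p b_{-p})$, whose vacuum contribution $\sum_p |p|^2 \sinh^2(\tau_p)$ is controlled by $\|\tau\|_{H^1}^2 \leq CN^{\alpha+2\kappa}$, yielding the kinetic shift $N^{\alpha+2\kappa}$. Commuting $\cK$ past $\cN_+^j$ is trivial since $[\cK,\cN_+]=0$, and the $\cN_+^j$-factor is handled exactly as in Part (i).

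For Part (iii), use the fundamental theorem of calculus,
\[ \cE_{\cV_N} = \int_0^1 e^{-sT}[\cV_N,T]\,e^{sT}\,ds, \]
and compute $[\cV_N,T]$ via the position-space representations $T = \tfrac12\int dxdy\,\check\tau(x-y)(\check b_x^*\check b_y^*-\check b_x\check b_y)$ and \eqref{eq:cV-x}. Normal-ordering via the CCR yields a finite sum of terms with kernels built from $V(N^{1-\kappa}(\cdot))$ and $\check\tau$. Each summand is bounded by Cauchy--Schwarz in position space, pairing creation/annihilation operators against $\sqrt{\cV_N}$, $\sqrt{\cK}$ or $(\cN_++N^{3\kappa/2})^{1/2}$, while the conjugation by $e^{\pm sT}$ at the endpoints is absorbed using Parts (i) and (ii).

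\textbf{Main obstacle.} The delicate step is Part (iii): the quartic operator $\cV_N$ does not commute nicely with the quadratic-in-$b$ operator $T$, and $[\cV_N,T]$ contains many distinct summands (quartic, quintic, and sextic in $b^*$, $b$), each of which must be bounded in a form compatible with the right-hand side of \eqref{eq:VBtau_bound}. Unlike in \cite{BS}, where all norms of $\tau$ are $O(1)$, here we must carefully track the polynomial growth $\|\tau\|_{H^1}^2 \leq CN^{\alpha+2\kappa}$ and $\|\sinh\tau\|_2^2 \leq CN^{3\kappa/2}$, ensuring that the $N^{-1}$ prefactor of $\cV_N$ is sufficient to produce the smallness factor $N^{-1/2+\alpha+7\kappa/2}$. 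Moreover, the symmetric statement covering $e^T \cE_{\cV_N} e^{-T}$ requires verifying stability of each bound under an additional conjugation by $e^T$ (again using Parts (i)--(ii)), which doubles the bookkeeping but does not introduce any essentially new ideas.
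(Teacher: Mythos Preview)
Your plan for part (iii) is essentially the paper's approach: Duhamel expansion, compute $[\cV_N,T]$ in position space, estimate by Cauchy--Schwarz, and absorb the $e^{\pm sT}$ conjugation via parts (i)--(ii). Two minor corrections: the commutator $[\cV_N,T]$ produces only a \emph{quadratic} term (with kernel $N^{2-2\kappa}V(N^{1-\kappa}(x-y))\check\tau(x-y)$) and a \emph{quartic} term $\check b_x^*\check b_y^* a^*(\check\tau_y)\check a_x+\hc$; there are no quintic or sextic contributions. Also, to pass from the $\cV_N^{1/2}$ appearing after Cauchy--Schwarz to the $\cK$ on the right-hand side of \eqref{eq:VBtau_bound} you need the auxiliary estimate $\cV_N\le C\cK\cN_+$ (Lemma \ref{lem:V<KN}), which you should invoke explicitly.

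For parts (i)--(ii) your strategy differs from the paper's, and the step to $j\ge 2$ has a genuine gap. Your $j=1$ argument (explicit Bogoliubov expansion, controlling the diagonal part by $\|\cosh(2\tau)\|_\infty\le CN^{\kappa/2}$ and the constant by $\|\sinh\tau\|_2^2\le CN^{3\kappa/2}$) is fine. But the passage ``$e^{-T}\cN_+^j e^T=(e^{-T}\cN_+ e^T)^j$, each factor contributes $N^{\kappa/2}$'' does not work: operator inequalities $A\le cB$ for noncommuting positive $A,B$ do \emph{not} imply $A^j\le c^jB^j$ for $j\ge 2$. The paper avoids this by an induction on $j$ through a Gronwall argument: one bounds the commutator $[\cN_+^j,T]$ (respectively $[\cK\cN_+^j,T]$) directly and obtains
\[
|\partial_s\,\langle e^{sT}\xi,\cN_+^j e^{sT}\xi\rangle|\le \big(2j\|\tau\|_\infty+C\big)\langle e^{sT}\xi,\cN_+^j e^{sT}\xi\rangle + C\|\tau\|_2^2\,\langle e^{sT}\xi,(\cN_++1)^{j-1} e^{sT}\xi\rangle,
\]
so that Gronwall together with $\|\tau\|_\infty\le C+\log N^{\kappa/4}$, $\|\tau\|_2^2\le CN^{3\kappa/2}$, and the inductive hypothesis for $j-1$ yields the claimed prefactor $N^{j(j+1)\kappa/4}$. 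The same scheme with $\|\tau\|_{H^1}^2\le CN^{\alpha+2\kappa}$ replacing $\|\tau\|_2^2$ handles part (ii). You should replace your ``factor-by-factor'' iteration by this Gronwall/induction argument.
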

To prove Lemma \ref{lem:BtauNumberkinetic} (in particular, part $iii)$), it is useful to bound the potential energy operator in terms of $\cK$ and $\cN_+$.
\begin{lemma}\label{lem:V<KN}
 There exists a constant $C>0$ such that on $\excsp$ we have 
	\be\label{eq:V<KN}
	\cV_N\le C\cK\cN_+\,.
	\ee
\end{lemma}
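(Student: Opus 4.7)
My plan is to reduce the operator inequality (\ref{eq:V<KN}) to a sector-by-sector estimate and then apply a standard Hardy--Sobolev-type bound, exploiting the fact that $L^{3/2}(\bR^3)$ is the scale-invariant Lebesgue space for the Sobolev embedding $H^1(\bR^3)\hookrightarrow L^6(\bR^3)$. Since $V\ge0$ is compactly supported and in $L^3$, we have $V\in L^{3/2}$, and moreover
\[
\|\,N^{2-2\ka}V(N^{1-\ka}\cdot)\,\|_{3/2}=N^{2-2\ka}\cdot N^{-2(1-\ka)}\|V\|_{3/2}=\|V\|_{3/2},
\]
which is the crucial scale invariance that makes the constant in (\ref{eq:V<KN}) $N$-independent.

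Decompose $\xi\in\excsp$ as $\xi=\sum_{n=0}^{N}\xi_n$, with $\xi_n\in L^2_\perp(\Lambda)^{\otimes_s n}$, and note that both $\cV_N$ and $\cK\cN_+$ preserve the $n$-particle sector and that $[\cK,\cN_+]=0$, so $\cK\cN_+$ is self-adjoint. Using the representation (\ref{eq:cV-x}) together with the permutation symmetry of $\xi_n$, I would write
\[
\langle\xi_n,\cV_N\xi_n\rangle=\fra{n(n-1)}{2}\int_{\L^n} N^{2-2\ka}V(N^{1-\ka}(x_1-x_2))\,|\xi_n(x_1,\dots,x_n)|^2\,dx_1\cdots dx_n .
\]
The task is then to bound the inner two-body integral by a kinetic term.

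For each fixed $(x_2,\dots,x_n)$, view $f(x_1):=\xi_n(x_1,x_2,\dots,x_n)$ as an $H^1(\Lambda)$ function. The standard Hardy--Sobolev inequality
\[
\int_{\L}g(x)|f(x)|^2\,dx\le C\|g\|_{3/2}\,\|f\|_6^2\le C\|g\|_{3/2}\bigl(\|\nabla f\|_2^2+\|f\|_2^2\bigr)
\]
(applied with $g(\cdot)=N^{2-2\ka}V(N^{1-\ka}(\cdot-x_2))$, whose $L^{3/2}$ norm equals $\|V\|_{3/2}$ by the scale invariance above) gives, after integrating in the remaining variables and using that on $\excsp$ one has $\langle\xi_n,\cN_+\xi_n\rangle=n\|\xi_n\|^2$ and $\langle\xi_n,\cK\xi_n\rangle=n\|\nabla_{x_1}\xi_n\|_{L^2(\L^n)}^2$,
\[
\langle\xi_n,\cV_N\xi_n\rangle\le \fra{C(n-1)}{2}\|V\|_{3/2}\,\langle\xi_n,\cK\xi_n\rangle
\le C\,\langle\xi_n,\cK\cN_+\xi_n\rangle .
\]
(A Poincaré-type step absorbs the $\|f\|_2^2$ contribution against the kinetic energy on $L^2_\perp(\L)$, using that the zero mode has been projected out.) Summing over $n$ yields (\ref{eq:V<KN}).

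The argument is essentially immediate once one invokes the correct Sobolev inequality; the only subtlety is to identify $L^{3/2}$ as the right target space so that no $N$-dependent factor appears from the scaling $V\mapsto N^{2-2\ka}V(N^{1-\ka}\cdot)$. A possible minor technical point is to ensure the Sobolev estimate in the box $\Lambda$ with periodic boundary conditions on functions orthogonal to constants, but this follows from Poincaré's inequality on $L^2_\perp(\Lambda)$ together with the standard embedding $H^1\hookrightarrow L^6$.
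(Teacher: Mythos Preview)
Your argument is correct and takes a genuinely different route from the paper's proof. The paper works entirely in momentum space: it applies Cauchy--Schwarz with a weight $(p+r)^2/(q+r)^2$ to the expression for $\cV_N$ in \eqref{eq:KVN}, and then reduces the claim to the lattice convolution bound
\[
\sup_{q\in\retp}\ \frac{1}{N^{1-\ka}}\sum_{\substack{r\in\retp\\ r\neq -q}}\frac{|\wh V(r/N^{1-\ka})|}{(q+r)^2}\le C,
\]
which is verified by splitting the sum into $|q+r|\le N^{1-\ka}$ (using $\wh V\in L^\infty$) and $|q+r|>N^{1-\ka}$ (using $\wh V\in L^2$). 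Your proof instead stays in position space, reduces to a one-body estimate via the sector decomposition, and invokes the Sobolev embedding $H^1(\L)\hookrightarrow L^6(\L)$ together with H\"older and Poincar\'e. Your key observation---that $\|N^{2-2\ka}V(N^{1-\ka}\cdot)\|_{3/2}$ is exactly scale invariant---makes the $N$-independence of the constant immediately transparent; in the paper's argument, this same cancellation is buried in the lattice sum estimate. The two proofs are essentially Fourier duals of each other, but yours isolates the mechanism more cleanly.

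One point worth making explicit in your write-up: the Poincar\'e step requires that the slice $f(x_1)=\xi_n(x_1,x_2,\dots,x_n)$ have zero mean in $x_1$ for a.e.\ $(x_2,\dots,x_n)$. This is indeed true because $\xi_n\in L^2_\perp(\L)^{\otimes_s n}$, so integrating out any coordinate annihilates $\xi_n$ (the partial integral factors through the projection onto $\ph_0$ in that slot). You allude to this but it is the only place where the specific structure of $\excsp$ enters, so it is worth stating.
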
 
\begin{proof}
	For $\x\in\excsp$, we bound
	\spl{\expec{\x}{\cV_N}\le\,&\fra{1}{N^{1-\ka}}\sum_{\substack{p,q,r\in\retp\\r\neq-p,-q}}\big| \wh V(r/N^{1-\ka})\big|\norm{a_{p+r}a_q\x}\norm{a_{q+r}a_p\x}\\
		\le\,&\fra{1}{N^{1-\ka}}\sum_{r\in\retp} \big|\wh V(r/N^{1-\ka})\big| \sum_{\substack{p,q\in\retp\\r\neq-p,-q}}\fra{(p+r)^2}{(q+r)^2} \norm{a_{p+r}a_q\x}^2 \\
		\le\,&\left[\sup_{q\in\retp}\fra{1}{N^{1-\ka}}\sum_{\substack{r\in\retp\\r\neq-q}} \fra{\big|\wh V(r/N^{1-\ka})|}{(q+r)^2}\right]\norm{\cK^{1/2}\cN_+^{1/2}\x}^2\le C\norm{\cK^{1/2}\cN_+^{1/2}\x}^2\,.
	}
\end{proof}
We can now proceed with the proof of Lemma \ref{lem:BtauNumberkinetic}.

\begin{proof}[Proof of Lemma \ref{lem:BtauNumberkinetic}]
	To prove $i)$ we will use induction over $j$. For $j=0$, the claim is trivial. So we assume \eqref{eq:BtauN_positive} to hold with $j$ replaced by $j-1$, for some $j \geq 1$. To prove it holds also for $j$, we compute the commutator
\bes{\label{eq:NBCommutatorNegative}
		[\cN_+^j,T]=\,&\fra{1}{2}\sum_{p\in\retp}\t_pb_p^*b_{-p}^*[(\cN_++2)^j- \cN_+^j] + \text{h.c.} 	} 
Since $(\cN_+ + 2)^j - \cN_+^j \leq 2j \cN_+^{j-1} + C (\cN_+ + 1)^{j-2}$, for a constant $C > 0$ depending on $j$, we obtain 
\[ \begin{split} 
\big| \bra \xi  [\cN_+^j,T] \ket\x \big| &\le \sum_{p\in\retp} |\t_p | \| b_p \big[ \cN_+^j - (\cN_+ -2)^j \big]^{1/2} \xi \| \| b_{-p}^* \big[ (\cN_+ + 2)^j - \cN_+^j \big]^{1/2} \x \| 
\\ &\le \sum_{p\in\retp}\abs{\t_p} \| b_p \big[ \cN_+^j - (\cN_+ -2)^j \big]^{1/2} \xi \| 
\\ &\hspace{2cm} \times\left(\| b_{-p} \big[ (\cN_+ + 2)^j - \cN_+^j \big]^{1/2} \x \|+ \| \big[ (\cN_+ + 2)^j - \cN_+^j \big]^{1/2}\x \| \right) 
\\ &\le ( 2j \norm{\t}_\io + C) \|\cN_+^{j/2}\x \|^2+ C \norm{\t}_2^2 \| (\cN_++1)^{(j-1)/2}\x \|^2\,.\end{split} \] 
For $s \in [0;1]$, we set $\x_s=e^{sT}\x$ and $\ph(s)= \bra{\x_s}(\cN_+)^j\ket{\x_s}$. With \eqref{eq:NBCommutatorNegative}, Lemma \ref{lem:bounds_tau} and the inductive assumption, we obtain, for $N$ large enough, 
	\begin{equation*}
	\begin{split}
	\abs{\partial_s\ph(s)}\le&  (\log(N^{j \ka/2}) + C) \ph(s)+CN^{3\ka/2} \expec{\x_s}{(\cN_++1)^{(j-1)}}
	\\ \le & (\log(N^{j \ka /2}) + C) \ph(s)+CN^{j(j-1)\ka/4} \expec{\x}{(\cN_+ + N^{3\ka/2})^{j'}}
	\end{split}
	\end{equation*}  for every $s \in [0;1]$. Gronwall's Lemma implies \eqref{eq:BtauN_positive}.
	
	We proceed analogously to prove $ii)$. For $j=0$ we have 
	\[
	[\cK,T]=\sum_pp^2\t_p(b_p^*b_{-p}^*+b_pb_{-p})\,,
	\] hence 
	\spl{
		\bra{\x}[\cK,T]\ket{\x}&\le 2\sum_{p\in\retp}p^2\abs{\t_p}\norm{b_p\x}\big(\norm{b_{-p}\x}+\norm{\x}\big)\\
		&\le (2 \norm{\t}_\io +C ) \| \cK^{1/2}\x \|^2+C\norm{\t}_{H^1}^2\norm{\x}^2,\,
	} and applying Gronwall's Lemma as above yields the claim. For $j\ge 1$ we have 
	\be\label{eq:BKNCommutator}
	[\cK \cN_+^j, T]=\cK [\cN_+^j,T] + [\cK,T] \cN_+^j.
	\ee To bound the first term in the right-hand side of \eqref{eq:BKNCommutator} we compute 
	\spl{
		\cK [ \cN_+^j,T ] 
		=\,&\sum_{p,q\in\retp}q^2\t_pa_q^*a_qb_p^*b_{-p}^* [(\cN_++2)^j- \cN_+^j]  \\ & +\sum_{p,q\in\retp}q^2\t_pa_q^*a_qb_pb_{-p} [\cN_+^j- (\cN_+-2)^j]  \\
		=\,&\sum_{p,q\in\retp}q^2\t_pb_q^*b_p^*a_{-p}^*a_q [(\cN_++2)^j- \cN_+^j] \\ & +2 \sum_{p\in\retp}p^2\t_pb_p^*b_{-p}^* [(\cN_++2)^j- \cN_+^j] \\
		&+\sum_{p,q\in\retp}q^2\t_pa_q^*a_pb_qb_{-p} [\cN_+^j- (\cN_+-2)^j]  
		=: \text{I}+\text{II}+\text{III}\,,
	} We have 
	\spl{
		\big| \bra{\x}\text{I}\ket{\x}\big|\le\,& \sum_{p,q\in\retp}q^2\abs{\t_p} \| b_q b_p [\cN_+^j- (\cN_+-2)^j]^{1/2} \xi \| \\
		&\times\Big(\| a_{-p}a_q [(\cN_++2)^j- \cN_+^j]^{1/2} \x \| +\| a_q [(\cN_++2)^j- \cN_+^j]^{1/2} \x \| \Big)\\
		\le\,& (2j \norm{\t}_\io +C) \| \cN_+^{j/2}\cK^{1/2}\x \|^2 + C\norm{\t}_2^2 \| (\cN_++1)^{(j-1)/2}\cK^{1/2}\x \|^2\,,
	} and similarly 
	\[
	\big|\bra{\x}\text{III}\ket{\x}\big|\le (2j \norm{\t}_\io + C) \| \cN_+^{j/2} \cK^{1/2}\x \|^2+C\norm{\t}_2^2 \|(\cN_+ +1)^{(j-1)/2}\cK^{1/2}\x \|^2\,.
	\] As for $\text{II}$, we find
	\spl{
		\big| \bra{\x}\text{II}\ket{\x}\big|
		\le\,& C\norm{\t}_\io \| \cN_+^{(j-1)/2}\cK^{1/2}\x \|^2+C\norm{\t}_{H^1}^2 \|(\cN_++1)^{(j-1)/2}\x \|^2\,.
	} The second term on the right-hand side of \eqref{eq:BKNCommutator} is given by
	\[
	[\cK,T] \cN_+^j=\sum_{p\in\retp}p^2\t_p(b_p^*b_{-p}^*+b_pb_{-p})\cN_+^j
	\] which can be bounded as above and satisfies the  estimate
	\[
	\expec{\x}{[\cK,T] \cN_+^j} \le (2 \norm{\t}_\io + C) \| \cN_+^{j/2}\cK^{1/2}\x \|^2 + C\norm{\t}_{H^1}^2 \|\cN_+^{j/2}\x \|^2.
	\]
	 Putting things together we find 
	 \begin{equation*}
	 \begin{split}
	 	 \expec{\x}{[ \cK \cN_+^j, T]}\le& \, [ (4j + 2) \norm{\t}_\io + C] \| \cN_+^{j/2}\cK^{1/2}\x
		 \|^2 + C\norm{\t}_{H^1}^2 \| \cN_+^{j/2}\x \|^2\\
	 &+C\norm{\t}_2^2 \| (\cN_++1)^{(j-1)/2} \cK^{1/2}\x \|^2.
	 \end{split}
	 \end{equation*}
	 Using Gronwall's Lemma again as above, the inductive assumption, Lemma \ref{lem:bounds_tau} and the bound \eqref{eq:BtauN_positive} we obtain (\ref{eq:BtauK}). 
	
	Finally, we turn to the proof of part $iii)$. We write 
	\be\label{eq:VBtau_expansion}
	\cE_{\cV_N}= - \int_0^1 e^{-sT} [\cV_N, T] e^{sT} ds, 
	\ee and using the expression (\ref{eq:cV-x}) in position space, we find
	\[
	\begin{split}
	[\cV_N, T ]=&\fra{1}{2}\int dxdy\,N^{2-2\ka}V(N^{1-\ka}(x-y))\check{\t}(x-y)(\ckb_x^*\ckb_y^*+\ckb_x\ckb_y)\\
	&+\int dxdy\,N^{2-2\ka}V(N^{1-\ka}(x-y))[\ckb_x^*\ckb_y^*a^*(\check\t_y)\cka_x+\hc].
	\end{split}
	\]
With $\norm{\check\t}_\io\le\norm{\t}_1$, $\norm{\check \t_y}_2=\norm{\check\t(\cdot-y)}_2=\norm{\check\t}_2=\norm\t_2$ and Lemma \ref{lem:V<KN}, we bound 
	\spl{
		\abs{\expec{\x}{[\cV_N, T ]}}\le\,& \int dxdy\,N^{2-2\ka}V(N^{1-\ka}(x-y))\abs{\check{\t}(x-y)} \| \ckb_x\ckb_y\x \| \norm\x\\
		&+\int dxdy\,N^{2-2\ka}V(N^{1-\ka}(x-y))\norm{\ckb_x\ckb_y\x}\norm{a^*(\check\t_y)\cka_x\x}\\
		\le\,&C N^{(\ka-1)/2} \norm{\t}_1 \| \cV_N^{1/2}\x \| \norm{\x}+CN^{(\ka-1)/2} \norm{\t}_2 \| \cV_N^{1/2}\x \| \norm{\cN\x}\\
		\le\,&CN^{(\ka-1)/2}\norm\t_1\expec{\x}{(\cK+1)(\cN_++1)}\,.
	} Using (\ref{eq:BtauK}) 
	and Lemma \ref{lem:bounds_tau} we get	
	\[
	\begin{split}
	\pm e^{-sT}{[\cV_N,T]}e^{sT}\le
 CN^{-\fra{1}{2}+\a+\fra{7\ka}{2}}(\cK+N^{\a+2\ka})(\cN_++N^{3\ka/2}),
	\end{split}
\] for every $s\in[0,1]$. Together with \eqref{eq:VBtau_expansion}, this concludes the proof of $iii)$. 
\end{proof}

Notice that by Lemma \ref{lem:bounds_tau}, $\norm{\t}_2$ is not uniformly bounded in $N$: this prevents us from applying Lemma \ref{lem:action_bogoliubov} to understand the action of $e^T$. Nevertheless, following \cite{BBCS4}, we obtain an expansion similar to \eqref{eq:dp_def}, using the fact that $\t\in\ell^1$ and that $\norm{\t}_\io$ only grows logarithmically in $N$. Using the commutation relations \eqref{eq:bpCCR} we write 
\spl{
e^{-T} b_p e^T =\,&b_p+\int_0^1\consB{s}{[b_p, T ]}ds\\
	=\,&b_p+\int_0^1 e^{-sT} \Big[\t_p b_{-p}^*-\Big(\fra{\cN_+}{N}\t_pb_{-p}^*+\frac1N\sum_{q\in\retp}\t_qb_q^*a_{-q}^*a_p\Big)\Big] e^{sT} \, ds\\
	=\,&b_p+\t_pb_{-p}^*-\int_0^1 e^{-sT} \Big[\fra{\cN_+}{N}\t_pb_{-p}^*+\fra{1}{N}\sum_{q\in\retp}\t_qb_q^*a_{-q}^*a_p\Big] e^{sT} \, ds\\
	&+\int_0^1\int_0^{s_1} e^{-s_2 T} [\t_pb_{-p}^*,B(\t)] e^{s_2 T} \, ds_2ds_1\\
} Iterating we find, for every $k \in \bN$, the truncated expansion
\begin{equation}\label{eq:Dp_exp_trunc}
\begin{split}
e^{-T} b_p e^T =\,&\sum_{n=0}^k\fra{\t_p^{2n}}{(2n)!}b_p+\sum_{n=1}^k\fra{\t_p^{2n-1}}{(2n-1)!}b_{-p}^*+D_p^{(k)},
\end{split}
\end{equation}
with
\begin{equation*}
\begin{split}
D_p^{(k)}=\,&\sum_{n=0}^k\fra{\t_p^{2n}}{N}\int_0^1 .. \int_0^{s_{2n}} e^{-s_{2n+1} T} \Big[ \t_p\cN_+b_{-p}^*+\sum_{q\in\retp}\t_qb_q^*a_{-q}^*a_p\Big] e^{s_{2n+1} T} \, ds_{2n+1} .. ds_1\\
&+\sum_{n=1}^k\fra{\t_p^{2n-2}}{N}\int_0^1 .. \int_0^{s_{2n-1}} e^{-s_{2n} T} \Big[ \t_p^2b_{p}\cN_++\t_p\sum_{q\in\retp}\t_qa_{-p}^*a_{-q}b_q\Big] e^{s_{2n} T}  ds_{2n} ..  ds_1\,,
\end{split}
\end{equation*} Taking the limit for $k\to\io$ in \eqref{eq:Dp_exp_trunc} and its Hermitian conjugate we get the expansions
\bes{\label{eq:Dp_def}
e^{-T} b_p e^T  &= \tg_p b_p+\ts_p b_{-p}^*+D_p,\\
e^{-T} b_p^* e^T &= \tg_p b_p^*+\ts_p b_{-p}+D_p^*,
} where we defined the coefficients $\tg_p = \cosh \tau_p$, $\ts_p = \sinh \tau_p$ and where the remainder operators satisfy, for any $n \in \bZ$, the estimate 
\be\label{eq:Dp_bounds}
\begin{split}
\| (\cN_+ &+1)^{n/2} D_p \xi \| \\ \le & \, \fra{Ce^{\norm{\t}_\io}}{N}\int_0^1\Bigg(\abs{\t_p}\norm{(\cN_++1)^{(n+3)/2}e^{sT} \xi}+\norm{\t}_1\norm{a_p(\cN_++1)^{(n+2)/2} e^{sT}\xi}\Bigg)ds\\
	\le &CN^{-1+\ka /4}\int_0^1\Bigg(\abs{\t_p}\norm{(\cN_++1)^{(n+3)/2}e^{sT}\xi}+N^{\a+\ka}\norm{a_p(\cN_++1)^{(n+2)/2} e^{sT}\xi}\Bigg)ds,
\end{split}
\ee  
Observe that, from Lemma \ref{lem:bounds_tau} and from the bounds (\ref{eq:point-tau1}), (\ref{eq:point-tau2}),  (\ref{eq:point-tau3}),  
\begin{equation}\label{eq:tstg_low}
\abs{\tg_p},\abs{\ts_p}\le  \frac{C N^{\ka /4}}{|p|^{1/2}}  \qquad\text{for }\abs p \le N^{\ka/2},
\end{equation} whereas $\abs{\tg_p}\le C$ if $|p| > N^{\kappa/2}$ and 
\begin{equation}\label{eq:tstg_high}
\abs{\ts_p} \leq C |G_p|/ F_p \leq C \left\{ \begin{array}{ll} N^\kappa / p^2  \qquad &\text{if $N^{\kappa/2} \leq |p| \leq N^\alpha$} \\ N^{2\kappa} / |p|^4\qquad &\text{if $|p| \geq N^\alpha$}  \end{array} \right. 
\end{equation}
 
 We can now study the action of $e^T$ on the quadratic operator $\cQ_{\cJ_N}$, defined in (\ref{eq:CJN_QJN}). 
\begin{lemma}\label{lem:diagonalization}
Assume \eqref{eq:conditions_parameters_JN}. Then we have 
	\be\label{eq:diagonalizationQJN}
	e^{-T} \cQ_{\cJ_N} e^T =-\fra{1}{2}\sum_{p\in\retp}\left[F_p-\sqrt{F_p^2-G_p^2}\right]+\sum_{p\in\retp}\sqrt{F_p^2-G_p^2}\,a_p^*a_p+ \cE_{\cQ_{\cJ_N}}\,,
	\ee with the bound 
	\bes{\label{eq:diagonalization_error_bound}
		\pm\cE_{\cQ_{\cJ_N}}\le CN^{-1+2\a + 7\kappa/2} (\cK + N^{\alpha+2\kappa}) (\cN_+ + N^{3\kappa/2}) 
	}
\end{lemma}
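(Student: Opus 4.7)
The strategy is to compute $e^{-T}\cQ_{\cJ_N}e^T$ by applying the expansion \eqref{eq:Dp_def} to every $b_p^{\#}$ appearing in (\ref{eq:CJN_QJN}). Writing, for each $p$,
\[ e^{-T} b_p^* b_p e^T = (\tg_p b_p^* + \ts_p b_{-p} + D_p^*)(\tg_p b_p + \ts_p b_{-p}^* + D_p)\]
and analogously for $b_p^* b_{-p}^*$ and $b_p b_{-p}$, and using that $\tau_p = \tau_{-p}$ (since $F_p, G_p$ are even in $p$), one collects the leading operator contributions. After normal-ordering via \eqref{eq:bpCCR}, the coefficient of $b_p^* b_p$ becomes $F_p(\tg_p^2+\ts_p^2)+2G_p\tg_p\ts_p = F_p\cosh(2\tau_p)+G_p\sinh(2\tau_p)$, and the coefficient of $b_p^*b_{-p}^*$ becomes $F_p\tg_p\ts_p+\tfrac12 G_p(\tg_p^2+\ts_p^2) = \tfrac12[F_p\sinh(2\tau_p)+G_p\cosh(2\tau_p)]$. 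The choice \eqref{eq:tau_def} of $\tau$ makes the off-diagonal coefficient vanish and the diagonal one equal to $\sqrt{F_p^2-G_p^2}$, while the $b_{-p}b_{-p}^*$ term produces, through the CCR, the constant $\tfrac12(\sqrt{F_p^2-G_p^2}-F_p)$ for each $p$. These algebraic identities give exactly the main terms on the right-hand side of \eqref{eq:diagonalizationQJN}, with $b_p^*b_p$ instead of $a_p^*a_p$.

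The error $\cE_{\cQ_{\cJ_N}}$ collects three types of contributions. First, the replacement $b_p^*b_p\leftrightarrow a_p^*a_p$ produces remainders proportional to $\sqrt{F_p^2-G_p^2}\,a_p^*a_p\cN_+/N$, which are controlled using $\sqrt{F_p^2-G_p^2}\le C(p^2+N^\ka)$ and part (i) of Lemma~\ref{lem:BtauNumberkinetic}. Second, the CCR corrections in products of generalized $b$-operators (terms with $\cN_+/N$ and $N^{-1}a_q^*a_p$ from \eqref{eq:bpCCR}) yield contributions bounded analogously. Third, and most delicate, are the cross-terms involving the remainders $D_p,D_p^*$: schematically, we must estimate
\[ \sum_{p\in\retp} F_p\,\tg_p\,\langle\xi, b_p^* D_p\xi\rangle\, , \qquad \sum_{p\in\retp} G_p\,\tg_p\,\langle\xi, b_p^* D_{-p}^*\xi\rangle,\]
together with all analogous pieces and the quadratic terms in $D_p$. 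Using Cauchy--Schwarz, the bound \eqref{eq:Dp_bounds} on $D_p$, and splitting the sum over $p$ according to whether $|p|\le N^{\ka/2}$ (where \eqref{eq:tstg_low} applies) or $|p|>N^{\ka/2}$ (where $|\tg_p|\le C$ and \eqref{eq:tstg_high} applies), every such cross term reduces to controlling expressions of the form $\sum_p |\tau_p|\,F_p\,\|(\cN_++1)^{m/2}b_p\xi\|\cdot\|(\cN_++1)^{n/2}e^{sT}\xi\|/N$, which one evaluates using $F_p\le C(p^2+N^\ka)$, the $\ell^1$ and $\ell^2$ bounds on $\tau$ from Lemma~\ref{lem:bounds_tau}, and parts (i)--(ii) of Lemma~\ref{lem:BtauNumberkinetic} to transport $(\cN_++1)^j$ and $\cK$ through the Bogoliubov transformation.

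The main technical obstacle is keeping track of the many $N^\ka$ factors that accumulate in this last step. The point is that $e^T$ can create as many as $N^{3\ka/2}$ excitations, $\|\tau\|_\io$ grows like $\log N^{\ka/4}$ (producing an overall factor $e^{\|\tau\|_\io}\sim N^{\ka/4}$ in \eqref{eq:Dp_bounds}), and $\|\tau\|_1\le CN^{\a+\ka}$ enters through the second term of \eqref{eq:Dp_bounds}. Combining the $N^{-1}$ prefactor of $D_p$ with one factor $N^{\a+\ka}$ from $\|\tau\|_1$, the factor $N^{\ka/4}$ from $e^{\|\tau\|_\io}$, a further $N^{\a}$ coming from a sum over $p$ with $F_p$-weights that is at worst of size $N^\a$ per momentum summation, and the $N^{3\ka/2}$ growth of $\cN_+$ under $e^T$, one arrives at the overall prefactor $N^{-1+2\a+7\ka/2}$ in \eqref{eq:diagonalization_error_bound}, multiplying a bound of the form $(\cK+N^{\a+2\ka})(\cN_++N^{3\ka/2})$ that absorbs the growth given by Lemma~\ref{lem:BtauNumberkinetic}. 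All remaining contributions (the CCR corrections and the $b\leftrightarrow a$ replacements) are of the same or smaller order, which yields \eqref{eq:diagonalization_error_bound} and concludes the proof.
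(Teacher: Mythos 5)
Your proposal follows essentially the same route as the paper: expand each $b_p^{\#}$ via \eqref{eq:Dp_def}, use the exact hyperbolic identities to diagonalize the leading quadratic expression (with the off-diagonal coefficient vanishing by the choice \eqref{eq:tau_def} and the constant arising from normal ordering), and collect the errors into CCR/$b$-versus-$a$ corrections plus the cross-terms with $D_p,D_p^*$, which are then bounded by Cauchy--Schwarz with the momentum split at $|p|=N^{\kappa/2}$ and Lemmas \ref{lem:bounds_tau} and \ref{lem:BtauNumberkinetic}. This is exactly the paper's argument (the paper's terms $\mathrm{I}$--$\mathrm{IV}$ and $\cE_2$), and although your heuristic factor-by-factor accounting lands slightly below $7\kappa/2$, the dominant contribution in the paper's detailed estimate (the term with $G_p$ against $D_{-p}$) does yield the stated exponent, so the plan is sound.
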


\begin{proof}
Applying \eqref{eq:Dp_def} to (\ref{eq:CJN_QJN}), we find
	\spl{
		e^{-T} \cQ_{\cJ_N} e^T=\,&\sum_{p\in\retp}F_p(\tg_pb_p^*+\ts_pb_{-p})(\tg_pb_p+\ts_pb_{-p}^*)\\
		&+\fra{1}{2}\sum_{p\in\retp}\Big[G_p(\tg_pb_p^*+\ts_pb_{-p})(\tg_pb_{-p}^*+\ts_pb_p)+\hc\Big]+ \cE_1 \,,
	} with
	\bes{\label{eq:diagon_tdelta}
	 \cE_1=\,&\sum_{p\in\retp}F_pD_p^* e^{-T} b_p e^T +\sum_{p\in\retp}F_p(\tg_pb_p^*+\ts_pb_{-p})D_p\\ &+\fra{1}{2}\sum_{p\in\retp}\Big[G_pD_p^* e^{-T} b_{-p}^* e^T + \hc\Big]+\fra{1}{2}\sum_{p\in\retp}\Big[G_p(\tg_pb_p^*+\ts_pb_{-p})D_{-p}^*+\hc\Big]\\
		=: \,&\text{I}+\text{II}+\text{III}+\text{IV}\,.
	} The coefficients $\tau_p$, defined in \eqref{eq:tau_def}, are exactly chosen to approximately diagonalize 
	\begin{equation*}
	\begin{split}
		\sum_{p\in\retp}&F_p(\tg_pb_p^*+\ts_pb_{-p})(\tg_pb_p+\ts_pb_{-p}^*)+\fra{1}{2}\sum_{p\in\retp}\Big[G_p(\tg_pb_p^*+\ts_pb_{-p})(\tg_pb_{-p}^*+\ts_pb_p)+\hc\Big]\\
	&=-\fra{1}{2}\sum_{p\in\retp}\left[F_p-\sqrt{F_p^2-G_p^2}\right]+\sum_{p\in\retp}\sqrt{F_p^2-G_p^2}\, a_p^* a_p+ \cE_2,
	\end{split}
	\end{equation*} where, from the commutation relations (\ref{eq:bpCCR}), 
\spl{
\cE_2 =&\sum_{p\in\retp}\sqrt{F_p^2-G_p^2}\, (b_p^* b_p - a_p^* a_p)  - \fra{1}{2}\sum_{p\in\retp}\left[F_p-\sqrt{F_p^2-G_p^2}\right]\left(\fra{\cN_+}{N}+\fra{a_{-p}^*a_{-p}}{N}\right).
} 
Observe that, by Lemma \ref{lem:GpFpBounds}, 
	\spl{
		\pm\sum_{p\in\retp} \sqrt{F_p^2-G_p^2}\,\left(b_p^*b_p-a_p^*a_p\right) \le \;& C\sum_{p\in\retp}F_pa_p^* \, \fra{\cN_+}{N} \, a_p \\ \le \; &CN^{-1}\sum_{p\in\retp}(p^2+N^\ka) a_p^* \cN_+ a_p \leq CN^{-1}(\cK+N^\ka) \cN_+ .
	} 
With 
\[ 0 \leq F_p - \sqrt{F_p^2 - G_p^2} \leq G_p^2 / F_p \]
and from the bound in Lemma \ref{lem:GpFpBounds}, we conclude that  
\[
\pm \cE_2 \leq C N^{-1} (\cK + N^\kappa ) \cN_+ + C N^{\alpha+2\kappa -1} \cN_+ .
\]
We still have to show that the four terms on the r.h.s. of \eqref{eq:diagon_tdelta} satisfy (\ref{eq:diagonalizationQJN}). Using Lemma \ref{lem:GpFpBounds}, Lemma \ref{lem:BtauNumberkinetic} and the bound \eqref{eq:Dp_bounds} with $n=-1$, we 
can bound  
 \[ \begin{split} 
|\bra{\xi} \text{I} \ket{\xi} | \le\,&C \sum_{p\in\retp} (p^2+N^\ka)  \| (\cN_+ + 1)^{1/2} e^{-T} b_p e^T \xi \| \|( \cN_+ +1)^{-1/2} D_p \xi \| \\ \leq \; &C N^{-1+\kappa /2} \sum_{p \in \Lambda^*_+} (p^2 + N^\kappa ) \| a_p (\cN_+ + N^{3\kappa/2} )^{1/2}  e^T \xi \| \\ &\hspace{1cm} \times  \int_0^1 \left[ |\tau_p| \| (\cN_+ + 1) e^{sT} \xi \| + N^{\alpha+\kappa} \| a_p (\cN_+ + 1)^{1/2} e^{sT} \xi \| \right] ds \\ \leq \; &C N^{-1+\alpha+7\kappa/2} \| (\cK + N^{\alpha+2\ka})^{1/2} (\cN_+ + N^{3\kappa/2})^{1/2} \xi \|^2 
\end{split} \]
To estimate the second term on the r.h.s. of \eqref{eq:diagon_tdelta} we split $\text{II} = \text{II}_1 + \text{II}_2$, where $\text{II}_1$ contains the sum over $|p| \leq N^{\kappa/2}$ and $\text{II}_2$ the sum over $|p| > N^{\kappa/2}$. With \eqref{eq:tstg_low} and \eqref{eq:tstg_high} we estimate
\[ \begin{split} 
|\bra{\xi} \text{II}_1 \ket{\xi}| \leq \; &C N^{-1+\alpha+9\kappa/2}  \| (\cN_+ + N^{3\kappa/2} ) \xi \|^2 \end{split} \]
and 
\[ \begin{split} 
|\bra{\xi} \text{II}_2 \ket{\xi}| \leq \; &C N^{-1+\kappa/4} \sum_{|p| > N^{\kappa/2}} p^2  \left[ \| a_p (\cN_+ +1)^{1/2} \xi \| + |\tau_p| \| (\cN_+ + 1) \xi \| \right]  \\ &\hspace{1cm} \times  \int_0^1 \left[ |\tau_p| \| (\cN_+ + 1) e^{sT} \xi \| + N^{\alpha+\kappa} \| a_p (\cN_+ + 1)^{1/2} e^{sT} \xi \| \right] ds \\
\leq \; &C N^{-1+3\alpha /2+ 13 \kappa /4}   \| (\cK + N^{\alpha+2\kappa})^{1/2} (\cN_+ + N^{3\kappa/2})^{1/2}  \xi \|^2
\end{split} \]
Noting that, by Lemma \ref{lem:GpFpBounds}, $\norm{G}_2 \le CN^{\ka+3\a/2}$, $\norm{G/\abs \cdot}_2\le CN^{\ka+\a/2}$, we further get 
	\spl{
		| \expec{\x}{\text{III}} | \le\,& C N^{-1+5\ka /4}\sum_{p\in\retp}\abs{G_p} \| (\cN_+ + N^{3\kappa/2} ) \x \| \\
		&\hspace{1cm}\times\int_0^1\Bigg(\abs{\t_p}\norm{(\cN_++1)e^{sT}\xi}+ N^{\alpha+\kappa} \norm{a_p (\cN_++1)^{1/2}e^{sT}\xi}\Bigg) ds\\
		 \le\, & CN^{-1+3\a/2+17\kappa/4}\expec{\x}{(\cK+N^{\a+2\ka}) (\cN+ N^{3\ka/2})}
	}
Finally, applying \eqref{eq:Dp_bounds} with $n=-2$, we find
\[ \begin{split}  | \bra{\x} &{\text{IV}} \ket{\xi} | \\ \le \; &C \| (\cN_+ + 1) \xi \| \sum_{p\in\retp} \abs{G_p} \| (\cN_+ + 1)^{-1} D_{-p} (\tg_p b_p + \ts_p b^*_{-p}) \xi \| \\
\leq \; &N^{-1+\alpha+5\kappa/4}  \| (\cN_+ + 1) \xi \| \sum_{p \in \L^*_+} |G_p|  \int_0^1 ds \, \| (\cN_+ + 1)^{1/2} e^{sT} (\tg_p b_p + \ts_p b^*_{-p}) \xi \|   \\
\leq \; &N^{-1+\alpha+3\kappa/2}  \| (\cN_+ + 1) \xi \| \sum_{p \in \L^*_+} |G_p| \| (\tg_p b_p + \ts_p b^*_{-p}) (\cN_+ + N^{3\kappa/2})^{1/2}  \xi \| 
\end{split} \]
Dividing the sum into the domains $|p| < N^{\kappa/2}$ and $|p| > N^{\kappa/2}$, and using the bounds (\ref{eq:tstg_low}), $\norm{G/\abs \cdot}_2\le CN^{\ka+\a/2}$ and Lemma \ref{lem:bounds_tau}, we find 
\[ | \expec{\x}{\text{IV}} | \le C N^{-1+2\alpha+7\kappa/2}   \| \cK^{1/2} (\cN_+ + N^{3\kappa/2})^{1/2}  \xi \|^2 \]
Combined with the previous bounds, this implies (\ref{eq:diagonalization_error_bound}). 
\end{proof}

We define now our final excitation Hamiltonian 
\[
\cM_N = e^{-T} \cJ_N e^T, \]and we introduce the notation 
\be\label{eq:EBog_notation}
E_{Bog}=\fra{1}{2}\sum_{p\in\retp}\left[\sqrt{|p|^4+16\pi\fa_0N^\ka p^2}-p^2-8\pi\fa_0N^\ka+\fra{(8\pi\fa_0N^\ka)^2}{2p^2}\right] 
\ee 
\begin{prop}\label{prop:MN_Def}
Assume $\ka,\a,\b$ satisfy the conditions \eqref{eq:conditions_parameters_JN}. Then \[
\begin{split} 
	\cM_N = \; &4\pi\fa_0N^\ka(N-1) + e_\Lambda(\fa_0N^{\ka})^2+ E_{Bog} \\ &+ \sum_{p\in\retp}\sqrt{|p|^4+16\pi\fa_0N^\ka |p|^2}\,a_p^*a_p + \cV_N +\cE_{\cM_N}, \end{split} \]
with $e_\Lambda$ as defined in (\ref{eq:eLambda}) and where the error term is such that
\begin{equation}\label{eq:EMN_bound_1}
\begin{split}  \pm e^T \cE_{\cM_N} e^{-T} \leq & \; C (N^{-1+3\ka+2\a}+N^{3\ka-\a} ) \\ &+  
C  (N^{-1/2 + \alpha + 7\kappa/2} + N^{-1+2\alpha + 11\kappa/2}) (\cK+ N^{\alpha+2\kappa}) ( \cN_+ + N^{3\kappa/2}) \\ &+ 
CN^{-\fra{1}{2}+\fra{5}{2}\ka+\a}(\cH_N+\cN_+^2+1)(\cN_++1)+CN^{\frac{\ka}2-\frac{\b}2}(\cK+1)(\cN_++1) \,. \end{split}
\end{equation} 
Moreover, for every $\delta > 0$, we find 
\begin{equation}\label{eq:EMN_bound_2}
\begin{split}
\pm \cE_{\cM_N}\leq \; & C (N^{-1+3\ka+2\a}+N^{3\ka-\a} ) + CN^{-\fra{1}{2}+ 7\ka+\a}\cKK\cNN^2 \\ &+ C( N^{\kappa-\fra\beta2}+N^{\fra{7\ka}2-\fra\a2}) \cNN + N^{\fra32\ka-2\b}(\cK+N^{\a+2\ka}) + \delta \cV_N \\ &+C \delta N^{-1/2+\alpha + 7\kappa/2} \cKK \cNN + C\delta^{-1} N^{\fra{3}{2}\kappa} \cNN.
\end{split}
\end{equation}
\end{prop}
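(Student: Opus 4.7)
\textit{Proof proposal.} The natural strategy is to conjugate the decomposition $\cJ_N = C_{\cG_N} + \cQ_{\cJ_N} + \cV_N + \cE_{\cJ_N}$ from Proposition~\ref{prop:JN} term by term. The constant $C_{\cG_N}$ is unaffected, the quadratic part $\cQ_{\cJ_N}$ is handled by Lemma~\ref{lem:diagonalization}, and $e^{-T}\cV_N e^T = \cV_N + \cE_{\cV_N}$ is handled by Lemma~\ref{lem:BtauNumberkinetic}$\,iii)$. The remaining piece $e^{-T}\cE_{\cJ_N}e^T$ is controlled by combining the bounds on $\cE_{\cJ_N}$ from Proposition~\ref{prop:JN} with the growth estimates of Lemma~\ref{lem:BtauNumberkinetic}$\,i),ii)$, noting that the latter hold for both signs of the parameter $s$. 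Collecting everything, we will find
\[
\cM_N = \Big[ C_{\cG_N} - \tfrac12\!\sum_{p\in\retp}\!\big(F_p-\sqrt{F_p^2-G_p^2}\big)\Big] + \sum_{p\in\retp}\sqrt{F_p^2-G_p^2}\,a_p^*a_p + \cV_N + \cE_{\cM_N},
\]
and the work reduces to two tasks: (a) identifying the bracketed $c$-number with $4\pi\fa_0 N^\ka(N-1)+e_\L(\fa_0 N^\ka)^2+E_{Bog}$ up to an acceptable error, and replacing $\sqrt{F_p^2-G_p^2}$ by $\sqrt{|p|^4+16\pi\fa_0N^\ka|p|^2}$ in the quadratic part; and (b) packaging the operator errors to match (\ref{eq:EMN_bound_1}) and (\ref{eq:EMN_bound_2}).

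For the identification of $\sqrt{F_p^2-G_p^2}$, the key algebraic observation is that $F_p+G_p=(\g_p+\s_p)^2[p^2+2N^\ka(\wh V(\cdot/N^{1-\ka})\star\wh f_N)_p]$ and $F_p-G_p=(\g_p-\s_p)^2 p^2$, so that $(\g_p^2-\s_p^2)^2=1$ gives
\[
F_p^2-G_p^2 = p^2\big[p^2+2N^\ka(\wh V(\cdot/N^{1-\ka})\star\wh f_N)_p\big].
\]
Using Lemma~\ref{lem:properties_scattering_function} and the scattering equation \eqref{eq:scattering_fN_fourier}, we expand $N^\ka(\wh V(\cdot/N^{1-\ka})\star\wh f_N)_p = 8\pi\fa_0 N^\ka + r_N(p)$ with $r_N(p)$ small enough so that the difference between $\sqrt{F_p^2-G_p^2}$ and $\sqrt{|p|^4+16\pi\fa_0 N^\ka |p|^2}$, summed against $a_p^*a_p$, is controlled by a small multiple of $\cK+N^{\alpha+2\kappa}$ plus a number operator. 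The $c$-number identification is the more delicate part: one expands $C_{\cG_N}$ using \eqref{eq:scattering_eta} and \eqref{eq:approx_8pia_0} to isolate the leading $4\pi\fa_0 N^\ka(N-1)$, then collects the surviving divergent sums with those coming from $-\tfrac12\sum(F_p-\sqrt{F_p^2-G_p^2})$. Adding and subtracting the counterterm $(8\pi\fa_0 N^\ka)^2/(2p^2)$ regularises the Bogoliubov integrand and produces $E_{Bog}$, while the remaining formally divergent tail $-\tfrac14(8\pi\fa_0 N^\ka)^2\sum_p 1/p^2$ combines with the quartic-in-$\eta$ pieces of $C_{\cG_N}$ to yield a convergent box-dependent constant proportional to $(\fa_0 N^\ka)^2$, identified as $e_\Lambda(\fa_0 N^\ka)^2$ via the representation \eqref{eq:eLambda} (this step parallels the analogous computation in \cite{BBCS3} for $\kappa=0$, with the scale $\fa_0$ replaced by $\fa_0 N^\ka$).

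For the error bounds, note that $\cE_{\cM_N}$ is a sum of: the numerical remainder from the two identifications above (of order $N^{3\kappa-\alpha}$ and similar), the operator remainder $\cE_{\cQ_{\cJ_N}}$ from Lemma~\ref{lem:diagonalization}, the remainder $\cE_{\cV_N}$ from Lemma~\ref{lem:BtauNumberkinetic}$\,iii)$, and the conjugated error $e^{-T}\cE_{\cJ_N}e^T$. To obtain \eqref{eq:EMN_bound_1}, write $e^T \cE_{\cM_N} e^{-T}$: this converts $e^{-T}\cE_{\cJ_N}e^T$ back to $\cE_{\cJ_N}$ itself, to which the first bound in Proposition~\ref{prop:JN} applies directly, producing the last line of \eqref{eq:EMN_bound_1}. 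For the remaining terms (which start out sandwiched by $e^{-T}$ and $e^T$), conjugation by $e^T$ on the outside cancels the $e^{-T}$ and leaves bounds of the form $(\cK+N^{\a+2\ka})(\cN_++N^{3\ka/2})$ times the prefactors in \eqref{eq:diagonalization_error_bound} and \eqref{eq:VBtau_bound}. To obtain \eqref{eq:EMN_bound_2}, we instead bound $\cE_{\cM_N}$ directly: this requires conjugating all the error estimates by $e^{\pm T}$, using Lemma~\ref{lem:BtauNumberkinetic}$\,i),ii)$ to absorb the resulting $N^{\ka/2}$-type prefactors into the $N^{3\ka/2}$ shifts, and applying the second bound \eqref{eq:EJN_bounds2} on $\cE_{\cJ_N}$ (which has the $\delta\cV_N$ and $\delta^{-1}N^\ka(\cN_++1)$ splitting).

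The main obstacle is the last step. Because $\norm{\tau}_2\sim N^{3\ka/4}$ and $\norm{\tau}_\infty\sim \log N^{\ka/4}$ are \emph{not} uniformly bounded, the generalised Bogoliubov transformation $e^T$ creates $O(N^{3\ka/2})$ excitations, so every conjugation of $\cN_+$ or $\cK\cN_+$ picks up polynomial-in-$N^\ka$ losses from Lemma~\ref{lem:BtauNumberkinetic}. The delicate accounting between these losses and the small factors in the original error estimates \eqref{eq:GN_error_bounds}, \eqref{eq:EJN_bounds}, \eqref{eq:EJN_bounds2}, \eqref{eq:diagonalization_error_bound}, \eqref{eq:VBtau_bound} is exactly what forces the restriction to very small $\ka$ via \eqref{eq:conditions_parameters_JN}, and verifying that after all cancellations the resulting prefactors are precisely those displayed in \eqref{eq:EMN_bound_1}--\eqref{eq:EMN_bound_2} is the bookkeeping that constitutes the bulk of the proof.
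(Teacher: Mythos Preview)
Your proposal follows the same architecture as the paper's proof: conjugate the decomposition of $\cJ_N$ term by term, diagonalize $\cQ_{\cJ_N}$ via Lemma~\ref{lem:diagonalization}, handle $\cV_N$ via Lemma~\ref{lem:BtauNumberkinetic}$\,iii)$, and then identify the constant and the dispersion relation. The algebraic identity $F_p^2-G_p^2 = p^2[p^2+2N^\ka(\wh V(\cdot/N^{1-\ka})\star\wh f_N)_p]$ and the outline of the $c$-number identification (scattering equation, counterterm, $e_\Lambda$) are all in line with what the paper does, though the paper carries out the constant computation in considerably more detail over several displayed equations.

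There is one point where your description is slightly off and would mislead you if you followed it literally. The errors $\cE_{\cQ_{\cJ_N}}$ and $\cE_{\cV_N}$ are \emph{not} ``sandwiched by $e^{-T}$ and $e^T$'': they are defined directly as the remainders in Lemma~\ref{lem:diagonalization} and Lemma~\ref{lem:BtauNumberkinetic}$\,iii)$, and the bounds \eqref{eq:diagonalization_error_bound} and \eqref{eq:VBtau_bound} apply to the operators themselves. Hence, when proving \eqref{eq:EMN_bound_1}, conjugating by $e^T$ does not ``cancel'' anything for $\cE_{\cQ_{\cJ_N}}$; rather one must apply Lemma~\ref{lem:BtauNumberkinetic}$\,i),ii)$ to $e^T\cE_{\cQ_{\cJ_N}}e^{-T}$, and this introduces an extra factor $N^{2\ka}$ (which is why the prefactor in \eqref{eq:EMN_bound_1} is $N^{-1+2\a+11\ka/2}$ rather than the $N^{-1+2\a+7\ka/2}$ of \eqref{eq:diagonalization_error_bound}). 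For $\cE_{\cV_N}$ the paper avoids this loss because Lemma~\ref{lem:BtauNumberkinetic}$\,iii)$ explicitly bounds $e^T\cE_{\cV_N}e^{-T}$ as well. This is a bookkeeping detail rather than a structural gap, and once you correct the picture of where the losses enter, your proposal matches the paper.
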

{\it Remark:} Similarly as remarked after Prop. \ref{prop:JN}, the second estimate (\ref{eq:EMN_bound_2}) for the error term $\cE_{\cM_N}$ will be useful to prove upper bounds on the eigenvalues of $\cM_N$.

\begin{proof}
From Prop.  \ref{prop:JN} and Lemma \ref{lem:diagonalization}, we obtain that
\begin{equation*}
\begin{split} \cM_N = \; &C_{\cJ_N}  - \fra{1}{2}\sum_{p\in\retp}\left[F_p-\sqrt{F_p^2-G_p^2}\right] +\sum_{p\in\retp}\sqrt{F_p^2-G_p^2}\,a_p^*a_p + \cV_N  \\ &+ \cE_{\cV_N} + \cE_{\cQ_{\cJ_N}} + e^{-T} \cE_{\cJ_N} e^T \end{split} \end{equation*} 
where $\cE_{\cJ_N}$, $\cE_{\cQ_{\cJ_N}}$ satisfy the bounds (\ref{eq:EJN_bounds}), (\ref{eq:diagonalization_error_bound}) while $\cE_{\cV_N}$ and $e^T \cE_{\cV_N} e^{-T}$ satisfy  (\ref{eq:VBtau_bound}). Using Lemma \ref{lem:BtauNumberkinetic} to control the action of $e^T$ on $\cE_{\cQ_{\cJ_N}}$, we find that
\begin{equation}\label{eq:deltaeps}
\begin{split} 
\pm e^T &\left[ \cE_{\cV_N} +  \cE_{\cQ_{\cJ_N}} + e^{-T} \cE_{\cJ_N} \e^{T} \right] e^{-T} \\  \leq \; &
C (N^{-1/2 + \alpha + 7\kappa/2} + N^{-1+2\alpha + 11\kappa/2}) (\cK+ N^{\alpha+2\kappa}) ( \cN_+ + N^{3\kappa/2}) \\ &+ 
 CN^{-\fra{1}{2}+\fra{5}{2}\ka+\a}(\cH_N+\cN_+^2+1)(\cN_++1) +C N^{\frac{\ka}2-\frac{\b}2}(\cK+1)(\cN_++1).
\end{split} \end{equation} 
Using (\ref{eq:EJN_bounds2}), instead of (\ref{eq:EJN_bounds}), to control $\cE_{\cJ_N}$, and applying Lemma \ref{lem:BtauNumberkinetic} (together with the estimate (\ref{eq:V<KN})) 
to control the action of $e^T$ on $\cE_{\cJ_N}$, we also find
\begin{equation}\label{eq:deltaeps2}
\begin{split} 
\pm  &\left[ \cE_{\cV_N} +  \cE_{\cQ_{\cJ_N}} + e^{-T} \cE_{\cJ_N} \e^{T} \right] \\ \leq \; &CN^{-\fra{1}{2}+ 7\ka+\a}\cKK\cNN^2 \\
&+ C (N^{\kappa-\fra\beta2}+N^{\fra{7\ka}2-\fra\a2}) \cNN + N^{\fra32\ka-2\b}(\cK+N^{\a+2\ka})+ \delta \cV_N \\ &+C \delta N^{-1/2+\alpha + 7\kappa/2} \cKK \cNN + C\delta^{-1} N^{\fra{3}{2}\kappa} \cNN.
\end{split}
\end{equation}
From (\ref{eq:F_p}), (\ref{eq:G_p}), we obtain 
\begin{equation*}
		\sqrt{F_p^2-G_p^2} =
		\sqrt{|p|^4+2p^2N^\ka\convo{\wh f_N}_p},
	\end{equation*} 
and with the bound $| (\wh V(\cdot/N^{1-\ka})\star \wh f_N)_p-(\wh V(\cdot/N^{1-\ka})\star \wh f_N)_0 | \le C|p|/N^{1-\ka}$ and the approximation \eqref{eq:approx_8pia_0}, we find 	
	\be \label{eq:CGM_approx_convo}
	\abs{N^\ka(\wh V(\cdot/N^{1-\ka})\star\wh f_N)_p-8\pi\fa_0N^\ka}\le C N^{2\ka-1}\abs p.
	\ee 
Thus, 
\begin{equation}\label{eq:MN-Q}  \sum_{p \in \L^*_+} \sqrt{F_p^2-G_p^2}\,a_p^*a_p = \sum_{p \in \L^*_+} \sqrt{|p|^4 + 16 \pi \frak{a}_0 N^\kappa p^2}\,a_p^*a_p + \cE_1 \end{equation} 
with $\pm \cE_1 \leq C N^{-1+2\kappa} \cK$ and, from Lemma \ref{lem:BtauNumberkinetic}, 
\begin{equation*}
\pm e^T \cE_1 e^{-T} \leq C N^{-1+5\kappa/2}  (\cK + N^{\alpha+2\kappa}).
\end{equation*} 

Let us now turn our attention to the constant term \begin{equation*}
 C_{M_N} := C_{\cJ_N}  - \frac{1}{2} \sum_{p\in\retp} \big[ F_p-\sqrt{F_p^2-G_p^2} \big]. \end{equation*}   
With the notation introduced in (\ref{eq:eLambda}) and in (\ref{eq:EBog_notation}), we will show that 
\begin{equation}\label{eq:const-claim} C_{\cM_N} =  4\pi\fa_0N^\ka(N-1)+ E_{Bog} + e_\Lambda (\frak{a}_0 N^\kappa)^2 + \cO(N^{-1+3\ka+2\a}+N^{3\ka-\a}). \end{equation}
Combined with (\ref{eq:deltaeps}), with (\ref{eq:deltaeps2}) and with the bounds for the error $\delta_1$ introduced in (\ref{eq:MN-Q}), (\ref{eq:const-claim}) shows (\ref{eq:EMN_bound_1}) and (\ref{eq:EMN_bound_2}) and completes therefore the proof of Prop. \ref{prop:MN_Def}. 
To show (\ref{eq:const-claim}), we first observe that, from (\ref{eq:CGN_def}), (\ref{eq:CJN_QJN}), (\ref{eq:F_p}) and (\ref{eq:G_p}),  
\spl{
		C_{\cM_N}=\,&\fra{N-1}{2}N^\ka \wh V(0)+\sum_{p\in P_H}\left[p^2\s_p^2+N^\ka\wh V(p/N^{1-\ka})\left(\s_p\g_p+\s_p^2\right)\right]\\
		&+\fra{1}{N}\sum_{p\in P_H}\left[p^2\eta_p^2+\fra{N^\ka}{2N}(\wh V(\cdot/N^{1-\ka})\star\h)_p\h_p\right]\\
		&+\fra{N^\ka}{2N}\sum_{p,q\in P_H}\wh V((p-q)/N^{1-\ka})\g_q\s_q\g_p\s_p-\fra{1}{N}\sum_{u\in\retp}\s_u^2\sum_{p\in\ret}N^\ka\wh V(p/N^{1-\ka})\h_p\\
		&+\fra{1}{2}\sum_{p\in\retp}\Bigg[\sqrt{|p|^4+2p^2N^\ka(\wh V(\cdot/N^{1-\ka})\star\wh f_N)_p}- (\g_p^2+\s_p^2) p^2\\
		& \hspace{4cm} - (\g_p+\s_p)^2 N^\ka(\wh V(\cdot/N^{1-\ka})\star\wh f_N)_p\Bigg].
	} 
Recalling that $\g_p = 1$ and $\s_p = 0$ for $p \in P_H^c$, we find that $p^2 \s_p^2 - (1/2) (\g_p^2 + \s_p^2) p^2 = -p^2 /2$, for all $p \in \Lambda^*_+$. Using the bound 
	\[
	\sum_{p\in\retp}\wh | V((p-q)/N^{1-\ka})| |\eta_p| \le CN,
	\]  we find  	
\[ \fra{1}{N}\sum_{u\in\retp}\s_u^2\sum_{p\in\ret}N^\ka\abs{\wh V(p/N^{1-\ka})\h_p} \le C N^{3\ka-\a} \]
as well as
	\spl{
		\sum_{p\in P_H}N^\ka&\wh V(p/N^{1-\ka})\left(\s_p\g_p+\s_p^2\right) -\fra{1}{2}\sum_{p\in\retp} (\g_p+\s_p)^2 N^\ka(\wh V(\cdot/N^{1-\ka})\star\wh f_N)_p\\
		=\,&-\fra{1}{2}\sum_{p\in\retp}N^\ka(\wh V(\cdot/N^{1-\ka})\star\wh f_N)_p-\fra{N^\ka}{N}\sum_{p\in P_H}(\g_p\s_p+\s_p^2)(\wh V(\cdot/N^{1-\ka})\star\h)_p\\
		=\,&-\fra{1}{2}\sum_{p\in\retp}N^\ka(\wh V(\cdot/N^{1-\ka})\star\wh f_N)_p-\fra{N^\ka}{N}\sum_{p\in P_H}\g_p\s_p(\wh V(\cdot/N^{1-\ka})\star\h)_p+ \cO(N^{3\ka-\a})\,.
	} 
Thus, we have 	
	\bes{ \label{eq:CMN_step2}
		C_{\cM_N}=\,&\fra{N-1}{2}N^\ka \wh V(0)-\fra{N^\ka}{N}\sum_{p\in P_H}\g_p\s_p(\wh V(\cdot/N^{1-\ka})\star\h)_p\\
		&+\fra{1}{N}\sum_{p\in P_H}\left[p^2\eta_p^2+\fra{N^\ka}{2N}(\wh V(\cdot/N^{1-\ka})\star\h)_p\h_p\right]-\sum_{p\in\retp}\fra{N^{2\ka}(\wh V(\cdot/N^{1-\ka})\star\wh f_N)^2}{4p^2}\\
		&+\fra{N^\ka}{2N}\sum_{p,q\in P_H}\wh V((p-q)/N^{1-\ka})\g_q\s_q\g_p\s_p+E_{Bog,N}+\cO(N^{3\ka-\a})\,,
	} where we defined	
	\spl{
		E_{Bog,N}=\fra{1}{2}\sum_{p\in\retp}\Bigg[&\sqrt{|p|^4+2p^2N^\ka(\wh V(\cdot/N^{1-\ka})\star\wh f_N)_p}-p^2\\
		&-N^\ka(\wh V(\cdot/N^{1-\ka})\star\wh f_N)_p+\fra{N^{2\ka}(\wh V(\cdot/N^{1-\ka})\star\wh f_N)^2}{2p^2}\Bigg]=\fra{1}{2}\sum_{p\in\retp}e_{p, N}\,.
	} To compare $E_{Bog,N}$ with its limiting value $E_{Bog}$, we first compare the summands $e_{p,N}$ in $E_{Bog,N}$ with the corresponding summands 
\[
e_p=\sqrt{\abs p^4+16\pi\fa_0N^\ka\abs p^2}-\abs{p}^2-8\pi\fa_0N^\ka+\fra{(8\pi\fa_0N^\ka)^2}{2\abs p^2}
\] in $E_{Bog}$. On one hand, Taylor expanding the square root we see that	
	\[
	\abs{e_{p,N}},\,\abs{e_p}\le CN^{6\ka}\abs{p}^{-4}\,,
	\] which yields	
	\be\label{eq:EBog_compare_1}
	\fra{1}{2} \sum_{\abs p>N} |e_{p,N}-e_p| \le CN^{6\ka-1}\,,
	\ee for a constant $C$ independent of $p, N$. On the other hand, \eqref{eq:CGM_approx_convo} implies, expanding once again the square roots in $e_p, e_{p,N}$,	
	\be\label{eq:EBog_compare_2}\abs{e_p-e_{p,N}}\le \fra{CN^{6\ka}}{\abs p^4}\abs{(8\pi\fa_0)^3-(\wh V(\cdot/N^{1-\ka})\star\wh f_N)_p^3}\le\fra{CN^{8\ka-1}}{\abs p^3}\,.\ee	
	Combining \eqref{eq:EBog_compare_1} and \eqref{eq:EBog_compare_2} we get, for $N$ large enough,	
	\begin{equation}\label{eq:CMN_step2_EBog}
	\abs{E_{Bog}-E_{Bog, N}}\le CN^{8\ka-1}\log N \leq   CN^{-1+10\ka}.\end{equation}
	We now analyze the remaining terms on the right-hand side of \eqref{eq:CMN_step2}. Using the scattering equation \eqref{eq:scattering_eta}, the bound \eqref{eq:chi_f_bound} 
	and the approximation \eqref{eq:approx_8pia_0} we find
	\bes{\label{eq:CMN_step2_orderNka}
		-\fra{N^\ka}{2}\wh V(0)&+\fra{1}{N}\sum_{p\in P_H}\left[p^2\eta_p^2+\fra{N^\ka}{2N}(\wh V(\cdot/N^{1-\ka})\star\h)_p\h_p\right]\\
		=\,&	-\fra{N^\ka}{2}\wh V(0)-\fra{N^\ka}{2N}\sum_{p\in\ret}\wh V(p/N^{1-\ka})\h_p+\cO(N^{2\ka+\a-1})\\
		=\,&-\fra{1}{2}N^\ka(\wh V(\cdot/N^{1-\ka})\star \wh f_N)_0+\cO(N^{2\ka+\a-1})\\
		=\,&-4\pi N^\ka\fa_0+\cO(N^{2\ka+\alpha-1}),
	} with the remainder $\cO(N^{2\ka+\alpha-1})$ arising from the missing low momenta in the sum on the first line. Next, we combine the second and the fifth term  on the r.h.s. of  \eqref{eq:CMN_step2}. To this end, we write 
\begin{equation}\label{eq:trick_gpspgqsq_step1}
\begin{split}
-\fra{N^\ka}{N}\sum_{p\in P_H}\g_p\s_p&(\wh V(\cdot/N^{1-\ka})\star\h)_p+\fra{N^\ka}{2N}\sum_{p,q\in P_H}\wh V((p-q)/N^{1-\ka})\g_q\s_q\g_p\s_p\\
=&\fra{N^\ka}{N}\sum_{p\in P_H}\sum_{q\in P_H}\wh V((p-q)/N^{1-\ka}) \Big[ \fra{1}{2}\g_p\s_p\g_q\s_q-\g_p\s_p\h_q \Big] \\
&-\fra{N^\ka}{N}\sum_{p\in P_H}\sum_{q\in P_H^c}\wh V((p-q)/N^{1-\ka})\g_p\s_p\h_q.
\end{split}
\end{equation}
To deal with the first term, we write 
\[
\begin{split}
\fra{1}{2}\g_p\s_p&\g_q\s_q-\g_p\s_p\h_q\\
=&\fra{1}{2}(\g_p\s_p-\h_p+\h_p)(\g_q\s_q-\h_q+\h_q)-\Big[(\g_p\s_p-\h_p)\h_q+\h_p\h_q\Big]\\=&\fra{1}{2}(\g_p\s_p-\h_p)(\g_q\s_q-\h_q)+\fra{1}{2}\Big[ (\g_q\s_q-\h_q)\h_p- (\g_p\s_p-\h_p)\h_q\Big]-\fra{1}{2}\h_p\h_q.
\end{split}
\]
With the bound $\abs{\g_p\s_p-\h_p}\le C\abs{p}^{-5}$ (and noticing, exchanging $p$ and $q$, that the contribution of the terms in square brackets vanishes), we obtain 
\begin{equation*}
\begin{split}
\fra{N^\ka}{N}\sum_{p\in P_H}\sum_{q\in P_H}\wh V(&(p-q)/N^{1-\ka})\Big[ \fra{1}{2}\g_p\s_p\g_q\s_q-\g_p\s_p\h_q\Big] \\
=&-\fra{N^\ka}{2N}\sum_{p\in P_H}\sum_{q\in P_H}\wh V((p-q)/N^{1-\ka})\h_p\h_q +\cO(N^{-1+\ka}).
\end{split}
\end{equation*} Inserting this into \eqref{eq:trick_gpspgqsq_step1} we find (recall $f_N = 1 + N^{-1} \eta$) 	
\spl{\label{eq:trick_gpspgqsq_step2}
-\fra{N^\ka}{N}\sum_{p\in P_H}\g_p\s_p&(\wh V(\cdot/N^{1-\ka})\star\h)_p+\fra{N^\ka}{2N}\sum_{p,q\in P_H}\wh V((p-q)/N^{1-\ka})\g_q\s_q\g_p\s_p\\
		=&-\fra{N^\ka}{2}\sum_{p\in P_H}\left[(\wh V(\cdot/N^{1-\ka})\star\wh f_N)_p-\wh V(p/N^{1-\ka})\right]\h_p\\
		&-\fra{N^\ka}{N}\sum_{p\in P_H}\sum_{q\in P_H^c}\wh V((p-q)/N^{1-\ka})\left[(\g_p\s_p-\h_p)\h_q+\fra{1}{2}\h_p\h_q\right]+ \cO(N^{-1+\ka}),
	}
In the second term, the contribution proportional to $(\g_p \s_p - \eta_p) \eta_q$ is small, of order $N^{-1+\ka}$; the other contribution leads, adding terms with $p \in P_H^c$ (producing an error of 
order $N^{-1+3\ka +2\alpha}$) to: 
\bes{\label{eq:CMN_step2_convolutions}
-\fra{N^\ka}{N}\sum_{p\in P_H}\g_p\s_p&(\wh V(\cdot/N^{1-\ka})\star\h)_p+\fra{N^\ka}{2N}\sum_{p,q\in P_H}\wh V((p-q)/N^{1-\ka})\g_q\s_q\g_p\s_p\\
		=-\fra{N^\ka}{2}\sum_{p\in P_H}&\left[(\wh V(\cdot/N^{1-\ka})\star\wh f_N)_p-\wh V(p/N^{1-\ka})\right]\h_p\\
		&-\fra{N^\ka}{2N}\sum_{q\in P_H^c} \convo{\h}_q\h_q+ \cO(N^{-1+3\ka+2\a})\,.
	} The second term in the square bracket can be combined with the leading term in \eqref{eq:CMN_step2}. In fact, we find 
\[ \begin{split} 
		\fra{N^{1+\ka}}{2}\Big[ \wh V(0)&+\fra{1}{N}\sum_{p\in P_H}\wh V(p/N^{1-\ka})\h_p\Big]\\
		=\,&\fra{N^{1+\ka}}{2}\convo{\wh f_N}_0-\fra{N^\ka}{2}\sum_{p\in P_H^c}\hat V(p/N^{1-\ka})\h_p\\
		=\,&4\pi\fa_0N^{1+\ka}+\fra{N^{1+\ka}}{2}\left[\convo{\wh f_N}_0-8\pi\fa_0\right] -\fra{N^\ka}{2}\sum_{p\in P_H^c}\widehat V(p/N^{1-\ka})\h_p \, .
\end{split} \] 
The last term can be combined with the other terms on the r.h.s. of \eqref{eq:CMN_step2_convolutions}. We find 
\[ \begin{split}  -\frac{N^\kappa}{2} \sum_{p \in P_H} (\widehat{V} (\cdot / N^{1-\ka}) &* \widehat{f}_N)_p \h_p - \frac{N^\ka}{2N} \sum_{p \in P_H^c} (\widehat{V} (\cdot / N^{1-\ka}) * \eta)_p \eta_p - \frac{N^\kappa}{2} \sum_{p \in P_H^c} \widehat{V} (p/N^{1-\ka}) \eta_p \\ &= -\frac{N^\ka}{2} \sum_{p \in \Lambda^*_+} (\widehat{V} (\cdot / N^{1-\ka}) * \widehat{f}_N)_p \eta_p - \frac{N^\ka}{2} (\widehat{V} (\cdot / N^{1-\ka}) * \widehat{f}_N)_0 \eta_0. \end{split}\]

Combining the last equations with \eqref{eq:CMN_step2_EBog}, \eqref{eq:CMN_step2_orderNka}, \eqref{eq:CMN_step2_convolutions}, Eq. \eqref{eq:CMN_step2} implies that 
\spl {
		\cC_{\cM_N}=&4\pi\fa_0N^{\ka}(N-1)+E_{Bog}\\
		&-\fra{1}{2}\sum_{p\in\retp}N^\ka\convo{\wh f_N}_p\h_p-\sum_{p\in\retp}\fra{N^{2\ka}\convo{\wh f_N}_p^2}{4p^2}\\
		&-\fra{1}{2}N^\ka\convo{\wh f_N}_0\h_0+\fra{N^{1+\ka}}{2}\left[\convo{\wh f_N}_0-8\pi\fa_0\right]\\
		&+ \cO(N^{-1+3\ka+2\a}+N^{3\ka-\a}).
	} Using again the scattering equation \eqref{eq:scattering_fN_fourier} and the approximation \eqref{eq:approx_8pia_0}, we get 
	\spl{
		C_{\cM_N}=&-4\pi\fa_0N^\ka(N-1)+E_{Bog} -\sum_{p\in\retp}\fra{N^{3-\ka}\lambda_l\convo{\wh f_N}_p}{2p^2} \, (\wh \c_\ell \star\wh f_N)_p\\
		&+\fra{N^{1+\ka}}{2}\left[\convo{\wh f_N}_0-8\pi\fa_0\right]-4\pi\fa_0N^\ka\h_0 + \cO(N^{-1+3\ka+2\a}+N^{3\ka-\a}).
	} 
By Lemma \ref{lem:properties_scattering_function}, the definition of $\h_0$ and \eqref{eq:wl_integral} we have 
	\[\begin{split}
	\fra{N^{1+\ka}}{2}\left[\convo{\wh f_N}_0-8\pi\fa_0\right] &=\fra{6\pi\fa_0^2N^{2\ka}}{\ell}+\cO(N^{-1+3\ka}), \\
	-4\pi\fa_0N^\ka\h_0 &=\fra{8(\pi\fa_0N^\ka)^2}{5} \ell^2+\cO(N^{-1+3\ka})\,,
	\end{split}\] 
and using \eqref{eq:CGM_approx_convo} and the bound $|(\widehat{\chi}_\ell * \eta)_p| \leq C N^\kappa/ p^2$ (see argument before \eqref{eq:chi_f_bound}) we get 
	\spl{
		-\sum_{p\in\retp}&\fra{N^{3-\ka}\convo{\wh f_N}_p}{2p^2}\l_\ell (\wh \c_\ell \star\wh f_N)_p\\
		&=-\fra{3\fa_0}{\ell^3} \sum_{p\in\retp}\fra{N^{2\ka}\convo{\wh f_N}_p}{2p^2}(\wh \c_\ell \star\wh f_N)_p+ \cO(N^{-1+3\ka})\\
		&=-\fra{12\pi\fa_0^2N^{2\ka}}{\ell^3}	\sum_{p\in\retp}\fra{(\wh \c_\ell \star\wh f_N)_p}{p^2}+ \cO(N^{-1+3\ka})\\
		&=-\fra{12\pi\fa_0^2N^{2\ka}}{\ell^3}	\sum_{p\in\retp}\fra{\wh \c_\ell (p)}{p^2}+ \cO(N^{-1+3\ka} ).
	} In conclusion, we have 
	\spl{
		C_{\cM_N}=\,&4\pi\fa_0N^\ka(N-1)+E_{Bog} \\
		&+ 6\pi\fa_0^2N^{2\ka}\left[ \fra{1}{\ell}+\fra{4}{15}\pi \ell^2-\fra{2}{\ell^3}\sum_{p\in\retp}\fra{\wh\c_\ell (p)}{p^2}\right]+ \cO(N^{-1+3\ka+2\a}+N^{3\ka-\a})\\
		=\,&4\pi\fa_0N^\ka(N-1)+E_{Bog} + 6\pi\fa_0^2N^{2\ka}I_\ell+ \cO(N^{-1+3\ka+2\a}+N^{3\ka-\a})\,.
	}  It is possible to show that	
	\[
	I_\ell =\fra{4\pi \ell^2}{3}-\fra{8\pi}{3}\lim_{M\to\infty}\sum_{\substack{p\in\retp\\\abs{p_1},\abs{p_2},\abs{p_3}\le M}}\fra{\cos(\ell \abs p)}{\abs p^2}
	\] and, in particular, that the limit on the r.h.s. exists for every $\ell \in(0,1/2)$. Furthermore, this quantity is in fact independent of the particular choice of $\ell \in (0;1/2)$. This is proved in \cite[Lemma 5.4]{BBCS4} and, choosing for instance $\ell = 1/(2\pi)$, it implies (\ref{eq:const-claim}) (recall the definition (\ref{eq:eLambda}) of $e_\Lambda$).
\end{proof}

\section{Proof of Theorem \ref{thm:main}}\label{sec:main_thm}
In this section we prove our main result. We assume here that the parameters $\kappa, \alpha, \beta > 0$ satisfy \eqref{eq:conditions_parameters_JN} and also the conditions $6\kappa < \alpha < 1/2 -3\kappa/2$ and $\kappa \in [0;1/44)$, so that Theorem \ref{thm:mainABS} holds true. 
 
We set \[E_{\cM_N}=4\pi\fa_0N^\ka(N-1)+E_{Bog}+e_\L(\fa_0N^\ka)^2\] and we consider the diagonal operator 
\[\cD=\sum_{p\in\retp}\e_p\,a_p^*a_p,\]
with $\e_p=\sqrt{|p|^4+16\pi\fa_0N^\ka |p|^2}$. With this notation, Proposition \ref{prop:MN_Def} reads 
\begin{equation}\label{eq:final_MN}
\begin{split}
	\cM_N-E_{\cM_N}= \cD+ \cV_N + \cE_{\cM_N} 
\end{split}
\end{equation} 
where the error term $\cE_{\cM_N}$ satisfies the bounds \eqref{eq:EMN_bound_1} and  \eqref{eq:EMN_bound_2}.

To prove a lower bound on the ground state energy of $\cM_N$ (and later on its excited eigenvalues), we need a-priori estimates on the number and energy of excitations in low-energy states. Suppose 
that $\xi_N \in \cF_+^{\leq N}$, with $\| \xi_N \| = 1$ and 
\begin{equation}\label{eq:lowen} \xi_N = \chi ( \cM_N - E_N \leq  N^{\kappa/2 + \mu} ) \xi_N \end{equation} 
for some $\mu > 0$. Recalling that $\cM_N = e^{-T} e^{-A} e^{-B} U H_N U^* e^B e^A e^T$ and 
defining $\psi_N = U_N^* e^B e^A e^T \xi_N \in L^2_s (\L^N)$, we find $\| \psi_N \| =1$ and 
\[ \psi_N = \chi (H_N - E_N \leq N^{\kappa/2+\mu} ) \psi_N \]
From Theorem \ref{thm:mainABS} we conclude therefore that 
\[ \begin{split}  \langle e^{-B} U_N \psi_N , (\cH_N + 1) (\cN_+ +1) e^{-B} U_N \psi_N \rangle &\leq C\Big[ N^{21\kappa+\eps}\zeta^2 + N^{44\kappa+2\eps}\Big]^{2}
\\ 
\langle e^{-B} U_N \psi_N , (\cN_+ +1)^3 e^{-B} U_N \psi_N \rangle &\leq C \Big[ N^{21\kappa + 2\mu +\eps} + N^{44\kappa+2\eps}\Big]^{3} 
\end{split} \]
With $e^{-B} U_N \psi_N = e^A e^T \xi_N$ and with Lemma  \ref{lm:growNA} we arrive at  
\[ \begin{split}  \langle e^T \xi_N ,  (\cH_N + 1) (\cN_+ +1) e^T \xi_N \rangle &\leq C\Big[ N^{21\kappa + 2\mu +\eps} + N^{44\kappa+2\eps}\Big]^{3} \\
\langle e^T \xi_N ,  (\cN_+ +1)^3 e^T \xi_N \rangle &\leq C\Big[ N^{21\kappa+2\mu +\eps} + N^{44\kappa+2\eps}\Big]^{3} \end{split}
\]
If $\kappa, \mu > 0$ are small enough, we can use this and \eqref{eq:EMN_bound_1} with an appropriate choice of $\alpha, \beta >0$ (making sure, in particular, that \eqref{eq:conditions_parameters_JN} holds true) to show that there exists $\eps > 0$ with 
\begin{equation}\label{eq:lower-eps} \langle \xi_N, \cE_{\cM_N} \xi_N \rangle \leq C N^{-\eps} 
\end{equation} 
for all $\xi_N \in \cF_+^{\leq N}$ satisfying (\ref{eq:lowen}). 

If $\xi_N$ denotes now the ground state of the operator $\cM_N$, we have $\cM_N \xi_N = E_N \xi_N$ and (\ref{eq:lowen}) is certainly satisfied. With (\ref{eq:final_MN}) and (\ref{eq:lower-eps}), we obtain 
\begin{equation}\label{eq:low-gs} E_N \geq E_{\cM_N} -C N^{-\eps} \end{equation} 
if $\kappa , \eps > 0$ are small enough.  Testing (\ref{eq:final_MN}) on the vacuum and using the bound (\ref{eq:EMN_bound_2}) to bound the vacuum expectation of $\cE_{\cM_N}$ (choosing again $\alpha, \beta > 0$ appropriately), we also find the upper bound
\begin{equation}\label{eq:up-gs} E_N \leq E_{\cM_N} + C N^{-\eps} \end{equation} 
if $\kappa, \eps > 0$ are small enough. We conclude that $|E_N - E_{\cM_N}| \leq C N^{-\eps}$. 

Let us now study excitations. We denote by $\lambda_j$ the eigenvalues of $\cM_N-E_{N}$ and by $\n_j$ those of $\cD$, indexed in increasing order. Assume $\lambda_j \leq N^{\kappa/2+ \mu}$. Then, with the notation $P_\zeta = \chi (\cM_N - E_N \leq N^{\kappa/2+\mu}) \cF^{\leq N}_+$ for the spectral subspace of $\cM_N$, we have (applying the min-max principle for the eigenvalues of $\cM_N - E_N$)
\[ \begin{split} 
	\l_j=\; & \inf_{\substack{Y\subset \cF^{\leq N}_+ \\ \dim (Y)=j}} \sup_{\substack{\xi_N \in Y\\\norm{\xi_N}=1}}\expec{\xi_N}{(\cM_N-E_{N})} \\ =\; &\inf_{\substack{Y\subset P_\zeta \\ \dim (Y)=j}} \sup_{\substack{\xi_N\in Y\\\norm{\xi_N}=1}}\expec{\xi_N}{(\cM_N-E_{N})} 
	\\ \geq \; &\inf_{\substack{Y\subset P_\zeta \\ \dim (Y)=j}} \sup_{\substack{\xi_N\in Y\\\norm{\xi_N}=1}}\expec{\xi_N}{(\cM_N-E_{\cM_N})} - C N^{-\eps} 
\end{split} \]
if $\kappa , \eps > 0$ are small enough. Here we used the upper bound (\ref{eq:up-gs}) for $E_N$. With (\ref{eq:final_MN}) and using the positivity of $\cV_N$, we obtain
\[ \begin{split} 
	\l_j \geq \; &\inf_{\substack{Y\subset P_\zeta \\ \dim (Y)=j}} \sup_{\substack{\xi_N\in Y\\\norm{\xi_N}=1}}\expec{\xi_N}{(\cD + \cE_{\cM_N})} - C N^{-\eps} 
\end{split} \]
From (\ref{eq:lower-eps}), we conclude that
\begin{equation}\label{eq:low-fin}  \begin{split} 
	\l_j \geq \; &\inf_{\substack{Y\subset P_\zeta \\ \dim (Y)=j}} \sup_{\substack{\xi_N\in Y\\\norm{\xi_N}=1}}\expec{\xi_N}{\cD} - C N^{-\eps}   
	\\ \geq \; &\inf_{\substack{Y\subset \cF_+^{\leq N} \\ \dim (Y)=j}} \sup_{\substack{\xi_N\in Y\\\norm{\xi_N}=1}}\expec{\xi_N}{\cD} - C N^{-\eps} = \nu_j - C N^{-\eps} \end{split} \end{equation} 
if $\kappa , \eps > 0$ are small enough. 

Finally, we need to establish upper bounds for the eigenvalues $\lambda_j$. To this end, we are going to use eigenvectors of $\cD$ as trial states. Notice that the eigenvalues of $\cD$ have the form 
\begin{equation}\label{eq:nuj} 
\n_j=\sum_{p\in\retp}n_p^{(j)}\e_p\,,
\end{equation} 
for a sequence $\{ n_p^{(j)} \}_{p \in \Lambda^*_+}$ with $n_p^{(j)} \in\bN$. An eigenvector of $\cD$ associated with the eigenvalue (\ref{eq:nuj}) has the form  
\be\label{eq:defthj}
\th_j=C_j\prod_{p\in\retp}(a_p^*)^{n_p^{(j)}}\O\,
\ee for appropriate normalization constants $C_j$ ($\O$ denotes as usual the vacuum in Fock space). If $\n_j$ is degenerate, the choice of $\th_j$ is not unique, but in the following we work exclusively with eigenvectors of the form (\ref{eq:defthj}). We are going to need the following lemma, which controls the expectation of $\cV_N$ on spaces spanned by the eigenvectors $\theta_j$. 
\begin{lemma}\label{lem:VNBoundD}
	Let $\th_1,...,\th_m$ be normalized eigenvectors for $\cD$ as in \eqref{eq:defthj}, corresponding to its first $m$ eigenvalues $\n_1< ...\le\n_m<  N^{\ka/2 + \mu}$. Then there exists a constant $C>0$ such that
	\[
	\expec{\xi_N}{\cV_N}\le CN^{-1+9\kappa/4 + 5\mu/2}
	\] for all $\xi_N \in\mathrm{Span}(\th_1,...,\th_m)$ with $\| \xi_N \|=1$.
\end{lemma}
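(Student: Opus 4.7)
The plan is to exploit the joint eigenvector structure of the $\theta_j$ under $\cK$ and $\cN_+$, together with the sparsity of $\cV_N$ in the $\theta_j$-basis due to momentum conservation. First, since each $\theta_j = C_j \prod_{p\in\retp}(a_p^*)^{n_p^{(j)}}\Omega$ is a plane-wave Fock state, it is simultaneously an eigenvector of $\cK$ and $\cN_+$ with eigenvalues $K_j = \sum_p |p|^2 n_p^{(j)}$ and $N_j = \sum_p n_p^{(j)}$, and $\theta_1,\ldots,\theta_m$ are orthonormal. The dispersion $\e_p = \sqrt{|p|^4+16\pi\fa_0 N^\ka |p|^2}$ satisfies $\e_p \ge |p|^2$ and, using $|p|\ge 2\pi$, $\e_p \ge cN^{\ka/2}$; hence $K_j \le \nu_j \le N^{\ka/2+\mu}$ and $N_j \le c^{-1}\nu_j N^{-\ka/2} \le CN^\mu$ for every $j\le m$.

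A first attempt would apply the operator bound $\cV_N \le C\cK\cN_+$ from Lemma~\ref{lem:V<KN}. Since $\cK$ and $\cN_+$ are jointly diagonalized by the orthonormal family $\{\theta_j\}$, for $\xi_N = \sum_j c_j\theta_j$ with $\|\xi_N\|=1$ this yields
\[
\langle \xi_N, \cV_N \xi_N\rangle \le C\sum_j |c_j|^2 K_j N_j \le CN^{\ka/2+2\mu}.
\]
This is weaker than the target, since it discards the intrinsic $N^{-1+\ka}$ prefactor of $\cV_N$. To recover it, I would pass to the position-space representation
\[ \cV_N = \tfrac{1}{2}\int V_N(x-y)\,\check a_x^*\check a_y^*\check a_y\check a_x\, dx\, dy \]
with $V_N(z)=N^{2-2\ka}V(N^{1-\ka}z)$ and $\|V_N\|_1 = N^{-1+\ka}\|V\|_1$. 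Evaluating the two-body density $\rho_j^{(2)}(x,y) = \langle\theta_j,\check a_x^*\check a_y^*\check a_y\check a_x\theta_j\rangle$ of a plane-wave Fock state directly (it is nearly constant at the scale $N^{-(1-\ka)}$ of the interaction range) gives the diagonal estimate $\langle\theta_j, \cV_N\theta_j\rangle \le CN^{-1+\ka}N_j^2 \le CN^{-1+\ka+2\mu}$.

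The core of the argument is to control the off-diagonal matrix elements $M_{ij} = \langle \theta_i, \cV_N \theta_j\rangle$. By the normal-ordered structure of $\cV_N$, $M_{ij}$ vanishes unless $\theta_j$ is obtained from $\theta_i$ by moving two particles via a momentum-conserving transition $(p_1,p_2)\to(p_3,p_4)$ with $p_1+p_2=p_3+p_4$; in that case $|M_{ij}| \le CN^{-1+\ka+2\mu}$. For fixed $\theta_i$, the number of connected $\theta_j$ in the trial set is bounded by the number of source pairs $(p_1,p_2)$ with both momenta populated in $\theta_i$, which is at most $N_i^2\le N^{2\mu}$, times the number of admissible target momenta $p_3$, which (since they must satisfy $\e_{p_3}\le CN^{\ka/2+\mu}$, i.e.\ $|p_3|\le CN^{\ka/4+\mu/2}$) is of order $N^{3\ka/4+3\mu/2}$. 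A Schur-type row-sum estimate on $M$, combined with a refined Cauchy-Schwarz that exploits the $\sqrt{n_{p_1}n_{p_2}}$-structure of the matrix elements and the boundedness of $\widehat V$, then yields $\|M\|_{\mathrm{op}} \le CN^{-1+9\ka/4+5\mu/2}$.

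The main obstacle is that the dimension $m$ of the trial subspace, consisting of configurations with up to $N^\mu$ particles distributed in roughly $N^{3\ka/4+3\mu/2}$ momentum slots, is super-polynomial in $N$; a naive trace bound $\sum_j M_{jj}\le m\cdot\max_j M_{jj}$ is therefore useless. One must instead exploit the two-particle sparsity of $\cV_N$ in the $\theta_j$-basis, combined with the intrinsic $N^{-1+\ka}$ prefactor, to obtain an operator-norm bound matching the target exponents.
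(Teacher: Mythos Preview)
Your approach could probably be made to work, but it is far more complicated than needed, and the central Schur row-sum estimate is only asserted, not carried out (your final paragraph still presents it as an obstacle to be overcome). The key simplification you miss: you correctly note that each $\theta_j$ is supported on momenta with $|p|<N^{\ka/4+\mu/2}$ (since $\e_p>p^2$ and $\nu_j<N^{\ka/2+\mu}$), but you only use this to count admissible target momenta in your off-diagonal analysis. Because $a_p$ is linear, the vanishing $a_p\theta_j=0$ for all $j\le m$ and $|p|\ge N^{\ka/4+\mu/2}$ immediately gives $a_p\xi_N=0$ for \emph{every} $\xi_N\in\mathrm{Span}(\theta_1,\ldots,\theta_m)$.

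This removes any need to expand $\xi_N$ in the $\theta_j$-basis or to split into diagonal and off-diagonal parts. In the momentum-space form of $\cV_N$, all four momenta $p,q,p+r,q+r$ are confined to the ball of radius $N^{\ka/4+\mu/2}$, so in particular $|r|<2N^{\ka/4+\mu/2}$. A single Cauchy--Schwarz (first on the inner product, then on the $p,q$-sum) gives
\[
\langle\xi_N,\cV_N\xi_N\rangle \le \frac{CN^\ka}{N}\sum_{|r|<2N^{\ka/4+\mu/2}}\|(\cN_++1)\xi_N\|^2 \le CN^{-1+7\ka/4+3\mu/2}\|(\cN_++1)\xi_N\|^2,
\]
and the last factor is bounded on the span via $\cN_+\le C\cD$ and $\cD\le N^{\ka/2+\mu}$ there (using, as you already observed, that the $\theta_j$ are joint eigenvectors of $\cN_+$ and $\cD$). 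This is the paper's argument; the super-polynomial dimension $m$ never enters, and no Schur or matrix-element analysis is required.
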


\begin{proof}
The proof is very similar to the proof of \cite[Lemma 6.1]{BBCS4}. Since $\e_p> p^2$, we have $a_p\th_j=0$ if $\abs{p}\ge N^{\ka/4+\mu/2}$ and $j\in \{1,...,m\}$, hence also for all $\xi_N \in\mathrm{Span}(\th_1,...,\th_m)$. This implies with $\cN_+\le C\cD$ and $[\cN_+,\cD]=0$ that
	\spl{
		\expec{\xi_N}{\cV_N}=\,&\fra{N^\ka}{2N}\sum_{\substack{p,q,r\in\retp,r\neq-p,-q\\\abs{p},\abs{q},\abs{r}<2 N^{\ka/4 + \mu/2}}} \wh V(r/N^{1-\ka})\norm{a_{p+r}a_q\xi_N}\norm{a_{q+r}a_p\xi_N}\\
		\le\,&\fra{CN^\ka}{N} \sum_{\substack{r\in\retp\\ \abs r \le N^{\ka/4+\mu/2}}}\norm{(\cN_++1)\xi_N}^2\\ \le \, &CN^{-1+ 7\kappa/4+ 3\mu/2} \norm{(\cD+1)\xi_N}^2\leq C N^{-1+ 9\kappa/4+ 5\mu/2} } \end{proof}
	
Suppose now that $\lambda_j \leq N^{\kappa/2+\mu}$. Notice that, from the lower bound (\ref{eq:low-fin}), this also implies that $\nu_j \leq C N^{\kappa/2+\mu}$. By the min-max principle, we have 
\[ \begin{split}  \l_j = \,& \inf_{ \substack{Y \subset \cF_+^{\leq N}  \\ \dim (Y)= j} } \sup_{\substack{\xi_N\in Y\\\norm{\xi_N}=1}}\expec{\xi_N}{(\cM_N-E_N)} \le\,\sup_{\substack{\xi_N\in\mathrm{Span}(\th_1,...,\th_j)\\ \norm{\xi_N}=1}}\expec{\xi_N}{(\cM_N-E_{N})} \\ \le\, &\sup_{\substack{\xi_N\in\mathrm{Span}(\th_1,...,\th_j)\\ \norm{\xi_N}=1}}\expec{\xi_N}{(\cM_N-E_{\cM_N})} + N^{-\eps} 
\end{split} \]
by (\ref{eq:low-gs}). With (\ref{eq:final_MN}) and Lemma \ref{lem:VNBoundD}, we obtain 
\begin{equation*}
\begin{split} 
 \l_j  \leq \, & \n_j + \sup_{\substack{\xi_N\in\mathrm{Span}(\th_1,...,\th_j)\\\norm{\xi_N}=1}}\Bigg[\expec{\xi_N}{\cV_N}+\expec{\xi_N}{\cE_{\cM_N}}\Bigg]\\
	\le\,&\n_j+C N^{-\eps} +\sup_{\substack{\xi_N\in\mathrm{Span}(\th_1,...,\th_j)\\\norm{\xi_N}=1}} \expec{\xi_N}{\cE_{\cM_N}}. \end{split} \end{equation*} 
if $\kappa,\mu, \eps > 0$ are small enough. Using now (\ref{eq:EMN_bound_2}) (with an appropriate choice of $\alpha,\beta > 0$), estimating, for $n=0,1,2$, 
\[ \cN_+^{n+1} \leq C \cK \cN^n_+ \leq C \cD^{n+1}  \]  
and observing that $\cD^{n+1} \leq \nu_j^{n+1}  \leq C N^{(\kappa/2 + \mu) (n+1)}$ on $\mathrm{Span}(\th_1,...,\th_j)$, since $\nu_j \leq C N^{\kappa/2 + \mu}$, we obtain that $\langle \xi_N, \cE_{\cM_N} \xi_N \rangle \leq C N^{-\eps}$ for all normalized $\xi_N \in \mathrm{Span}(\th_1,...,\th_j)$. Thus 
\[   \l_j  \leq \nu_j + CN^{-\eps} \] 
if $\kappa,\eps > 0$ are small enough (in (\ref{eq:EMN_bound_2}) we can use, for example, $\delta = N^{4\kappa}$; we can then estimate then expectation of $\delta \cV_N$, on $\mathrm{Span}(\th_1,...,\th_j)$, with Lemma \ref{lem:VNBoundD}). \qed


\appendix



\section{Analysis of $\cG_N$}
\label{sec:quadratic}

In this section, we sketch the proof of Proposition \ref{prop:GN}. The proof goes along the same lines as the one of \cite[Prop. 3.2]{BBCS4}, taking into account the different scaling. We write
\[\cG_N=\cG_N^{(0)}+\cG_N^{(2)}+\cG_N^{(3)}+\cG_N^{(4)}\,,\] with $\cG_N^{(i)}=e^{-B}\cL_N^{(i)}e^{B}$ for $i=0,2,3,4$ and with $B$ as defined in \eqref{eq:B-def}; we analyze these terms individually in the next subsections. First, we need some rough bound to control the growth of operators of the form $(\cH_N+1)(\cN_++1)^j$ under the action of $B$. In the next lemma, as well as in the rest of the section, we assume $\a>2\ka$. 
\begin{lemma}\label{lem:B_bounds_HN}
For every $j\in\bN$ there exists a constant $C$ such that: 
	\bes{\label{eq:B_bounds_HN}
		e^{-B}\cK(\cN_++1)^je^{B}&\le C\cK(\cN_++1)^j+CN^{1+\ka}(\cN_++1)^{j+1}\\
		e^{-B}\cV_N(\cN_++1)^je^{B}&\le C\cV_N(\cN_++1)^j+CN^{1+\ka}(\cN_++1)^{j}\,.
	}
\end{lemma}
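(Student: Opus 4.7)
The proof proceeds by the standard Gronwall strategy adapted to generalized Bogoliubov conjugation, following the pattern of \cite[Lemma 5.1]{BBCS4}, with scaling modifications to accommodate the parameter $\kappa>0$. For $X\in\{\cK,\cV_N\}$ and $\xi\in\cF_+^{\le N}$, the plan is to define
\[
\varphi_X(s) := \bigl\langle \xi, e^{-sB} X (\cN_++1)^j e^{sB} \xi\bigr\rangle, \qquad s\in[0,1],
\]
differentiate in $s$, estimate the resulting expectation of the commutator $[X(\cN_++1)^j,B]$ by $C\varphi_X(s) + C N^{1+\kappa}\langle\xi,(\cN_++1)^{j+1}\xi\rangle$ (respectively by $C\varphi_X(s) + CN^{1+\kappa}\langle\xi,(\cN_++1)^{j}\xi\rangle$ for $X=\cV_N$), and then conclude with Gronwall's lemma together with Lemma \ref{lem:B_bound_N_easy} to eliminate the conjugations on the lower-order remainder.

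The first key step is the expansion $[X(\cN_++1)^j,B] = [X,B](\cN_++1)^j + X\,[(\cN_++1)^j,B]$. For the second piece, since $[\cN_+,b_p^*b_{-p}^*]=2b_p^*b_{-p}^*$ and $[\cN_+,b_pb_{-p}]=-2b_pb_{-p}$, one finds $[(\cN_++1)^j,B] = \sum_{p\in P_H}\eta_p\bigl[ b_p^*b_{-p}^* ((\cN_++3)^j-(\cN_++1)^j) + \mathrm{h.c.}\bigr]$, and the usual Cauchy--Schwarz estimate in $p$ (using $\|\eta_H\|_2\le CN^{\kappa-\alpha/2}$) produces a term absorbed into $\varphi_X(s)$ with lower-order $(\cN_++1)$-powers.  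For the first piece, on $X=\cK$ the commutator is computed explicitly as $[\cK,B] = \sum_{p\in P_H} p^2\eta_p\bigl(b_p^*b_{-p}^* + b_p b_{-p}\bigr)$. I then plan to split $p^2\eta_p = |p|\cdot |p|\eta_p$ and use a Cauchy--Schwarz of the form
\[
\Bigl|\sum_{p\in P_H} p^2\eta_p\,\langle\xi,b_p^*b_{-p}^*\xi\rangle\Bigr| \;\le\; \|\cK^{1/2}\xi\| \Bigl( \sum_{p\in P_H} p^2\eta_p^2\,\|b_{-p}^*\xi\|^2\Bigr)^{1/2},
\]
followed by $\|b_{-p}^*\xi\|^2 \le \|a_{-p}\xi\|^2+\|\xi\|^2$ and $\|p\eta_H\|_2^2\le CN^{1+\kappa}$. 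Together with the smallness $\||p|\eta_H\|_\infty\le CN^{\kappa-\alpha}$, this gives exactly $|\langle\xi,[\cK,B]\xi\rangle|\le C\langle\xi,\cK\xi\rangle + CN^{1+\kappa}\langle\xi,(\cN_++1)\xi\rangle$, which is what is needed for Gronwall.

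The main obstacle is the commutator $[\cV_N,B]$. Working in position space as in \eqref{eq:cV-x} and using $[\check a_x,B]$-type formulas produced by the $b$-commutation relations \eqref{eq:bpCCR}, one obtains expressions of the type $\int dxdy\,N^{2-2\kappa}V(N^{1-\kappa}(x-y))\,\check\eta(x-y)\bigl(\check b_x^*\check b_y^* + \mathrm{h.c.}\bigr)$ and mixed terms like $\int dxdy\,N^{2-2\kappa}V(N^{1-\kappa}(x-y))\,\check b_x^*\check b_y^* a^*(\check\eta_y)\check a_x$. The plan is to control the first by Cauchy--Schwarz exploiting $\|\check\eta\|_\infty\le CN$ and $\|V(N^{1-\kappa}\cdot)\|_1\le CN^{-3+3\kappa}$, producing a bound of the form $C\cV_N^{1/2}(\cN_++1)^{1/2}\cdot N^{(1+\kappa)/2}$, absorbed by $\epsilon\cV_N + C\epsilon^{-1}N^{1+\kappa}(\cN_++1)$; the mixed terms are handled similarly, using $\|\check\eta_y\|_2 = \|\eta\|_2\le CN^\kappa$ and Lemma \ref{lem:action_bogoliubov}-style estimates. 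Since all contributions carry at least one factor of $\check b$, no extra power of $\cN_+$ beyond $(\cN_++1)^j$ is needed (unlike in the $\cK$ case). Finally, to run Gronwall on $\varphi_{\cV_N}(s)$, I would also use $\|\eta_H\|_\infty\to 0$ and bound the second piece $\cV_N[(\cN_++1)^j,B]$ by Cauchy--Schwarz in the same fashion, thereby closing the inequality $\varphi_{\cV_N}'(s)\le C\varphi_{\cV_N}(s) + CN^{1+\kappa}\langle\xi,(\cN_++1)^j\xi\rangle$. Gronwall's inequality then yields the second estimate in \eqref{eq:B_bounds_HN}.
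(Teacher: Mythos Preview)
Your proposal is correct and follows essentially the same approach the paper intends: the paper omits the proof and refers to \cite[Lemma~7.1]{BBCS4} (you cite Lemma~5.1, likely a misnumbering, but the Gronwall-on-$\varphi_X(s)$ strategy is the same), noting only that the scaling modification $\|\eta_H\|_{H^1}^2\le CN^{1+\kappa}$ replaces the factor $N$ from the Gross--Pitaevskii case. This is exactly the source of the $N^{1+\kappa}$ you identify in the $[\cK,B]$ estimate, and your position-space treatment of $[\cV_N,B]$ via $\|\check\eta_H\|_\infty\le CN$ is the standard route.
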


The proof is analogous to that of \cite[Lemma 7.1]{BBCS4} and we omit it. We have to take into account the different scaling, producing the growth $\norm{\eta}_{H^1}^2\leq C N^{1+\ka}$ instead of $N$.

\subsection{Analysis of $\cG_N^{(0)}$}

From \eqref{eq:excitation_hamiltonian} and Lemma \ref{lem:B_bound_N_easy} we immediately obtain 
\be\label{eq:GN0}
\cG_N^{(0)}=\fra{N-1}{2}N^\ka\wh V(0)+\cE_{\cG_N}^{(0)}\,
\ee with 
\be\label{eq:GNO_error}
\pm\cE_{\cG_N}^{(0)}\le CN^{-1+\ka}(\cN_++1)^2\,.
\ee

\subsection{Analysis of $\cG_N^{(2)}$}

First we write $\cL_N^{(2)}=\cK+\cL_N^{(2,V)}$, with 
\[\cL_N^{(2,V)}=\sum_{p\in\retp}N^\ka\wh V(p/N^{1-\ka})\left[b_p^*b_p-\fra{1}{N}a_p^*a_p\right]+\fra{N^\ka}{2}\sum_{p\in\retp}\wh V(p/N^{1-\ka})[b_p^*b_{-p}^*+b_pb_{-p}]\,,\] and we proceed to analyze the terms individually. We start with the kinetic energy.
\begin{lemma}\label{lem:B_kinetic}
	We have
	\bes{\label{eq:B_kinetic}
		e^{-B}\cK e^{B}=\,&\cK+\sum_{p\in\retp}\Bigg[p^2\s_p^2\left(1-\fra{\cN_+}{N}+\fra{1}{N}\right)+2p^2\s_p^2b_p^*b_p+p^2\g_p\s_p(b_p^*b_{-p}^*+\hc)\Bigg]\\
		&+\sum_{p\in\retp}\fra{1}{N}p^2\s_p^2\sum_{q\in\retp}\Bigg[(\g_q^2+\s_q^2)b_q^*b_q+\g_q\s_q(b_q^*b_{-q}^*+\hc)+\s_q^2\Bigg]\\
		&+\sum_{p\in\retp}[p^2\s_pb_{-p}d_p+\hc]+\cE_{\cG_N}^{(\cK)}\,,
	} with
	\be \label{eq:B_kinetic_error}
	\pm\cE_{\cG_N}^{(\cK)}\le N^{-\fra{1}{2}+\ka}(\cK+\cN_+^2+1)(\cN_++1)\,.
	\ee
\end{lemma}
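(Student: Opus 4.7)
I would start from the Duhamel identity
\[
e^{-B}\cK\,e^{B} = \cK + \int_0^1 e^{-sB}\,[\cK,B]\,e^{sB}\,ds
\]
and use $[\cK,b_p^{(*)}] = \pm p^2 b_p^{(*)}$ together with $\eta_{-p}=\eta_p$ and the definition \eqref{eq:B-def} of $B$ to compute
\[
[\cK,B] = \sum_{p\in P_H} p^2 \eta_p\bigl(b_p^* b_{-p}^* + b_p b_{-p}\bigr).
\]

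The core of the argument is to insert the expansion \eqref{eq:dp_def}, applied with the $s$-dependent coefficients $\gamma_p^{(s)}:=\cosh(s\eta_H(p))$, $\sigma_p^{(s)}:=\sinh(s\eta_H(p))$ and remainders $d_p^{(s)}$, into $e^{-sB} b_p^* b_{-p}^* e^{sB}$ and $e^{-sB} b_p b_{-p} e^{sB}$, expand the resulting four-factor products, and bring them to normal order via the approximate CCR \eqref{eq:bpCCR}. Elementary identities like $\int_0^1 \sinh(2s\eta)\,ds = \sinh^2(\eta)/\eta$ then convert the $s$-integrals of $p^2 \eta_p (\gamma_p^{(s)})^2$, $p^2 \eta_p \gamma_p^{(s)} \sigma_p^{(s)}$ and $p^2 \eta_p (\sigma_p^{(s)})^2$ into the coefficients $p^2\sigma_p^2$, $p^2 \gamma_p\sigma_p$ and $2 p^2\sigma_p^2$ appearing in \eqref{eq:B_kinetic}. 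Tracking where each normal-ordering term ends up identifies: the diagonal $2 p^2\sigma_p^2 b_p^* b_p$ and off-diagonal $p^2\gamma_p\sigma_p(b_p^* b_{-p}^*+\hc)$ contributions, coming from the explicit $\gamma,\sigma$ substitutions; the c-number $p^2\sigma_p^2\bigl(1-\cN_+/N+1/N\bigr)$ piece, arising from the $\delta_{p,q}(1-\cN_+/N)$ part of \eqref{eq:bpCCR} used in normal ordering $b_{-p} b_{-p}^*$; the quartic tadpole on the second line of \eqref{eq:B_kinetic}, which emerges (up to negligible remainders) as $\frac{1}{N}\bigl(\sum_p p^2\sigma_p^2\bigr)\,e^{-sB}\bigl(\sum_q b_q^* b_q\bigr) e^{sB}$ re-expanded via \eqref{eq:dp_def} and then averaged in $s$; and finally the explicit cross terms $p^2\sigma_p\,b_{-p} d_p + \hc$, which are kept outside the error because they are only mildly small and will be combined with the cubic contributions $\cG_N^{(3)}$ later in this appendix.

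Every other term generated by the expansion carries either two remainder factors $d^{(s)}$, or a single $d^{(s)}$ multiplied by a coefficient with additional smallness (for instance $\gamma_p - 1$ or $\sigma_p - \eta_p$, controlled via \eqref{eq:sq_gq_bounds}). Such terms are absorbed into $\cE_{\cG_N}^{(\cK)}$ and estimated by combining Lemma \ref{lem:action_bogoliubov} (each $d_p^{(s)}$ provides a factor $1/N$) with Lemma \ref{lem:B_bound_N_easy} (growth of $\cN_+$ under $e^{sB}$) and Lemma \ref{lem:B_bounds_HN} (growth of $\cK(\cN_++1)^j$ under the flow). Using the bounds $|\eta_p|\le C N^\kappa/|p|^2$, $\|\eta_H\|_2 \le C N^{\kappa-\alpha/2}$ and $\|\eta_H\|_{H^1}^2 \le C N^{1+\kappa}$ from \eqref{eq:eta_punctual_bound}--\eqref{eq:etaH_norms}, together with Cauchy--Schwarz in the momentum sums, yields $\pm\,\cE_{\cG_N}^{(\cK)} \le C N^{-1/2+\kappa}(\cK+\cN_+^2+1)(\cN_++1)$, which is \eqref{eq:B_kinetic_error}.

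The main obstacle is the combinatorial bookkeeping needed to isolate the explicit quartic tadpole from the other fourth-order contributions, and the careful tracking of $N^\kappa$ factors needed to confirm that the residual error carries precisely the prefactor $N^{-1/2+\kappa}$. The scaling is tight: the $1/N$ gain from each $d_p$ must be balanced against the weight $\sum_p |p|^2|\eta_p|^2 \sim N^{1+\kappa}$. Apart from the systematic insertion of $N^\kappa$ factors throughout, the argument is a direct adaptation of the one carried out in \cite[Sect.~7]{BBCS4} for the Gross--Pitaevskii scaling $\kappa = 0$; no genuinely new ingredient is required.
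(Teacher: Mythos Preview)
Your approach differs from the paper's, and the sketch has a gap in how the second line of \eqref{eq:B_kinetic} arises. The paper does \emph{not} use Duhamel. It uses $b_p^*b_p = a_p^*(1-\cN_+/N)a_p$ to write
\[
e^{-B}\cK e^{B} = \sum_{p} p^2\, e^{-B}b_p^*b_p e^{B} + \frac{1}{N}\sum_{p,q} p^2\, e^{-B}b_p^*b_q^*b_qb_p e^{B} + \widetilde\cE_1,
\]
and then applies the full expansion \eqref{eq:dp_def} (with the final $\gamma_p,\sigma_p,d_p$, no $s$-dependence) to each factor. The second-line tadpole comes from the quartic piece: only its $\sigma_p^2\, b_{-p}\,D\,b_{-p}^*$ contribution survives the error (with $D=\sum_q e^{-B}b_q^*b_q e^{B}$), and normal-ordering plus re-expanding $D$ gives exactly $\tfrac{1}{N}\sum_p p^2\sigma_p^2\sum_q[(\gamma_q^2+\sigma_q^2)b_q^*b_q+\cdots]$. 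This is also the method of \cite[Sect.~7]{BBCS4}, so your closing remark about adapting that reference actually points to the paper's route rather than to Duhamel.

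In your Duhamel route there is no quartic correction, because $[\cK,B]$ is already expressed through $b$'s. Your description of the tadpole as ``$\tfrac{1}{N}(\sum_p p^2\sigma_p^2)\,e^{-sB}(\sum_q b_q^*b_q)e^{sB}$ \ldots averaged in $s$'' is inconsistent: the coefficient $\sigma_p^2$ only appears \emph{after} the $s$-integration, whereas the conjugation is still $s$-dependent, so the two cannot coexist in one expression. If one expands $e^{-sB}(b_p^*b_{-p}^*+\hc)e^{sB}$ via the $s$-analogue of \eqref{eq:dp_def} and integrates the purely-$(\gamma^{(s)},\sigma^{(s)})$ part, one recovers the first line of \eqref{eq:B_kinetic} \emph{without} the $+1/N$, and no tadpole appears. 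The missing pieces---the $+1/N$, the tadpole, and the passage from $s$-integrated $d^{(s)}$-terms to the full-$d_p$ third line---are all hidden in the single-$d^{(s)}$ contributions, which are therefore \emph{not} all absorbable into the error: the tadpole's constant part alone is of order $N^{3\kappa-\alpha}$, larger than $N^{-1/2+\kappa}$ for $\alpha<1/2$. Extracting the stated form \eqref{eq:B_kinetic} from Duhamel would require a further expansion of the $d^{(s)}$-operators that you do not carry out.
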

\begin{proof}
	We first write $\cK$ in terms of the operators $b_p,b_p^*$ in order to apply the expansion \eqref{eq:dp_def}. We find, with $b_p^* b_p = a_p^* (1- \cN_+ / N) a_p$, 
	\bes{\label{eq:B_conj_kinetik}
		e^{-B}\cK e^{B}=\sum_{p\in\retp} p^2 \, e^{-B}b_p^*b_p e^B+\fra{1}{N}\sum_{p,q\in\retp} p^2 \, e^{-B} b_p^*b_q^*b_qb_p e^B+\wt \cE_1\,,
	} where, with Lemma \ref{lem:B_bounds_HN}
	\[
	\pm\wt\cE_1\le C N^{-2}  \, e^{-B} \cK(\cN_++1)^2 e^B \le CN^{-1+\ka}\left[\cK(\cN_++1)+(\cN_++1)^3\right]\,.	
	\]
	Using \eqref{eq:dp_def} we compute 
	\spl{
		\sum_{p\in\retp}p^2\, e^{-B} b_p^*b_p e^B=\,&\sum_{p\in\retp}p^2\left[(\g_p^2+\s_p^2)b_p^*b_p+\g_p\s_p(b_p^*b_{-p}^*+b_pb_{-p})+\s_p^2\left(1-\fra{\cN_+}{N}\right)\right]\\
		&+\sum_{p\in\retp}\Big(p^2\s_pb_{-p}^*d_p+\hc\Big)+\wt\cE_2\,,
	} with
	\[
	\wt\cE_2=\sum_{p\in\retp}p^2\left[\g_p(b_p^*d_p+\hc)+d_p^*d_p-\fra{1}{N}\s_p^2a_p^*a_p\right]\,.
	\] With \eqref{eq:sq_gq_bounds}, \eqref{eq:dp_bounds}, we obtain 
	\[
	\pm\wt\cE_2\le N^{(-1+\ka)/2}\left[(\cK+1)(\cN_++1)+(\cN_++1)^3\right]\,.
	\] 
	Next we consider the second term on the r.h.s. of \eqref{eq:B_conj_kinetik}. We first consider the operator 
	\spl{
		D&=\sum_{q\in\retp} e^{-B} b_q^*b_q e^B =\sum_{q\in\retp}\left[(\g_q^2+\s_q^2)b_q^*b_q+\g_q\s_q(b_q^*b_{-q}^*+b_qb_{-q})+\s_q^2\right]+\wt\cE_3\,,	
	} with $\pm\wt\cE_3\le CN^{-1}(\cN_++1)^2$. From Lemma \ref{lem:B_bound_N_easy}, we also have  $\pm D\le C(\cN_++1)$. From these bounds we find 
	\spl{
		\fra{1}{N}\sum_{p,q\in\retp}p^2\, e^{-B} b_p^*b_q^*b_qb_p e^B =\,&\frac{1}{N}\sum_{p\in\retp}p^2(\g_pb_p^*+\s_pb_{-p}+d_p^*)D(\g_pb_p+\s_pb_{-p}^*+d_p)\\
		=\,&\fra{1}{N}\sum_{p\in\retp}p^2\s_p^2b_{p}Db_p^*+\wt\cE_4\\
		=\,&\fra{1}{N}\sum_{p,q\in\retp}p^2\s_p^2(\g_q^2+\s_q^2)b_{p}b_q^*b_qb_p^*+\fra{1}{N}\sum_{p,q\in\retp}p^2\s_p^2\s_q^2b_pb_p^*\\
		&+\fra{1}{N}\sum_{p,q\in\retp}p^2\s_p^2\g_q\s_qb_{p}(b_q^*b_{-q}^*+b_qb_{-q})b_p^*+\wt\cE_5,
	} where 
	\[
	\pm\wt\cE_5\le CN^{-\fra{1}{2}+\ka}(\cK+\cN_+^2+1)(\cN_++1)\,.
	\] Bringing the expression to normal order and bounding some additional errors, we find
\[ \begin{split} 
		\fra{1}{N}\sum_{p,q\in\retp}p^2\, e^{-B} b_p^*b_q^*b_qb_p e^B =\,&\fra{1}{N}\sum_{p,q\in\retp}p^2\s_p^2(\g_q^2+\s_q^2)b_q^*b_q+\fra{1}{N}\sum_{p,q\in\retp}p^2\s_p^2\s_q^2\\
		&+\fra{1}{N}\sum_{p,q\in\retp}p^2\s_p^2\g_q\s_q(b_q^*b_{-q^*}+\hc) +\fra{1}{N}\sum_{p\in\retp}p^2\s_p^2+\wt\cE_6\,, \end{split} \] 
with 	 \[\pm{}\wt\cE_6\le CN^{-\fra{1}{2}+\ka}(\cK+\cN_+^2+1)(\cN_++1)\,.
	\] This concludes the proof of the lemma. 
\end{proof}

We proceed with the potential term $\cG_N^{(2,V)}= e^{-B} \cL_N^{(2,V)} e^B$.
\begin{lemma}\label{lem:B_2V}
	We have
	\bes{\label{eq:B_2V}
		\cG_N^{(2,V)}=\,&N^\ka\sum_{p\in\retp}\wh V(p/N^{1-\ka})\s_p^2+N^\ka\sum_{p\in\retp}\wh V(p/N^{1-\ka})\g_p\s_p(1-\cN_+/N)\\
		&+\fra{N^\ka}{2}\sum_{p\in\retp}\wh V(p/N^{1-\ka})(\g_p+\s_p)^2(2b_p^*b_p+b_p^*b_{-p}^*+b_pb_{-p})\\
		&+\fra{N^\ka}{2}\sum_{p\in\retp}\wh V(p/N^{1-\ka})\Big[(\g_pb_p^*+\s_pb_{-p})d_{-p}^*+d_p^*(\g_pb_{-p}^*+\s_pb_p)\Big]+\hc\\
		&+\cE_{\cG_N}^{(V)}\,,
	} with 
	\be\label{eq:B_2V_error}
	\pm\cE_{\cG_N}^{(V)}\le CN^{-1/2}(\cN_++1)^3\,.
	\ee
\end{lemma}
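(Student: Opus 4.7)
My plan is to follow the same strategy as in Lemma \ref{lem:B_kinetic}: substitute the expansion \eqref{eq:dp_def} for $e^{-B} b_p e^B$ and $e^{-B} b_p^* e^B$ into every quadratic expression in $\cL_N^{(2,V)}$, normal-order using the CCR \eqref{eq:bpCCR}, identify the terms appearing in \eqref{eq:B_2V} via the symmetries $\gamma_{-p}=\gamma_p$, $\sigma_{-p}=\sigma_p$, $\wh V(-p/N^{1-\kappa})=\wh V(p/N^{1-\kappa})$, and absorb everything else into $\cE_{\cG_N}^{(V)}$. Before doing this, I would dispose of the term $-(N^\kappa/N)\sum_p \wh V(p/N^{1-\kappa}) a_p^* a_p$: since $\|\wh V(\cdot/N^{1-\kappa})\|_\infty \le \|V\|_1\le C$, it is bounded by $CN^{-1+\kappa} \cN_+$, and conjugation with $e^B$ (via Lemma \ref{lem:B_bound_N_easy}) keeps it of that size, which fits the error bound \eqref{eq:B_2V_error}.

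For the main piece $\sum_p N^\kappa\wh V(p/N^{1-\kappa})\, e^{-B} b_p^* b_p e^B$, I would multiply out $(\gamma_p b_p^* + \sigma_p b_{-p} + d_p^*)(\gamma_p b_p + \sigma_p b_{-p}^* + d_p)$, move $b_{-p} b_{-p}^*$ into normal order using \eqref{eq:bpCCR} (which produces $\sigma_p^2(1-\cN_+/N)$ plus a controllably small $\sigma_p^2 a_{-p}^* a_{-p}/N$), and relabel $p\to -p$ inside the sum to collect
\[
\sum_p N^\kappa \wh V(p/N^{1-\kappa})\Big[(\gamma_p^2+\sigma_p^2) b_p^* b_p + \gamma_p\sigma_p(b_p^*b_{-p}^* + b_p b_{-p}) + \sigma_p^2(1-\cN_+/N)\Big]
\]
plus the mixed $b$-$d$ terms and $d^*d$ terms. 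For the off-diagonal piece $\tfrac{N^\kappa}{2}\sum_p \wh V(p/N^{1-\kappa})(b_p^* b_{-p}^* + b_p b_{-p})$ I proceed similarly: the expansion yields $\gamma_p^2(b_p^*b_{-p}^* + b_p b_{-p})$ plus $2\gamma_p\sigma_p b_p^* b_p$ (after symmetrization) and a second copy of $\sigma_p\gamma_p(1-\cN_+/N)$, together with mixed $b$-$d$ contributions. Summing the contributions and using $(\gamma_p+\sigma_p)^2=\gamma_p^2+\sigma_p^2+2\gamma_p\sigma_p$ reproduces the stated $(\gamma_p+\sigma_p)^2\,(2b_p^*b_p+b_p^*b_{-p}^*+b_pb_{-p})$ structure and the two $\sigma_p^2$ and $\gamma_p\sigma_p(1-\cN_+/N)$ constants, while the mixed $b$-$d$ pieces are exactly those kept explicitly in \eqref{eq:B_2V}.

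It remains to bound the residual operators---essentially the pure $d_p^* d_p$ products, the normal-ordering corrections of the form $\sigma_p^2 a_{-p}^*a_{-p}/N$, and the contributions where two or more $d$'s appear. For these I would use \eqref{eq:sq_gq_bounds} to control $\sigma_p, \gamma_p-1$ in $\ell^2$ (or pointwise), together with the $d_p$-bounds \eqref{eq:dp_bounds}, which on a vector $\xi$ give $\|d_p\xi\|\le CN^{-1}[|\eta_H(p)|\,\|(\cN_++1)^{3/2}\xi\| + \|\eta\|_2\|b_p(\cN_++1)\xi\|]$. Combined with the uniform bound on $\wh V$ and a Cauchy--Schwarz in $p$, each such term is controlled by $CN^{-1/2}(\cN_++1)^3$, matching \eqref{eq:B_2V_error}.

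The main obstacle is neither any single estimate nor any single cancellation, but the careful bookkeeping: the statement \eqref{eq:B_2V} keeps the mixed $b$-$d$ terms explicit (because they will be needed later, presumably to compensate contributions from $\cG_N^{(3)}$) rather than absorbing them into the error, so I must split the expansion at exactly the right place---keeping all terms with at most one $d_p$ and the right dependence on $(\gamma_p,\sigma_p)$, while recognizing that $d_p^* d_p$, $d_p^* d_{-p}^*$ and the CCR remainders together sum to something controlled by $N^{-1/2}(\cN_++1)^3$. Verifying that the symmetrization under $p\to -p$ indeed yields the $(\gamma_p+\sigma_p)^2$ coefficient, rather than, say, the mixed form $(\gamma_p^2+\sigma_p^2)$ on $b_p^*b_{-p}^*$ and $2\gamma_p\sigma_p$ on $b_p^*b_p$, is the one spot where a careful check is unavoidable.
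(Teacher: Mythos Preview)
Your proposal is correct and follows essentially the same route as the paper's proof: dispose of the $a_p^*a_p$ term via Lemma \ref{lem:B_bound_N_easy}, expand with \eqref{eq:dp_def}, normal-order with \eqref{eq:bpCCR}, keep the mixed $b$--$d$ terms explicit, and bound the $d^*d$ and commutator remainders using \eqref{eq:dp_bounds} and \eqref{eq:sq_gq_bounds}. One small slip in your sketch: the off-diagonal piece $b_p^*b_{-p}^* + b_pb_{-p}$, after conjugation, contributes $(\gamma_p^2+\sigma_p^2)$ (not just $\gamma_p^2$) to the off-diagonal and $2\gamma_p\sigma_p$ to $b_p^*b_p$, which then combine with the diagonal contributions $(\gamma_p^2+\sigma_p^2)b_p^*b_p$ and $\gamma_p\sigma_p(b_p^*b_{-p}^*+b_pb_{-p})$ to produce the $(\gamma_p+\sigma_p)^2$ coefficient---precisely the check you flagged.
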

\begin{proof}
	From the definition of $\cL_N^{(2,V)}$, using Lemma \ref{lem:B_bound_N_easy} and the bounds \eqref{eq:dp_bounds}, we have 
	\[
	\cG_N^{(2,V)}=\fra{N^\ka}{2}\sum_{p\in\retp}\wh V(p/N^{1-\ka})\, e^{-B} [2b_p^*b_p+b_p^*b_{-p}^*+b_pb_{-p}] e^B + \wt \cE_7\,,
	\] with 
	\[
	\pm\wt\cE_7\le N^{-1}(\cN_++1)\,.
	\] Using again the bounds \eqref{eq:dp_bounds} we find 
	\bes{ \label{eq:G2V_Step1}
		\fra{N^\ka}{2}&\sum_{p\in\retp}\wh V(p/N^{1-\ka})\, e^{-B} [2b_p^*b_p+b_p^*b_{-p}^*+b_pb_{-p}] e^B \\
		=N^\ka&\sum_{p\in\retp}\wh V(p/N^{1-\ka})[\g_pb_p^*+\s_pb_{-p}][\g_pb_p+\s_pb_{-p}^*]\\
		+\fra{N^\ka}{2}&\sum_{p\in\retp}\wh V(p/N^{1-\ka})\Big([\g_pb_p^*+\s_pb_{-p}][\g_pb_{-p}^*+\s_pb_p]+\hc\Big)\\
		+\fra{N^\ka}{2}&\sum_{p\in\retp}\wh V(p/N^{1-\ka})\Big([(\g_pb_p^*+\s_pb_{-p})d_{-p}^*+d_p^*(\g_pb_{-p}^*+\s_pb_p)]+\hc\Big)\\
		+\wt \cE_8\,,
	} with
	$
	\pm \wt\cE_8\le N^{-1/2}(\cN_++1)^3\,
	$  (for $\ka<1/2$). Bringing \eqref{eq:G2V_Step1} into normal order, we get
\[ \begin{split} 
		\cG_N^{(2,V)} =\,&N^\ka\sum_{p\in\retp}\wh V(p/N^{1-\ka})\s_p^2+N^\ka\sum_{p\in\retp}\wh V(p/N^{1-\ka})\g_p\s_p(1-\cN_+/N)\\
		&+\fra{N^\ka}{2}\sum_{p\in\retp}\wh V(p/N^{1-\ka})(\g_p+\s_p)^2(2b_p^*b_p+b_p^*b_{-p}^*+b_pb_{-p})\\
		&+N^\ka\sum_{p\in\retp}\wh V(p/N^{1-\ka})\Big([(\g_pb_p^*+\s_pb_{-p})d_{-p}^*+d_p^*(\g_pb_{-p}^*+\s_pb_p)]+\hc\Big)\\
		&+ \wt\cE_7 + \wt\cE_8 + \wt\cE_9\,,
\end{split} \]  with the remainder term 
\[ \wt\cE_9=-\fra{N^\ka}{2N}\sum_{p\in\retp}\wh V(p/N^{-1+\ka})\left[\s_p\g_p( 2a_p^*a_p+a_p^*a_{-p}^*+a_pa_{-p})+2\s_p^2\left(\cN_++a_p^*a_p\right)\right] \] that satisfies $
	\pm\wt\cE_9\le CN^{-1+\ka}(\cN_++1)$. This concludes the proof.	
\end{proof}

\subsection{Analysis of $\cG_N^{(3)}$}
We have 
\[\cG_N^{(3)}=\, e^{-B} \cL_N^{(3)} e^B =\fra{N^\ka}{\sqrt N}\sum_{\substack{p,q\in\retp\\p+q\neq 0}}\wh V(p/N^{1-\ka})\, e^{-B} b_{p+q}^*a_{-p}^*a_q e^B +\hc\,.\]
\begin{lemma}\label{lem:GN3}
	We have  
	\be\label{eq:GN3_Result}
	\cG_N^{(3)}=\fra{N^\ka}{\sqrt N}\sum_{\substack{p,q\in\retp\\p+q\neq 0}}\wh V(p/N^{1-\ka})b_{p+q}^*b_{-p}^*\Big[(\g_qb_q+\s_qb_{-q}^*)+\hc\Big]+\cE_{\cG_N}^{(3)}\,,
	\ee with the estimate 
	
	\be\label{eq:GN3_error_bound}
	\pm\cE_{\cG_N}^{(3)}\le CN^{-1/2+2\ka}(\cV_N+\cN_++1)(\cN_++1) \,.
	\ee
\end{lemma}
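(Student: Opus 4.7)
The starting point is
\[
\cG_N^{(3)} = \frac{N^\ka}{\sqrt{N}} \sum_{\substack{p,q\in\retp\\ p+q \ne 0}} \wh V(p/N^{1-\ka})\, e^{-B} b_{p+q}^* a_{-p}^* a_q\, e^B + \hc,
\]
and my plan is to proceed in two stages. First, I would rewrite the mixed triple $b_{p+q}^* a_{-p}^* a_q$ as a product of three generalized operators $b_{p+q}^* b_{-p}^* b_q$, modulo an error absorbable in $\cE_{\cG_N}^{(3)}$. Using $b_r^* = a_r^* \sqrt{(N-\cN_+)/N}$ together with $[a_{-p}^* a_q, \cN_+] = 0$ for $-p, q \in \retp$, a short direct computation produces
\[
b_{p+q}^* a_{-p}^* a_q - b_{p+q}^* b_{-p}^* b_q = a_{p+q}^* a_{-p}^* a_q\, \sqrt{\fra{N-\cN_+}{N}}\, \fra{\cN_+ - 1}{N}.
\]
Plugged back into $\cG_N^{(3)}$, the remainder carries an extra $\cN_+/N$ factor; a Cauchy--Schwarz estimate in position space (using a representation analogous to \eqref{eq:cV-x}) combined with Lemma \ref{lem:B_bound_N_easy} will then absorb it into $\cE_{\cG_N}^{(3)}$.

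Second, I would conjugate the main term $b_{p+q}^* b_{-p}^* b_q$ by $e^{\pm B}$. Inserting the resolution $e^B e^{-B} = 1$ between each pair of factors and applying \eqref{eq:dp_def} to each gives
\begin{align*}
e^{-B} b_{p+q}^* b_{-p}^* b_q e^B = \; &(\g_{p+q} b_{p+q}^* + \s_{p+q} b_{-p-q} + d_{p+q}^*)(\g_{-p} b_{-p}^* + \s_{-p} b_p + d_{-p}^*) \\
&\times (\g_q b_q + \s_q b_{-q}^* + d_q).
\end{align*}
The contribution $\g_{p+q}\g_{-p}\, b_{p+q}^* b_{-p}^* (\g_q b_q + \s_q b_{-q}^*)$ matches the right-hand side of \eqref{eq:GN3_Result} up to the prefactor $\g_{p+q}\g_{-p}$, and since $|\g_r - 1| \le CN^{2\ka}/|r|^4$ on $P_H$ by \eqref{eq:sq_gq_bounds}, its deviation from $1$ is a summable correction contributing to $\cE_{\cG_N}^{(3)}$. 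The remaining $25$ cross-terms split naturally into two families: terms containing at least one $\s$-factor in the first two positions are controlled using the pointwise bound $|\s_r| \le CN^\ka/|r|^2$ of \eqref{eq:sq_gq_bounds}, which ensures momentum summability; terms containing a $d$-operator are controlled using Lemma \ref{lem:action_bogoliubov} via \eqref{eq:dp_bounds}, which provides the $1/N$ decay. For each of them, Cauchy--Schwarz together with $|\wh V(\cdot/N^{1-\ka})| \le C$ produces either a $\cV_N^{1/2}$ factor directly, or a $\cK^{1/2}$ factor upgradable to $\cV_N^{1/2}$ via Lemma \ref{lem:V<KN}, multiplied by powers of $\cN_+$.

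The main technical obstacle I expect is in controlling cross-terms such as $\s_{p+q} b_{-p-q} b_{-p}^* b_q$: here the adjacent pair $b_{-p-q} b_{-p}^*$ has to be normal-ordered using the approximate CCR \eqref{eq:bpCCR}, which produces both a commutator contribution (with a $\delta_{p+q,-p}$-piece and subleading number-operator pieces) and a normal-ordered remainder. Reorganizing this normal-ordered remainder as a position-space integral of four $\ckb$-operators is what generates the $\cV_N$ term on the right-hand side of \eqref{eq:GN3_error_bound}; combined with the prefactor $N^\ka/\sqrt N$ and the extra factor $N^\ka$ coming from the pointwise bound on $\s$, this produces the advertised size $N^{-1/2+2\ka}$. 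All the other cross-terms will turn out to be strictly smaller and get absorbed in $\cE_{\cG_N}^{(3)}$.
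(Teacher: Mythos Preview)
Your two-stage strategy---convert $b_{p+q}^*a_{-p}^*a_q$ to $b_{p+q}^*b_{-p}^*b_q$ modulo an $\cN_+/N$ correction, then expand each factor via \eqref{eq:dp_def}, isolate the $\g_{p+q}\g_p\,b_{p+q}^*b_{-p}^*(\g_qb_q+\s_qb_{-q}^*)$ contribution, and replace $\g_{p+q}\g_p$ by $1$ using \eqref{eq:sq_gq_bounds}---is exactly the paper's approach. Two technical points need correction, however.

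First, in step one the position-space Cauchy--Schwarz bound on the remainder produces a factor $\cV_N^{1/2}$ on the conjugated vector $e^B\xi$; to convert this back into $\cV_N$ on $\xi$ you need Lemma~\ref{lem:B_bounds_HN}, not Lemma~\ref{lem:B_bound_N_easy} (which only handles powers of $\cN_+$).

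Second, your remark that a $\cK^{1/2}$ factor is ``upgradable to $\cV_N^{1/2}$ via Lemma~\ref{lem:V<KN}'' has the inequality backwards: that lemma gives $\cV_N\le C\cK\cN_+$, not the reverse. Since \eqref{eq:GN3_error_bound} features $\cV_N$ rather than $\cK$, a purely momentum-space Cauchy--Schwarz on the cross-terms (which naturally produces $\cK$) would not close. The paper handles the cross-terms by passing to position space, using \eqref{eq:sx_gx_bounds} and the position-space $d$-bounds \eqref{eq:dx_bounds}, so that pairs $\cka_x\cka_y$ summed against $N^{2-2\ka}V(N^{1-\ka}(x-y))$ yield $\cV_N^{1/2}$ directly. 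Relatedly, your account of how $\cV_N$ arises---from normal-ordering $\s_{p+q}b_{-p-q}b_{-p}^*b_q$ into ``four $\ckb$-operators''---is off: the normal-ordered remainder is still cubic, and $\cV_N$ enters instead through the position-space representation of the double sum against $\wh V$.
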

\begin{proof}
	First we write everything in terms of generalized creation and annihilation operators. Using again  $a_{-p}^*a_q=b_{-p}^*b_q+a_{-p}^* (\cN_+ / N) a_q$, we find 
	\be\label{eq:GN3_change_b}
	\cG_N^{(3)}=\fra{N^\ka}{\sqrt N}\sum_{\substack{p,q\in\retp\\p+q\neq 0}}\wh V(p/N^{1-\ka})\Big[ e^{-B} b_{p+q}^*b_{-p}^*b_q e^B +\hc\Big] + \wt \cE_1,
	\ee and switching to position space and using \ref{eq:B_bounds_HN} we find 
	\[
	\pm\wt\cE_1\le N^{-1/2}(\cV_N+\cN_++1)(\cN_++1)\,.
	\]
	We can now write the first term of \eqref{eq:GN3_change_b} as 
	\spl{
		\fra{N^\ka}{\sqrt N}&\sum_{\substack{p,q\in\retp\\p+q\neq 0}}\wh V(p/N^{1-\ka})\, e^{-B} b_{p+q}^*b_{-p}^*b_q e^B \\
		&=\fra{N^\ka}{\sqrt N}\sum_{\substack{p,q\in\retp\\p+q\neq 0}}\wh V(p/N^{1-\ka})\, e^{-B} b_{p+q}^* e^B  \, e^{-B} b_{-p}^* e^B \, e^{-B} b_q e^B\, . } Using the expansion \eqref{eq:dp_def}, 
representing the potential in position space and applying the bounds \eqref{eq:etaH_norms}, \eqref{eq:sq_gq_bounds}, \eqref{eq:sx_gx_bounds} \eqref{eq:dp_bounds}, \eqref{eq:dx_bounds} one finds 
	\be\label{eq:GN3_last_step}
	\cG_N^{(3)}=\fra{N^\ka}{\sqrt N}\sum_{\substack{p,q\in\retp\\p+q\neq 0}}\wh V(p/N^{1-\ka})\g_{p+q}\g_pb_{p+q}^*b_{-p}^*\Big[(\g_qb_q+\s_qb_{-q}^*)+\hc\Big]+\wt\cE_2\,,
	\ee with 
	\[
	\pm\wt\cE_2\le C N^{(-1+\ka)/2}(\cV_N+\cN_++1)(\cN_++1)\,.
	\] We conclude by observing that, thanks to \eqref{eq:sq_gq_bounds}, we can replace the factors $\g_{p+q}$ and $\gamma_p$ in  \eqref{eq:GN3_last_step} by one, producing an error that satisfies the desired bound \eqref{eq:GN3_error_bound}.
\end{proof}

\subsection{Analysis of $\cG_N^{(4)}$}
We now analyse the term
\[\cG_N^{(4)}=e^{-B} \cL_N^{(4)} e^B = \fra{N^\ka}{ N}\sum_{\substack{p,q,r\in\retp\\r\neq -p,-q}}\wh V(p/N^{1-\ka})\, e^{-B} a_{p+r}^*a_q^*a_{p}a_{q+r} e^B \,.\]
\begin{lemma}\label{lem:GN4}
	We have
	\bes{\label{eq:GN4_Result}
		\cG_N^{(4)}=\,&\cV_N+\fra{N^\ka}{2N}\sum_{p,q\in\retp}\wh V((p-q)/N^{1-\ka})\s_q\g_q\s_p\g_p(1+1/N-2\cN_+/N)\\
		&+\fra{N^\ka}{2N}\sum_{p,q\in\retp}\wh V((p-q)/N^{1-\ka})(\h_H)_q\Big[\g_p^2b_p^*b_{-p}^*+2\g_p\s_pb_p^*b_p+\s_p^2b_pb_{-p}\\
		&\qquad+d_p(\g_pb_{-p}+\s_pb_p^*)+(\g_pb_p+\s_pb_{-p}^*)d_{-p}+\hc\Big]\\
		&+\fra{N^\ka}{N^2}\sum_{p,q,u\in\retp}\wh V((p-q)/N^{1-\ka})(\h_H)_q(\h_H)_p\Big[(\g_u^2+\s_u^2)b_u^*b_u+\g_u\s_u b_u^* b_{-u}^*\\
		&\qquad+\g_u\s_ub_ub_{-u}+\s_u^2\Big]+\cE_{\cG_N}^{(4)}\,,
	} with the estimate 
	\be\label{eq:GN4_error_bound}
	\pm\cE_{\cG_N}^{(4)}\le CN^{-1/2+3\ka}(\cV_N+\cN_++1)(\cN_++1) \,.
	\ee
\end{lemma}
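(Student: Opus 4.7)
The plan is to follow closely the scheme of \cite[Section 7]{BBCS4} for the Gross--Pitaevskii regime, adapting the bookkeeping to the present $N^\kappa$ scaling. Writing $\cV_N$ in the position-space form \eqref{eq:cV-x} is advantageous because then $e^{-B}(\cdot)e^B$ acts factor by factor via the position-space analogue of \eqref{eq:dp_def}, namely
\[
e^{-B}\ckb_x e^B=\int dy\,\bigl[\ckg(x-y)\ckb_y+\cks(x-y)\ckb_y^*\bigr]+\ckd_x,
\]
together with its adjoint. As a preliminary step I would pass from $\check a$ to $\check b$ in $\cV_N$; the price is a multiplicative factor $(N-\cN_+)(N-\cN_+-1)/N^2$, contributing an error $\pm C N^{-1}\cN_+\cV_N$, which is absorbed in $\cE_{\cG_N}^{(4)}$.

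Once $\cV_N$ is expressed in terms of $\ckb$-operators, conjugation with $e^B$ expands the four $\ckb$'s as $\ckg\ckb+\cks\ckb^*+\ckd$, producing a sum over configurations. I would group the output as follows:
\begin{enumerate}[(a)]
\item The ``identity'' term, taking the $\ckg\ckb$ (respectively $\ckg\ckb^*$) branch on all four factors, reproduces $\cV_N$ up to an error $\pm CN^{-1}\cN_+\cV_N$, using $\ckg=\delta+O(\check\sigma\star\check\sigma)$ together with $\|\check\sigma\|_\infty\le CN$ and (\ref{eq:sx_gx_bounds}).
\item The ``single contraction'' terms, with exactly one pair of $\cks$ factors coupled by the interaction kernel $N^{2-2\ka}V(N^{1-\ka}\cdot)$, produce in momentum space the quadratic contribution $(N^\ka/2N)\sum_{p,q}\wh V((p-q)/N^{1-\ka})(\eta_H)_q$ times the bracketed combinations of $b^*_pb_{-p}^*$, $b_p^*b_p$, $b_pb_{-p}$ and the $d$-remainders listed in \eqref{eq:GN4_Result}; here one uses $\s_p\simeq \eta_H(p)$ from \eqref{eq:sq_gq_bounds}, while keeping the $\g_p$ factors with their current accuracy.
\item The ``double contraction'' terms, with two pairs of $\cks$'s contracted by the potential kernel, yield the constant part of order $\s\g\s\g$ (including the $1-2\cN_+/N$ and $1/N$ corrections from normal ordering $\ckb_x\ckb_y\ckb_y^*\ckb_x^*$); the remaining $\ckb^*\ckb$-bilinear piece is the $u$-summation in \eqref{eq:GN4_Result} involving $(\eta_H)_q(\eta_H)_p$.
\item All other configurations---those involving at least one $\ckd$ factor not already accounted for in (b), or three/four $\cks$ factors left uncontracted---go into the error $\cE_{\cG_N}^{(4)}$.
\end{enumerate}

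To bound the errors in (d), I would use repeatedly Lemmas \ref{lem:B_bound_N_easy}, \ref{lem:action_bogoliubov} and \ref{lem:B_bounds_HN}, combined with the estimates $\|\cks\|_2\le CN^{\ka-\a/2}$, $\|\cks\|_\infty\le CN$, $\|\cke_H\|_\infty\le CN$ from \eqref{eq:etaH_norms}, \eqref{eq:sx_gx_bounds}. A typical bound looks like
\[
\Bigl|\int dxdy\,N^{2-2\ka}V(N^{1-\ka}(x-y))\,\cks(x-y)\,\langle\xi,\ckb_x^*\ckd_y\ckb_x\xi\rangle\Bigr|
\le CN^{-1/2+3\ka}\langle\xi,(\cV_N+\cN_++1)(\cN_++1)\xi\rangle,
\]
obtained by Cauchy--Schwarz, using \eqref{eq:dx_bounds} to dispose of $\ckd_y$, $\|\cks\|_\infty\le CN$ to trade the kernel against the potential, and pulling out the factor $N^{-1/2}$ from the mismatch between $N^{2-2\ka}\|V(N^{1-\ka}\cdot)\|_1\sim N^{-1+\ka}$ and the three remaining creation/annihilation operators absorbed into $\cV_N+\cN_+$. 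Each additional $\cks$ or $\ckd$ factor beyond the ones producing the explicit terms in \eqref{eq:GN4_Result} contributes at most $N^\ka$ to the bound, which, balanced against the prefactor from the interaction, yields the stated $N^{-1/2+3\ka}$.

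The main technical obstacle will be the sheer combinatorics: the four-fold expansion of $e^{-B}\cV_N e^B$ in the $\ckg\ckb+\cks\ckb^*+\ckd$ basis generates many terms that must be normal-ordered and regrouped to match the explicit structure in \eqref{eq:GN4_Result}. Care is required in (c), where the double contraction forces us to commute $\ckb$'s past $\ckb^*$'s and track the $\cN_+/N$ corrections from \eqref{eq:bpCCR}, which are precisely what gives the factor $(1+1/N-2\cN_+/N)$. A related subtlety is to distinguish the part of $(\g_p\s_p)$ that equals $\eta_H(p)$ from its $O(N^{3\ka}|p|^{-6})$ remainder, so that the explicit terms in (b) and (c) have the exact form written in \eqref{eq:GN4_Result} (with $\eta_H$, not $\s\g$), the difference being absorbable in $\cE_{\cG_N}^{(4)}$ thanks to \eqref{eq:sq_gq_bounds} and $\a>2\ka$.
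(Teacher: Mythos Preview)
Your overall strategy---convert the $a$'s to $b$'s, then expand each of the four factors via the position-space analogue of \eqref{eq:dp_def} and classify the output by the number of $\cks$-contractions---is essentially what the paper does (in momentum space). The grouping (a)--(d) and the mechanism for bounding the $d$-remainders via Lemma~\ref{lem:action_bogoliubov} are also correct in spirit.

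There is, however, a genuine gap in your preliminary step. You write that replacing $\check a$ by $\check b$ costs ``an error $\pm CN^{-1}\cN_+\cV_N$, which is absorbed in $\cE_{\cG_N}^{(4)}$''. This bound is correct \emph{before} conjugation, but what enters $\cE_{\cG_N}^{(4)}$ is its conjugate by $e^B$. Applying Lemma~\ref{lem:B_bounds_HN} you get
\[
e^{-B}\,N^{-1}\cN_+\cV_N\,e^{B}\;\le\;CN^{-1}\cV_N(\cN_++1)+CN^{\ka}(\cN_++1),
\]
and the second term $CN^{\ka}(\cN_++1)$ is \emph{not} controlled by $N^{-1/2+3\ka}(\cV_N+\cN_++1)(\cN_++1)$. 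In other words, the $O(\cN_+/N)$ part of the $a\to b$ conversion cannot be treated as an error; it must be kept. The paper does this by writing
\[
a_{p+r}^*a_q^*a_pa_{q+r}=b_{p+r}^*b_q^*b_pb_{q+r}\Bigl(1-\tfrac{3}{N}+\tfrac{2\cN_+}{N}\Bigr)+a_{p+r}^*a_q^*a_pa_{q+r}\,\Theta_{\cN_+},
\]
with $\pm\Theta_{\cN_+}\le C(\cN_++1)^2/N^2$; only the $\Theta_{\cN_+}$ piece goes into the error, while the $2\cN_+/N$ piece is rewritten as a \emph{sextic} term $N^{-2}\sum_u e^{-B}b^*b^*b_u^*b_ubb\,e^B$ and analyzed separately. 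It is precisely this sextic term that, after the leading double contraction, produces the $u$-sum in \eqref{eq:GN4_Result}. Your item~(c) attributes the $u$-sum to normal-ordering the quartic double contraction, which is not where it comes from; without the sextic term your expansion simply misses it (and the residual $N^{\ka}\cN_+$ error has nothing to cancel against). Once you retain the $2\cN_+/N$ correction and expand it as the paper does, the rest of your outline goes through.
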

\begin{proof}
	We write 
	\[
	a_{p+r}^*a_q^*a_{p}a_{q+r}=b_{p+r}^*b_q^*b_{p}b_{q+r}\left(1-\fra{3}{N}+\fra{2\cN_+}{N}\right)+a_{p+r}^*a_q^*a_{p}a_{q+r}\Theta_{\cN_+}\,,
	\] with 
	\spl{
		\Th_{\cN_+}=\,&\left[\fra{(N-\cN_++2)(\cN_+-1)}{N^2}+\fra{(\cN_+-2)}{N}\right]^2\\
		&-\left[\fra{\cN_+^2-3\cN_+-2}{N^2}\right]\left[\fra{(N-\cN_++2)(N-\cN_++1)}{N^2}\right]
	} satisfying $\pm\Th_{\cN_+}\le C(\cN_++1)^2/N^2$. From this and Lemma \ref{lem:B_bounds_HN}, it follows that
	\bes{\label{eq:GN4_step1}
		\cG_N^{(4)}
		=\,&\fra{N^\ka(N+1)}{N^2}\sum_{\substack{p,q,r\in\retp\\r\neq -p,-q}}\wh V(p/N^{1-\ka})\, e^{-B} b_{p+r}^*b_q^*b_{p}b_{q+r} e^B \\
		&+\fra{N^\ka}{N^2}\sum_{\substack{p,q,u,r\in\retp\\r\neq -p,-q}}\wh V(p/N^{1-\ka})\, e^{-B} b_{p+r}^*b_q^*b_u^*b_ub_{p}b_{q+r} e^B +\wt \cE_1
	} with an error that satisfies 
	\[
	\pm\wt\cE_1\le CN^{-1+\ka}(\cV_N+\cN_++1)(\cN_++1)\,.
	\]
	The first term on the right hand side of \eqref{eq:GN4_step1} can be expanded using \eqref{eq:dp_def}. Bringing the $b$ and $b^*$ operators into normal order and using again Lemma \eqref{lem:action_bogoliubov} to bound all remainder terms, it is possible to show that 
	\spl{
		\fra{N^\ka(N+1)}{N^2}&\sum_{\substack{p,q,r\in\retp\\r\neq -p,-q}}\wh V(p/N^{1-\ka})\, e^{-B} b_{p+r}^*b_q^*b_{p}b_{q+r} e^B \\
		=\,&\cV_N+\fra{N^\ka}{2N}\sum_{p,q\in\retp}\wh V((p-q)/N^{1-\ka})\s_q\g_q\s_p\g_p(1+1/N-2\cN_+/N)\\
		&+\fra{N^\ka}{2N}\sum_{p,q\in\retp}\wh V((p-q)/N^{1-\ka})\s_q\g_q\Big[\g_p^2b_p^*b_{-p}^*+2\g_p\s_pb_p^*b_p+\s_p^2b_pb_{-p}\\
		&\quad+\big(d_p(\g_pb_{-p}+\s_pb_p^*)+(\g_pb_p+\s_pb_{-p}^*)d_{-p}+\hc\big)\Big]+\wt\cE_2
	} with error term bounded by 
	\[
	\pm\wt\cE_2\le CN^{-(1-\ka)/2}(\cV_N+\cN_++1)(\cN_++1)\,.
	\]
	As for the remaining term in \eqref{eq:GN4_step1}, using the fact that $\sum_{u\in\retp}e^{-B} b_u^*b_u e^B = e^{-B} \cN_+(1-\cN_+/N) e^B$, applying Lemma \ref{lem:B_bound_N_easy} and expanding the remaining $b$ and $b^*$ operators, one can see that
	\spl{
		\fra{N^\ka}{N^2}&\sum_{\substack{p,q,r\in\retp\\r\neq -p,-q}}\wh V(r/N^{1-\ka})\, e^{-B} b_{p+r}^*b_q^*b_u^*b_ub_{p}b_{q+r} e^B \\
		=\,&\fra{N^\ka}{N^2}\sum_{p,q,u\in\retp}\wh V((p-q)/N^{1-\ka})\s_q\g_q\s_p\g_p\Big[(\g_u^2+\s_u^2)b_u^*b_u+\g_u\s_u b_u^* b_{-u}^*\\
		&\quad+\g_u\s_ub_ub_{-u}+\s_u^2\Big]+\wt\cE_3. 
	} Here, the error $\wt\cE_3$ satisfies the same bound as $\wt\cE_2$. Bringing everything together, we find 
	\bes{\label{eq:GN4_final_step}
		\cG_N^{(4)}=\,&\cV_N+\fra{N^\ka}{2N}\sum_{p,q\in\retp}\wh V((p-q)/N^{1-\ka})\s_q\g_q\s_p\g_p(1+1/N-2\cN_+/N)\\
		&+\fra{N^\ka}{2N}\sum_{p,q\in\retp}\wh V((p-q)/N^{1-\ka})\s_q\g_q\Big[\g_p^2b_p^*b_{-p}^*+2\g_p\s_pb_p^*b_p+\s_p^2b_pb_{-p}\\
		&\qquad+d_p(\g_pb_{-p}+\s_pb_p^*)+(\g_pb_p+\s_pb_{-p}^*)d_{-p}+\hc)\Big]\\
		&+\fra{N^\ka}{N^2}\sum_{p,q,u\in\retp}\wh V((p-q)/N^{1-\ka})\s_q\g_q\s_p\g_p\Big[(\g_u^2+\s_u^2)b_u^*b_u+\g_u\s_u b_u^* b_{-u}^*\\
		&\qquad+\g_u\s_ub_ub_{-u}+\s_u^2\Big]+\cE_{\cG_N}^{(4)}\,.
	} We conclude the proof by noticing that, with \eqref{eq:sq_gq_bounds}, we can replace $\g_q\s_q$ in the second line and $\s_q\g_q\s_p\g_p$ in the last line of \eqref{eq:GN4_final_step} by resp. $(\h_H)_p$, $(\h_H)_p(\h_H)_q$, producing an error that satisfies the bound \eqref{eq:GN4_error_bound}.
\end{proof}

\subsection{Proof of Proposition \ref{prop:GN}}
We now collect the results from the previous sections to prove Proposition \ref{prop:GN}. By \eqref{eq:GN0}, \eqref{eq:B_kinetic}, \eqref{eq:B_2V}, \eqref{eq:GN3_Result} and \eqref{eq:GN4_Result}, we find 
\begin{equation*}
	\cG_N=\wt C_{\cG_N}+\wt C_{\cG_N}+\cC_N+\cH_N+\cD_N+\wt\cE,
\end{equation*} where the error term satisfies 
\begin{equation*}
	\pm\wt\cE\le N^{-1/2+3\ka}(\cH_N+\cN_+^2+1)(\cN_++1)
\end{equation*} by \eqref{eq:GNO_error}, \eqref{eq:B_kinetic_error},
\eqref{eq:B_2V_error}, \eqref{eq:GN3_error_bound}, \eqref{eq:GN4_error_bound}, and where we defined 
\bes{\label{eq:C2N_first}
	\wt C_{G_N}=\,&\fra{N-1}{2}N^\ka\wh V(0)+\sum_{p\in\retp}p^2\s_p^2\bigg[ 1+\fra{1}{N} +\fra{1}{N}\sum_{u\in\retp}\s_u^2\bigg]\\
	&+\sum_{p\in\retp}N^\ka\wh V(p/N^{1-\ka})\s_p^2+\sum_{p\in\retp}N^\ka\wh V(p/N^{1-\ka})\g_p\s_p\\
	&+\fra{N^\ka}{2N}\sum_{p,q\in\retp}\wh V((p-q)/N^{1-\ka})\s_q\g_q\s_p\g_p(1+1/N)\\
	&+\fra{N^\ka}{N^2}\sum_{p\in P_H, u\in\retp}\convo{\h_H}\h_p\s_u^2,
} 
\begin{equation}\label{eq:QGN_tilda}
\begin{split}
	\wt Q_{\cG_N} =& \sum_{p\in\retp}b_p^*b_p\left[2\s_p^2p^2+N^\ka\wh V(p/N^{1-\ka})(\g_p+\s_p)^2+\fra{2N^\ka}{N}\g_p\s_p\convo{\h_H}_p\right]\\
	&+\sum_{p\in\retp}(b_p^*b_{-p}^*+b_pb_{-p})\\
	&\qquad\times\left[p^2\g_p\s_p+\fra{N^\ka}2\wh V(p/N^{1-\ka})(\g_p+\s_p)^2+\fra{N^\ka}{2N}\convo{\h_H}_p(\g_p^2+\s_p^2)\right]\\
	&-\fra{\cN_+}{N}\sum_{p\in\retp}\left[p^2\s_p^2+N^\ka\wh V(p/N^{1-\ka})\g_p\s_p+\fra{N^\ka}N\sum_{q\in\retp}\wh V((p-q)/N^{1-\ka})\g_p\s_p\g_q\s_q\right]\\
	&+\fra1N\sum_{u\in\retp}\left[(\g_u^2+\s_u^2)b_u^*b_u+\g_u\s_u(b_u^*b_{-u}^*+b_ub_{-u})\right]\\
	&\qquad \times\sum_{p\in P_H}\left[p^2\s_p^2+\fra{N^\ka}N\convo{\h_H}_p\h_p\right],
\end{split}
\end{equation}
and 
\bes{\label{eq:GN_remainder_terms}
	\cD_N=\,&\sum_{p\in\retp}p^2\s_pd_p^*b^*_{-p}+\hc\\
	&+\fra{1}{2}\sum_{p\in\retp}N^\ka\left[\wh V(p/N^{1-\ka})+\fra{1}{N}\sum_{q\in\retp} V((p-q)/N^{1-\ka})(\h_H)_q\right]\\
	&\qquad\times\Big(\left[(\g_pb_p^*+\s_pb_{-p})d_{-p}^*+d_p^*(\g_pb_{-p}^*+\s_pb_p)\right]+\hc\Big)\\
	=\,&\sum_{p\in\retp}\left[p^2\s_p+\fra{1}{2}N^\ka\wh V(p/N^{1-\ka})+\fra{1}{2N}N^\ka\convo{\h_H}\right]d_p^*b_{-p}^*+\hc\\
	&+\sum_{p\in\retp}\fra{1}{2}N^\ka\left[\wh V(p/N^{1-\ka})+\fra{1}{N}\convo{\h_H}\right]\left[(\g_p-1)d_p^*b^*_{-p}+\s_pd_p^*b_p+\hc\right]\\
	&+\sum_{p\in\retp}\fra{1}{2}N^\ka\left[\wh V(p/N^{1-\ka})+\fra{1}{N}\convo{\h_H}\right]\left[\s_pd_pb_p^*+\g_pd_p^*b_{-p}^*+\hc\right].
} 

$\cD_N$ collects all terms arising from the conjugation of $\cL_N$ containing $d, d^*$ operators that are not already included in $\wt\cE$. We want to extract relevant contributions from these terms. First observe that $\h_H$ can be replaced by $\h$ in the convolutions appearing on the r.h.s. of \eqref{eq:GN_remainder_terms}, producing an error smaller than $N^{-1/2+3\ka/2+\a/2}(\cN_++1)^2$ in the operator sense. Taking for example the term on the first line, using \eqref{eq:dp_bounds} and \eqref{eq:eta_punctual_bound} we can bound
\[
\begin{split}
\pm\langle\x,\fra{N^{\ka}}{2N}&\sum_{p\in\retp}\left[\convo{\h}-\convo{\h_H}\right]d_p^*b_{-p}^*\x\rangle\\
\le &CN^{-2+\ka}\sum_{\substack{p\in\retp,\abs q \le N^\a}}|\wh V((p-q)/N^{1-\ka})\h_q|\\
&\qquad \times\left[\h_H(p)\norm{(\cN_++1)\x}+\norm{\h_H}\norm{b_p(\cN_++1)^{1/2}\x}\right]\norm{\cN_+\x}\\
 \le& CN^{-2+\ka}\norm{\wh V(\cdot/N^{1-\ka})}_2\norm{\h_H}_2\norm{(\cN_++1)\x}^2\sum_{|q|\le N^\a}|\h_q|\\
\le & CN^{-\fra{1}{2}+\fra{3\ka+\a}2}\norm{(\cN_++1)\x}^2.
\end{split}
\]
The other errors can be bounded in the same way. 
We now denote by $D_i$ the term on the $i$-th line of the right hand side of  \eqref{eq:GN_remainder_terms} after this replacement. By \eqref{eq:scattering_eta}, \eqref{eq:dp_bounds} and \eqref{eq:chi_f_bound} we conclude that $\pm D_1, \pm D_2\le CN^{-1+\ka}(\cN_++1)^2$. As for $D_3$, switching to position space, one can bound the term proportional to $\g_pd_p^*b_{-p}^*$ by $CN^{-1+\ka/2}(\cV_N+\cN_++1)(\cN_++1)$ and easily replace $\s_p$ by $\h_H (p)$ in the other term. This produces an additional error of order $N^{-1+\ka}(\cN_++1)^2$. We are left with 
\[
\wt D_3=\fra12\sum_{p\in P_H}N^\ka\convo{\wh f_N}\h_pd_pb_p^*+\hc.
\] Here we don't have enough decay in $p$ to conclude by \eqref{eq:dp_bounds}: this term contains contributions that are not negligible in the limit of large $N$, at our desired level of precision. To isolate them, we write, for $p\in P_H$,
\spl{d_p=& (\g_p-1)b_p-\s_pb_{-p}^*+ e^{-B} b_p e^B -b_p\\
	=& - \int_0^1 \left(\s_p^{(s)}b_p+\g_p^{(s)}b_{-p}^*\right)ds- \int_0^1 e^{-sB} [B,b_p] e^{sB} ds\\
	=&\h_p\int_0^1d_{-p}^{(s)*}ds-\fra{\h_p}{N}\int_0^1e^{-sB}\cN_+ b_{-p}^* e^{sB}ds-\fra{1}{N}\int_0^1\sum_{q\in P_H} \h_q \, e^{-sB} b_q^*a_{-q}^*a_p e^{sB}ds,}
 where for $s\in[0,1]$ we used the notation $\s_p^{(s)}, \g_p^{(s)}, d_p^{(s)}$ to denote coefficients and operators built from $s\h$ instead of $\h$. The additional $\h_p$ factor in the first two terms lets us bound them as above. We can thus write 
 \[\wt D_3=-\fra{N^\ka}{2N}\int_0^1\sum_{p,q\in P_H}\convo{\wh f_N}\h_p\h_q\left[e^{-sB}b_q^*a_{-q}^*a_pe^{sB}b_p^*+\hc\right]ds+\wt\cE_1,\]
 with
\[\pm\wt\cE_1\le CN^{-1+\ka}(\cN_++1)^2\,.\]
We now further expand
\[
e^{-sB}a_{-q}^*a_pe^{sB} = a_{-q}^*a_{p}+\int_0^s e^{-tB} (\h_p b_{-q}^*b_{-p}^* + \h_q b_{p}b_{q})e^{tB}dt 
\] and plug this into $\wt\cD_3$. The term with the additional factor $\h_p$ can also be bounded as above. In the term proportional to $e^{-sB}b_q^*e^{sB}a_{-q}^*a_pb_p^*$ we commute $b_p^*$ to the left, while in the last term that is left we expand $e^{-tB}b_qe^{tB}=\g_q^{(t)}b_q+\s_q^{(t)}b_{-q}^*+d_q^{(t)}$ and we commute the $b_p^*$ to the left of $\g_q^{(t)}b_q$. One can see that the quartic terms (in creation and annihilation operators) can now be bounded as above by Cauchy-Schwarz. We are left with the quadratic terms arising from the commutators: 
\[ \begin{split}
\wt D_3=& -\fra{N^{\ka}}{2N}\int_0^1\sum_{p,q\in P_H}\convo{\wh f_N}\h_p\h_qe^{-sB}b_q^*e^{sB}b_{-q}^*ds\\
&-\fra{N^{\ka}}{2N}\int_0^1\int_0^s\sum_{p,q\in P_H}\convo{\wh f_N}\h_p\h_q^2e^{-sB}b_q^*e^{sB}e^{-tB}b_{q}e^{tB}dtds \\
&+\hc +\wt\cE_2
\end{split}
\]  with
\[\pm\wt\cE_2\le CN^{-1+\ka}(\cN_++1)^2\,.\] Expanding once more with \eqref{eq:dp_def} and collecting the terms with $d_q^{(s)*}, d_q^{(t)}$ in the error, and finally integrating in $t$ and $s$ we arrive at
\bes{\label{eq:D_relevant_contributions}
	\cD_N=\,&-\fra{N^\ka}{2N}\sum_{q\in\retp, p\in P_H}\convo{\wh f_N}_p\h_p\left[\g_q\s_q(b_q^*b_{-q}^*+b_qb_{-q})+(\s_q^2+\g_q^2)b_q^*b_q+\s_q^2\right]\\
	&+\fra{N^\ka}{2N}\sum_{q\in\retp, p\in P_H}\convo{\wh f_N}_p\h_pb_q^*b_q+\wt\cE_3.
} We denote the constant term on the r.h.s. of \eqref{eq:D_relevant_contributions} by $\cD_N^{(0)}$ and collect the quadratic ones in $\cD_N^{(2)}$, so that $\cD_N=\cD_N^{(0)}+\cD_N^{(2)}+\wt\cE_3$ with the error bound 
\[
\pm\wt\cE_3\le CN^{-\fra{1}{2} +\frac\ka2}(\cV_N+\cN_++1)(\cN_++1).
\]

Let us now go back to the term $\wt{C}_{\cG_N}$ defined in \eqref{eq:C2N_first}. Using \eqref{eq:eta_punctual_bound} and \eqref{eq:sq_gq_bounds} we see that we can replace the $\h_H$ in the convolution on the last line by $\h$, producing an error smaller than $N^{-1}$. Adding the constant contribution coming from \eqref{eq:D_relevant_contributions} and rearranging, we get 
\spl{
	\wt C_{G_N}+\cD_N^{(0)}=\,&\fra{N-1}{2}N^\ka\wh V(0)+\sum_{p\in\retp}\left[p^2\s_p^2+N^\ka\wh V(p/N^{1-\ka})(\s_p\g_p+\s_p^2)\right]\\
	&+\fra{1}{2N}\sum_{p,q\in\retp}N^\ka\wh V((p-q)/N^{1-\ka})\s_p\g_p\s_q\g_q\\
	&+\fra{1}{N}\sum_{p\in P_H}\h_p\left[p^2\h_p+\fra{1}{2N}N^\ka\convo{\h}_p\right]\\
	&+\fra{1}{N}\sum_{u\in\retp}\s_u^2\sum_{p\in P_H}\h_p\left[p^2\h_p-\fra{1}{2}N^\ka\wh V(p/N^{1-\ka})+\fra{1}{2N}N^\ka\convo{\h}_p\right]\\
	&+\cO(N^{-1+2\ka+\a})\,,
} where the error $N^{-1+2\ka+\a}$ arises from substituting the factors $\s_q$, $\s_q\g_q\s_p\g_p$ with $(\h_H)_q$, $(\h_H)_q(\h_H)_p$, and then $\h_H$ with $\h$ in the resulting convolution on the third line. This is, by the assumptions on $\a$, smaller than $N^{-1/2+2\ka}$. Using now equation \eqref{eq:scattering_eta} and the bound \eqref{eq:chi_f_bound} we find
\spl{
	\wt C_{G_N}+\cD_N^{(0)}=\,&\fra{N-1}{2}N^\ka\wh V(0)+\sum_{p\in\retp}\left[p^2\s_p^2+N^\ka\wh V(p/N^{1-\ka})(\s_p\g_p+\s_p^2)\right]\\
	&+\fra{1}{2N}\sum_{p,q\in\retp}N^\ka\wh V((p-q)/N^{1-\ka})\s_p\g_p\s_q\g_q\\
	&+\fra{1}{N}\sum_{p\in P_H}\h_p\left[p^2\h_p+\fra{1}{2N}N^\ka\convo{\h}_p\right]\\
	&-\fra{1}{N}\sum_{u\in\retp}\s_u^2\sum_{p\in P_H}N^\ka\wh V(p/N^{1-\ka})\h_p+\cO(N^{-1/2+2\ka})\,,
} 
and finally in the last line we replace the sum over $P_H$ with one over the whole $\L_+^*$ to get \eqref{eq:CGN_def}. This produces a negligible error, of order at most $N^{-1+2\ka+\a/2}\le N^{-1/2+2\ka}$. Indeed, using $\wh V\in L^\infty(\ret)$ and \eqref{eq:eta_punctual_bound} we see that 
\[
\pm\fra{1}{N}\sum_{u\in\retp}\s_u^2\sum_{p\in P_H^c}N^\ka\wh V(p/N^{1-\ka})\h_p\le CN^{-1+2\ka+\a/2}.
\]

Similarly, we combine $\wt\cQ_{\cG_N}$ defined in \eqref{eq:QGN_tilda} with the quadratic terms in \eqref{eq:D_relevant_contributions}. Using again the scattering equation \eqref{eq:scattering_eta} and the bound \eqref{eq:chi_f_bound}, we get \[\wt\cQ_{\cG_N}+\cD_N^{(2)}=\cQ_{\cG_N}+\wt\cE_4\] with the bound $\pm\wt\cE_4\le N^{-1/2+3\ka/2+\a}(\cK+\cN+1)(\cN+1)$. We omit the details, but we remark that also in these terms, replacing $\h_H$ with $\h$ produces terms of the leading order in the error. This concludes the proof of the proposition. 
%



\section{Analysis of $\cJ_n$}
\label{sec:cubic}

In this section, we sketch the proof of Proposition \ref{prop:JN}. The proof is similar to the proof of \cite[Prop. 3.3]{BBCS4}, taking into account the different scaling.
We recall the antisymmetric operator $A$, defined in \eqref{eq:A_def}. Throughout this section, we assume that the parameters $\kappa, \alpha, \beta$ satisfy \eqref{eq:conditions_parameters_JN}.

\subsection{Analysis of $\conA{\cQ_{\cG_N}}$}

To control the action of $A$ on the quadratic operator $\cQ_{\cG_N}$ introduced in (\ref{eq:QGN_def}), we will make use of the following lemma.  
\begin{lemma}\label{lem:G_PHI_A_commutator_bound}
Let $\G_p,\Phi_p$ be sequences satisfying \[
\abs{\G_p}\le CN^\ka\left(\chi_{\{\abs p \le N^\a\}}+1/\abs{p}^2\right),\qquad \abs{\Phi_p}\le CN^\ka.
\] Then we have 
\begin{gather}\label{eq:comm_Phi_A_bound}
	\pm\sum_{p\in\retp}\Phi_p[b_pb_p^*,A]\le CN^{-\fra{1}{2}}(\cN_++1)^2\\
	\label{eq:comm_Gamma_A_bound}\pm\sum_{p\in\retp}\G_p[(b_p^*b_{-p}^*+b_pb_{-p}),A]\le CN^{-\fra{1}{2}}(\cK+1)(\cN_++1).
	\end{gather}
\end{lemma}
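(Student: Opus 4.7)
The plan is to compute both commutators directly using Leibniz's rule together with the approximate CCR \eqref{eq:bpCCR}, and then to estimate the resulting sums of cubic (and lower-order) monomials in $b, b^*$ by Cauchy--Schwarz, exploiting the hypothesized bounds on $\Phi_p, \Gamma_p$ together with the estimates \eqref{eq:eta_punctual_bound}--\eqref{eq:etaH_norms} on the coefficients $\eta_H$. The factor $1/\sqrt N$ in the definition of $A$ supplies the overall $N^{-1/2}$ prefactor in both inequalities; the $N^\kappa$-growth of $\Phi_p, \Gamma_p$ and the $N^\kappa/|r|^2$-decay of $\eta_r$ are absorbed using the parameter constraints built into \eqref{eq:conditions_parameters_JN}, most notably $4\kappa<\alpha$.

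For \eqref{eq:comm_Phi_A_bound} the strategy is to write $[b_pb_p^*,A]=b_p[b_p^*,A]+[b_p,A]b_p^*$ and expand the inner commutators using \eqref{eq:bpCCR}: to leading order, each Kronecker delta pins $p$ to one of $\pm(r+v),\pm r$ or $\pm v$, producing cubic monomials of the schematic form
\[ \tfrac{1}{\sqrt N}\sum_{r\in P_H,\, v\in P_L}\Phi_{\pi(r,v)}\,\eta_r\, b_{r+v}^{\#}b_{-r}^{\#}b_v^{\#}, \]
where $\pi(r,v)\in\{\pm(r+v),\pm r,\pm v\}$ and $b^\#$ stands for either $b$ or $b^*$. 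The expectation of each such monomial is controlled by Cauchy--Schwarz in the form $|\langle\xi,b_a^\#b_b^\#b_c^\#\xi\rangle|\le\|b_d^{\#}\xi\|\,\|b_e^\#b_f^\#\xi\|$, pairing two of the three $b^\#$'s together. Using $|\Phi_p|\le CN^\kappa$, the $\ell^2$-bound $\|\eta_H\|_2\le CN^{\kappa-\alpha/2}$ and the standard estimate $\sum_{r,v}\|b^\#_{\alpha(r,v)}b^\#_{\beta(r,v)}\xi\|^2\le C\|(\cN_++1)\xi\|^2$ for the two-operator factor, one arrives at the target $CN^{-1/2}(\cN_++1)^2$ under the assumption $4\kappa<\alpha$. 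The corrections coming from the non-canonical $1/N$-pieces of \eqref{eq:bpCCR} carry an additional $1/N$ factor and are easily absorbed.

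For \eqref{eq:comm_Gamma_A_bound} the same scheme applies, with one crucial difference: the creation pair $b_p^*b_{-p}^*$ commutes at leading order with the two creation operators inside $A$, so only the commutator with the single operator $b_v$ (and, for the h.c.\ part of $A$, with $b_v^*$) contributes. This pins $p=\pm v\in P_L$, where $|\Gamma_{\pm v}|\le CN^\kappa$ uniformly, and the resulting cubic monomials are triples of purely creation (respectively annihilation) operators. The main obstacle is that a naive Cauchy--Schwarz would produce $(\cN_++1)^3$ on the right, which is coarser than the desired $(\cK+1)(\cN_++1)$. To recover the correct structure, I would exploit the lower bound $|r+v|\ge N^\alpha/2$ (valid because $r\in P_H$ and $v\in P_L$ with $\beta<\alpha$) by inserting a factor $|r+v|\cdot|r+v|^{-1}$ before applying Cauchy--Schwarz: the weight $|r+v|$ is transferred onto $b^\#_{r+v}$ to manifest a $\cK^{1/2}$-contribution, while the compensating weight $1/|r+v|$ combines with $\eta_r$ and $\Gamma_{\pm v}$ to give a square-summable remainder. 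This yields precisely the bound $CN^{-1/2}(\cK+1)(\cN_++1)$ under \eqref{eq:conditions_parameters_JN}; the corrections from the non-canonical $1/N$-pieces are again harmless.

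The main obstacle is the Cauchy--Schwarz pairing in \eqref{eq:comm_Gamma_A_bound}: because $\Gamma_p$ contains a piece $CN^\kappa\chi_{|p|\le N^\alpha}$ that does not decay in $p$, one must manufacture a $|p|$-weight by hand (using the lower bound $|r+v|\ge N^\alpha/2$) to transfer the summation cost onto the kinetic operator $\cK$, at the price of compensating polynomial factors in $N$ that are controlled precisely by the parameter constraints $2\kappa<\beta<\alpha$ and $4\kappa<\alpha$ of \eqref{eq:conditions_parameters_JN}.
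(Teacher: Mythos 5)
Your treatment of \eqref{eq:comm_Phi_A_bound} matches the paper's: expanding $[b_p^*b_p,A]$ pins $p$ to $\pm(r+v)$, $\pm r$ or $\pm v$, the resulting cubic terms are controlled by Cauchy--Schwarz with $\|\Phi\|_\infty\le CN^\kappa$, $\|\eta_H\|_2\le CN^{\kappa-\alpha/2}$ and $\alpha>4\kappa$, and the non-canonical $1/N$-corrections are subleading. That half of the proposal is sound.

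For \eqref{eq:comm_Gamma_A_bound}, however, there is a genuine gap. You assert that ``only the commutator with the single operator $b_v$ contributes,'' so that $p$ is always pinned to $\pm v\in P_L$. This is false: the operator being commuted is $b_p^*b_{-p}^*+b_pb_{-p}$, and the annihilation pair $b_pb_{-p}$ does \emph{not} commute, even at leading order, with the creation pair $b^*_{r+v}b^*_{-r}$ inside $A$; likewise $b_p^*b_{-p}^*$ fails to commute with the annihilation pair $b_{-r}b_{r+v}$ in the Hermitian-conjugate part of $A$ (it \emph{does} commute with $b_v^*$, contrary to what you write). These contractions pin $p$ to $\pm r$ or $\pm(r+v)$ and produce leading-order, not $1/N$-suppressed, terms of the schematic form $N^{-1/2}\sum_{r,v}\G_{r+v}\,\h_r\, b_{-r}^*b_{-r-v}b_v$ and $N^{-1/2}\sum_{r,v}\G_{r}\,\h_r\, b_{r+v}^*b_{r}b_v$ --- the terms $\Y_1$--$\Y_4$ in the paper's proof. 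They are exactly where the hypothesis $|\G_p|\le CN^\kappa(\chi_{\{|p|\le N^\alpha\}}+|p|^{-2})$ is needed: one bounds them using $\|\h_H\|_\infty\le CN^{\kappa-2\alpha}$, the volume bound $|P_L|\le CN^{3\beta}$ and the constraint $3\beta+4\kappa-1<\alpha$ from \eqref{eq:conditions_parameters_JN}. Your argument never invokes the large-$|p|$ behaviour of $\G$, which signals that these contributions have been dropped. The part you do carry out --- the $\G_{\pm v}$-pinned term with three annihilation (resp.\ creation) operators, handled by inserting a weight $|r+v|^{\pm1}$ to trade the non-decaying $\G_v$ against the kinetic energy --- is correct and coincides with the paper's treatment of its term $\Y_5$ (the paper uses the weight $|r|^{\pm1}$, which is equivalent since $|r|,|r+v|\gtrsim N^\alpha$). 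To complete the proof you must reinstate the missing $\G_r$- and $\G_{r+v}$-weighted contributions and estimate them by Cauchy--Schwarz as indicated.
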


Thanks to the scattering equation \eqref{eq:scattering_f} and to the bounds \eqref{eq:chi_f_bound}, the coefficients $\G_p$, $\Phi_p$ appearing in \eqref{eq:QGN_def} satisfy the assumptions of Lemma \ref{lem:G_PHI_A_commutator_bound}. 
\begin{lemma}\label{lem:QGN_A_conj}
We have	
	\[
	\conA{\cQ_{\cG_N}}=\cQ_{\cG_N}+\cE_{\cQ_N}\,,
	\] with 
	\[
	\pm\cE_{\cQ_N}\le CN^{-\fra{1}{2}+\ka}(\cH+\cN_+^2+1)(\cN_++1)\,.
	\]
\end{lemma}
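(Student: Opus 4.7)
My plan is to use Duhamel's formula
\[
\conA{\cQ_{\cG_N}} - \cQ_{\cG_N} = \int_0^1 \consA{s}{[\cQ_{\cG_N},A]}\,ds,
\]
control the commutator $[\cQ_{\cG_N},A]$ as an operator inequality via Lemma \ref{lem:G_PHI_A_commutator_bound}, and then propagate this inequality under the conjugation by $e^{sA}$ using the growth bounds of Lemma \ref{lm:growNA}.

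The first task is to verify that the coefficients $\Phi_p$, $\G_p$ appearing in \eqref{eq:QGN_def} satisfy the hypotheses of Lemma \ref{lem:G_PHI_A_commutator_bound}. For $\Phi_p$, the bound $|\Phi_p|\le CN^\ka$ follows directly from \eqref{eq:sq_gq_bounds}, from $\|\wh V\|_\io\le C$, and from the elementary estimate $|\convo{\h}_p|\le CN^\ka$ (uniform in $p$) obtained combining $\wh V\in\ell^\infty$ with \eqref{eq:eta_punctual_bound}. For $\G_p$ we treat the regions $p\in P_H^c$ and $p\in P_H$ separately: on $P_H^c$ we have $\s_p=0$, $\g_p=1$, so the remaining terms are manifestly bounded by $CN^\ka$; on $P_H$ we combine the scattering equation \eqref{eq:scattering_eta} (exactly as in the proof of Lemma \ref{lem:GpFpBounds} iv)) with the pointwise bound \eqref{eq:sq_gq_bounds} and the decay \eqref{eq:chi_f_bound} to obtain $|\G_p|\le CN^\ka/|p|^2$. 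These estimates match the hypotheses of Lemma \ref{lem:G_PHI_A_commutator_bound}; applying it together with the (trivial) identity $b_p^*b_p = b_pb_p^* - (1-\cN_+/N) + N^{-1}a_p^*a_p$ (whose extra commutator with $A$ is easily absorbed since $\|\eta_H\|_2/N$ is tiny), we deduce
\[
\pm [\cQ_{\cG_N}, A] \le CN^{-1/2}(\cH_N+\cN_+^2+1)(\cN_++1),
\]
where we have also used $\cK\le \cH_N$ to collect the two contributions.

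The remaining step is to conjugate this operator inequality by $e^{sA}$ for $s\in[0,1]$ and integrate. Here Lemma \ref{lm:growNA} supplies two estimates: on one hand $e^{-sA}(\cN_++1)^2 e^{sA}\le C(\cN_++1)^2$; on the other hand,
\[
e^{-sA}(\cN_++1)(\cH_N+1) e^{sA} \le C(\cH_N+\cN_+^2+N^\ka\cN_+)(\cN_++1) \le CN^\ka(\cH_N+\cN_+^2+1)(\cN_++1).
\]
Combining these two bounds with the estimate for $[\cQ_{\cG_N},A]$ and integrating over $s\in[0,1]$ yields
\[
\pm\cE_{\cQ_N} = \pm\big(\conA{\cQ_{\cG_N}} - \cQ_{\cG_N}\big) \le CN^{-1/2+\ka}(\cH_N+\cN_+^2+1)(\cN_++1),
\]
as desired.

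The only genuinely delicate point is the loss of the factor $N^\ka$ when propagating $(\cH_N+1)(\cN_++1)$ through $e^{sA}$: since $A$ is cubic, its commutator with the kinetic energy produces a term of order $N^\ka\cN_+$ (this is precisely the statement of the second estimate in Lemma \ref{lm:growNA}, whose proof uses the bounds on $[\cH_N,A]$). If we only had the bound $e^{-sA}(\cH_N+1)e^{sA}\le C(\cH_N+\cN_+^2+1)$, no loss would occur; the $N^\ka$ comes from the need to also bring a factor $(\cN_++1)$ along. This loss is what produces the final $N^{-1/2+\ka}$ rate in the statement, and it is exactly the reason the exponent $\ka$ appears in the error bound.
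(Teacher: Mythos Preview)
Your proof is correct and follows essentially the same approach as the paper: Duhamel expansion, application of Lemma~\ref{lem:G_PHI_A_commutator_bound} (after checking that $\Phi_p,\G_p$ satisfy its hypotheses via the scattering equation and \eqref{eq:chi_f_bound}), and propagation through $e^{sA}$ via Lemma~\ref{lm:growNA}. Your added discussion of the $N^\ka$ loss and the $b_p^*b_p$ versus $b_pb_p^*$ ordering is accurate and more detailed than the paper's terse treatment, but the argument is the same.
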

\begin{proof}[Proof of Lemma \ref{lem:QGN_A_conj}]
	We expand 
	\[
	\conA{\cQ_{\cG_N}}=\cQ_{\cG_N}+\int_0^1\consA{[\cQ_{\cG_N},A]}ds
	\] and the claim follows from Lemmas \ref{lem:G_PHI_A_commutator_bound} and \ref{lm:growNA}.
\end{proof}

\begin{proof}[Proof of Lemma \ref{lem:G_PHI_A_commutator_bound}]
	The proof is similar to that of \cite[Lemma 8.2]{BBCS4}. Using the commutation rules (\ref{eq:bpCCR}), we find 
	\bes{\label{eq:comm_b_A}
		[b_p^*,A]=\,& N^{-\fra{1}{2}}\sum_{\substack{r\in P_H\\ v\in\ P_L}}\h_r\Bigg[ b_v^*b_{-r}\left(1-\fra{\cN_+}{N}\right)\d_{p,r+v}+b_v^*b_{r+v}\left(1-\fra{\cN_+-1}{N}\right)\d_{p,-r}\\
		&\qquad\qquad\qquad-b_{r+v}^*b_{-r}^*\left(1-\fra{\cN_+}{N}\right)	\d_{p,v}\Bigg]\\
		&-N^{\fra{3}{2}}\sum_{\substack{r\in P_H\\ v\in\ P_L}}\h_r\Big[b_v^*(b_{-r}a_p^*a_{r+v}+a_p^*a_{-r}b_{r+v})-b_{r+v}^*b_{-r}^*a_p^*a_v\Big]\, .
	} Since $A^*=-A$, we have $[b_p^*,A]^*=[A^*,b_p]=-[A,b_p]=[b_p,A]$ and we can also compute 
	\spl{
		[b_p^*b_p,A]=\sum_{j=1}^6\D_j+\hc\,,
	} with 
	\spl{
		\D_1=\,&N^{-1/2}\sum_{r\in P_H,v\in P_L}\Phi_{r+v}\h_r\left(1-\fra{\cN_+-1}{N}\right)b_{r+v}^*b_{-r}^*b_v\\
		\D_2=\,&N^{-1/2}\sum_{r\in P_H,v\in P_L}\Phi_{r}\h_r\left(1-\fra{\cN_+-2}{N}\right)b_{r+v}^*b_{-r}^*b_v\\
		\D_3=\,&N^{-1/2}\sum_{r\in P_H,v\in P_L}\Phi_{v}\h_r\left(1-\fra{\cN_+-2}{N}\right)b_{r+v}^*b_{-r}^*b_v\\
		\D_4=\,&N^{-3/2}\sum_{p\in\retp}\sum_{r\in P_H,v\in P_L}\Phi_{p}\h_rb_p^*a_{r+v}^*a_pb_{-r}^*b_v\\
		\D_5=\,&N^{-3/2}\sum_{p\in\retp}\sum_{r\in P_H,v\in P_L}\Phi_{p}\h_rb_p^*b_{r+v}^*a_{-r}^*a_pb_v\\
		\D_6=\,&N^{-3/2}\sum_{p\in\retp}\sum_{r\in P_H,v\in P_L}\Phi_{p}\h_rb_p^*a_{v}^*a_pb_{-r}b_{r+v}\,.
	}
We point out here that we get less terms than the corresponding ones in \cite[Lemma 8.2]{BBCS4}. The infrared cutoff that we introduced in sequence $\h$ leads to a slightly different form of the operator $A$ than the one used in \cite{BBCS4}, which in turn causes several terms appearing there to vanish in our setting. This remark also applies to other similar expansions in the rest of this section.
We now proceed to bound the individual terms $\D_i$. With $\abs{\Phi_p}\le CN^\ka$, $\norm{\h_H}_2\le CN^{\ka-\a/2}$ and $\alpha > 4\ka$, Cauchy-Schwarz implies that
\[\pm\D_{1,2,3}\le CN^{-1/2+2\ka-\a/2}(\cN_++1)^2\le C N^{-1/2}(\cN_++1)^2\,,\]
Rearranging $\D_4$ in normal order, we obtain a commutator term, cubic in creation and annihilation operators, that can be estimated like $\D_{1}, \D_2, \D_3$. The remaining terms are easily seen to satisfy 
\[ \pm\D_{4}, \pm \D_5, \pm \D_6\le CN^{-1+2\ka-\a/2}(\cN_++1)^2\,,\] concluding the proof of \eqref{eq:comm_Phi_A_bound}. 

As for the off-diagonal terms, we have that 	
	\[
	\sum_{p\in\retp}\G_p[(b_p^*b_{-p}^*+b_pb_{-p}),A]=\sum_{j=1}^7\Y_j+\hc\,,
	\] where 
	
	\spl{
		\Y_1=\,&N^{-1/2}\sum_{r\in P_H,v\in P_L}\G_{r+v}\h_r\left(1-\fra{\cN_++1}{N}\right)(b_{-r}^*b_{-r-v}-\fra{1}{N}a_{-r}^*a_{-r-v})b_v\\
		\Y_2=\,&N^{-1/2}\sum_{r\in P_H,v\in P_L}\G_{r+v}\h_r\left(1-\fra{\cN_+}{N}\right)b_{-r}^*b_vb_{-r-v}\\
		\Y_3=\,&N^{-1/2}\sum_{r\in P_H,v\in P_L}\G_{r}\h_r\left(1-\fra{\cN_++1}{N}\right)(b_{r+v}^*b_{r}-\fra{1}{N}a_{r+v}^*a_{r})b_v\\
		\Y_4=\,&N^{-1/2}\sum_{r\in P_H,v\in P_L}\G_{r}\h_r\left(1-\fra{\cN_+-1}{N}\right)b_{r+v}^*b_vb_{r}\\
		\Y_5=\,&-N^{-1/2}\sum_{r\in P_H,v\in P_L}\G_{v}\h_r\left[\left(1-\fra{\cN_+}{N}\right)+\left(1-\fra{\cN_++1}{N}\right)\right]b_{r+v}b_{-r}b_{-v}\\
		\Y_6=\,&-N^{-3/2}\sum_{p\in\retp}\sum_{r\in P_H,v\in P_L}\G_{p}\h_r\Big[b_p(a_{r+v}^*a_{-p}b_{-r}^*+b_{r+v}^*a_{-r}^*a_{-p})b_v-b_pa_v^*a_{-p}b_{-r}b_{r+v}\Big]\\
		\Y_7=\,&-N^{-3/2}\sum_{p\in\retp}\sum_{r\in P_H,v\in P_L}\G_{p}\h_r\Big[(a_{r+v}^*a_{-p}b_{-r}^*+b_{r+v}^*a_{-r}^*a_{-p})b_vb_{-p}-a_v^*a_{-p}b_{-r}b_{r+v}b_{-p}\Big]\,.
	} We start by showing how to bound $\Y_2$, the two terms in $\Y_1$ are bounded in the same way. Using Cauchy-Schwarz we get \[
\begin{split}
|\expec{\x}{\Y_2}|\le& N^{-\fra12+\ka-2\a}\left(\sum_{r\in P_H, v\in P_L}|\G_{r+v}|^2\|b_{-r}\x\|^2\right)^{1/2}\left(\sum_{r\in P_H, v\in P_L}\|b_{-r-v}b_v\x\|^2\right)^{1/2}\\
\le & CN^{-\fra12+2\ka+\fra{3}{2}\b-2\a}\|\cN_+^{1/2}\x\|\|(\cN_++1)\x\|,
\end{split}
\] where we used the fact that $|\h_r|\le CN^{\ka-2\a}$ and $\sum_{v\in P_L}|\G_{r+v}|^2\le CN^\ka\abs{P_L}^{1/2}\le CN^{\ka+3\b/2}$ uniformly in $r\in P_H$. $\Y_3$ and $\Y_4$ are bounded similarly, and thanks to the assumptions on $\a,\b,\k$ we conclude
\[\pm\Y_{1,2,3,4}\le N^{-\fra12}(\cN_++1)^2.\]
As for $\Y_5$, we use again Cauchy-Schwarz to bound 
\[\begin{split}
|\expec{\x}{\Y_5}|\le& N^{-\fra12}\left(\sum_{r\in P_H, v\in P_L}|\G_{v}|^2|\h_r|^2|r|^{-2}\|\cN_+^{1/2}\x\|^2\right)^{1/2}\left(\sum_{r\in P_H, v\in P_L}|r|^2\|b_{-r}b_{-v}\x\|^2\right)^{1/2}\\
\le & CN^{-\fra12+2\ka+\fra{3}{2}\b-\fra32\a}\|\cN_+^{1/2}\x\|\|(\cK+1)^{1/2}(\cN_++1)^{1/2}\x\|,
\end{split}
\] which together with the assumptions on $\a,\b,\k$ shows $\pm\Y_5\le N^{-1/2}(\cK+1)(\cN_++1)$. We now turn our attention to $\Y_{6}, \Y_7$. Here we commute each of the terms in square bracket to (partial) normal order and again bound it using Cauchy-Schwarz. We show how this is done for the first term in $\Y_6$. We have
	
	\spl{
		N^{-3/2}&\sum_{p\in\retp}\sum_{r\in P_H,v\in P_L}\G_{p}\h_rb_pa_{r+v}^*a_{-p}b_{-r}^*b_v\\
		=\,&N^{-3/2}\sum_{p\in\retp}\sum_{r\in P_H,v\in P_L}\G_{p}\h_rb_pb_{-r}^*a_{r+v}^*a_{-p}b_v 
		+N^{-3/2}\sum_{r\in P_H,v\in P_L}\G_{r}\h_rb_rb_{r+v}^*b_v\,,
	} and using $\|\G\|_2\le CN^{\ka+3\a/2}$ and $\|\chi_{P_H}\h\|_2\le CN^{\ka-\a/2}$ we bound 
	
	\spl{
		\pm&\expec{\x}{N^{-3/2}\sum_{p\in\retp}\sum_{r\in P_H,v\in P_L}\G_{p}\h_rb_pb_{-r}^*a_{r+v}^*a_{-p}b_v}\\
		\le\,&N^{-3/2}\left(\sum_{p\in\retp}\sum_{r\in P_H,v\in P_L}\abs{\G_p}^2\norm{a_{r+v}b_{-r}b_p^*\x}^2\right)^{1/2}\left(\sum_{p\in\retp}\sum_{r\in P_H,v\in P_L}\abs{\h_r}^2\norm{a_{-p}b_{v}\x}^2\right)^{1/2}\\
		\le & CN^{-1+2\ka+\a}\norm{(\cN_++1)\x}^2,
	} and analogously for the cubic term. The other terms in $\Y_6$ and $\Y_7$ are bounded in the same way, and using the assumptions \eqref{eq:conditions_parameters_JN}we get 
	\[
	\pm\Y_{6}, \pm \Y_7\le CN^{-1/2}(\cN_++1)^2\,,
	\] which concludes the proof of \eqref{eq:comm_Gamma_A_bound} and the lemma.
\end{proof}

\subsection{Analysis of $\conA{\cC_N}$}

To control the action of $A$ on the cubic term $\cC_N$, defined in \eqref{eq:CubicGN_def}, we will make use of the following lemma. 
\begin{lemma}\label{lem:comm_CN_A}
We have 
\be\label{eq:comm_CN_A}
[\cC_N,A]= \X_0 + \delta_{\cC_N} 
\ee
where
\begin{equation}\label{eq:Xi0}  \X_0 =2  N^{-1}\sum_{r\in P_H,v\in P_L}N^\ka \left(\wh V(r/N^{1-\ka})+\wh V((r+v)/N^{1-\ka})\right)\h_rb_v^*b_v \end{equation} 
and 
\be\label{eq:comm_X_j_A}
	\pm \delta_{\cC_N}  \le CN^{-1/2}\left[(\cN_++1)(\cK+1)+(\cN_++1)^3\right]\,,
	\ee 
Moreover, 
\be\label{eq:comm_X_0_A}
	\pm[\X_0,A]\le CN^{-1/2}(\cN_++1)^2\,.
	\ee
\end{lemma}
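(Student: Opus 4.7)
The plan is to expand $[\cC_N, A]$ directly via the CCR \eqref{eq:bpCCR}. First I would split $\cC_N = \cC_N^{(\g)} + \cC_N^{(\s)} + \hc$, with
\[ \cC_N^{(\g)} = \frac{N^\ka}{\sqrt{N}} \sum_{\substack{p,q\in\retp\\p\neq -q}} \wh{V}(p/N^{1-\ka}) \, \gamma_q \, b^*_{p+q} b^*_{-p} b_q \]
and $\cC_N^{(\s)}$ the analogue with $\gamma_q b_q$ replaced by $\sigma_q b^*_{-q}$, and write $A = A_1 - A_1^*$ with $A_1 = N^{-1/2}\sum_{r \in P_H, v \in P_L} \eta_r b^*_{r+v} b^*_{-r} b_v$. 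Each resulting mixed commutator I would then put in Wick-ordered form using \eqref{eq:bpCCR}, obtaining monomials of degree $6$ (no contraction), $4$ (single contraction), $2$ (double contraction), and $0$ (triple contraction), plus $\cN_+/N$ and $a^*a/N$ corrections coming from the non-standard part of the CCR. Only the degree-$2$ monomials can survive at order $N^0$, so the task is to identify those which are not already absorbed in $\delta_{\cC_N}$.

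Counting creations and annihilations, the only mixed commutators capable of producing a genuine $(1,1)$-type quadratic piece via two contractions are $[(\cC_N^{(\g)})^*, A_1]$ and its Hermitian conjugate; the terms involving $\cC_N^{(\s)}$ are of type $(3,0)$ or $(0,3)$, so any quadratic output is of type $(2,0)$ or $(0,2)$ and fits into $\delta_{\cC_N}$. In $[(\cC_N^{(\g)})^*, A_1]$, the $(1,1)$-survivors after double contraction come from pairing $b^*_q$ in $(\cC_N^{(\g)})^*$ with $b_v$ in $A_1$ (forcing $q=v$, hence $\gamma_v = 1$ since $v \in P_L \subset P_H^c$) and simultaneously pairing either $b_{-p}$ with $b^*_{-r}$ or $b_{-p}$ with $b^*_{r+v}$. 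In the first pairing $p=r$, which produces the summand $\wh{V}(r/N^{1-\ka}) \eta_r b^*_v b_v$; in the second $p = -(r+v)$, which by evenness of $\wh{V}$ produces $\wh{V}((r+v)/N^{1-\ka}) \eta_r b^*_v b_v$. Adding the Hermitian conjugate and the combinatorial factors accounts for the overall factor of $2$ in the definition of $\X_0$.

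To estimate the remaining sextic, quartic and $(2,0)$/$(0,2)$-type quadratic contributions that make up $\delta_{\cC_N}$, I would combine the bounds on $\eta_H$ in \eqref{eq:etaH_norms}, on $\sigma_p, \gamma_p-1$ in \eqref{eq:sq_gq_bounds}--\eqref{eq:sx_gx_bounds}, together with the scattering equation \eqref{eq:scattering_eta} and \eqref{eq:chi_f_bound}, and apply Cauchy-Schwarz either in momentum or in position space. Whenever a term involves $\wh{V}(p/N^{1-\kappa})$ multiplying three or more operators, switching to position space (as in the proof of Lemma \ref{lem:GN3}) should yield, for each summand, a bound of the form $CN^{-1/2+\ka/2}\big[(\cN_+ + 1)(\cK + 1) + (\cN_+ + 1)^3\big]$. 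A factor of $\cK$ is needed precisely when an uncontracted momentum of order $|r|$ remains, as in the treatment of $\Y_5$ inside Lemma \ref{lem:G_PHI_A_commutator_bound}. Summing over all such contributions yields \eqref{eq:comm_X_j_A}.

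Finally, for \eqref{eq:comm_X_0_A}, I would observe that $\X_0 = \sum_{v \in P_L} \Phi_v b^*_v b_v$ with
\[ \Phi_v = \frac{2N^\ka}{N}\sum_{r \in P_H} \eta_r \big[\wh{V}(r/N^{1-\ka}) + \wh{V}((r+v)/N^{1-\ka})\big]. \]
The scattering equation \eqref{eq:scattering_eta} combined with \eqref{eq:chi_f_bound} shows that $\sum_r \wh{V}((p-r)/N^{1-\ka}) \eta_r$ is bounded by $CN$ uniformly in $p \in \retp$, and hence $|\Phi_v| \leq CN^\ka$ uniformly in $v \in P_L$. The estimate \eqref{eq:comm_X_0_A} then follows at once from \eqref{eq:comm_Phi_A_bound} in Lemma \ref{lem:G_PHI_A_commutator_bound}. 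The main obstacle in the whole proof is the bookkeeping in the second and third paragraphs: tracking exactly which double-contraction survivors produce $\X_0$ and checking that all $\cN_+/N$ and $a^*a/N$ corrections, together with the quartic single-contraction and sextic no-contraction pieces, fit within the claimed bound on $\delta_{\cC_N}$.
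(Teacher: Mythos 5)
Your overall strategy coincides with the paper's: expand $[\cC_N,A]$ via the generalized CCR, isolate the quadratic survivor $\X_0$, bound everything else with Cauchy--Schwarz in momentum or position space, and obtain \eqref{eq:comm_X_0_A} from the uniform bound $N^{-1+\ka}\sum_{r\in P_H}(\wh V(r/N^{1-\ka})+\wh V((r+v)/N^{1-\ka}))|\h_r|\le CN^\ka$ together with \eqref{eq:comm_Phi_A_bound}; that last step is exactly the paper's argument. However, your bookkeeping for the origin of $\X_0$ is wrong as stated. Writing $(\cC_N^{(\g)})^*\sim \g_q\, b_q^*b_{-p}b_{p+q}$ and $A_1\sim \h_r\, b_{r+v}^*b_{-r}^*b_v$, the contraction pattern you describe --- $b_q^*$ with $b_v$ together with only one of $b_{-p}\leftrightarrow b_{-r}^*$ or $b_{-p}\leftrightarrow b_{r+v}^*$ --- leaves the pair $b_{p+q}$, $b_{r+v}^*$ uncontracted; with $q=v$ and $p=r$ this produces the high-momentum quadratic monomial $b_{r+v}^*b_{r+v}$ (or, at one contraction, a quartic term), not $b_v^*b_v$. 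Moreover, in the Wick expansion of the commutator the contraction $b_v\leftrightarrow b_q^*$ only arises from the ordering $A_1(\cC_N^{(\g)})^*$, where the remaining four operators are already normal-ordered, so it cannot be combined with a second contraction at all. The actual source of $\X_0$ is the \emph{double} contraction of both annihilators $b_{-p},b_{p+q}$ of $(\cC_N^{(\g)})^*$ against both creators $b_{r+v}^*,b_{-r}^*$ of $A_1$: the two pairings force $(p,q)=(r,v)$ or $(p,q)=(-(r+v),v)$ and leave precisely $b_q^*b_v=b_v^*b_v$, yielding the two potentials $\wh V(r/N^{1-\ka})$ and $\wh V((r+v)/N^{1-\ka})$; the factor $2$ then comes from adding $-[\cC_N^{(\g)},A_1^*]$, which is the adjoint of this self-adjoint expression. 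Since identifying $\X_0$ is the main structural content of the lemma, this misidentification is a genuine gap rather than a cosmetic one.

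A second, quantitative issue: you assert that each remainder summand is bounded by $CN^{-1/2+\ka/2}\bigl[(\cN_++1)(\cK+1)+(\cN_++1)^3\bigr]$ and that summing these ``yields \eqref{eq:comm_X_j_A}''. But \eqref{eq:comm_X_j_A} claims the stronger rate $N^{-1/2}$, so a collection of $N^{-1/2+\ka/2}$ bounds cannot sum to it. The estimates actually needed (and obtained in the paper) exploit $\|\h_H\|_2\le CN^{\ka-\a/2}$, $|P_L|\le CN^{3\b}$ and the constraints \eqref{eq:conditions_parameters_JN} (in particular $3\b+4\ka-1<\a$ and $\a>4\ka$) to push each term below $N^{-1/2}$; the one family of terms genuinely requiring the kinetic energy is the analogue of $\X_8$, where the commutator $[b_{-p}^*,A]$ inside the cubic term produces quartic expressions with an uncontrolled momentum $p$ that must be paired with $|p|^2/|p|^2$ and absorbed into $\cK$. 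You should redo these estimates term by term rather than asserting a uniform (and insufficient) per-term rate.
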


\begin{proof} 
Proceeding as in \cite[Lemma 8.4]{BBCS4}, we find 
\begin{equation*}
[\cC_N,A] = \Xi_0 + \sum_{j=1}^{8}\Big[\X_j+\hc\Big], 
\end{equation*} 
with $\X_0$ as in (\ref{eq:Xi0}) and
\begin{align*}
	\X_1=\,&N^{-1}\sum_{r\in P_H,v\in P_L} N^\ka\wh V((r+v)/N^{1-\ka})\h_rb_v^*\left[\left(1-\fra{\cN_+}{N}\right)^2-1\right]b_v\\
	&+N^{-1}\sum_{r\in P_H,v\in P_L}N^\ka\wh V(r/N^{1-\ka})\h_rb_v^*\left[\left(1-\fra{\cN_++1}{N}\right)\left(1-\fra{\cN_+}{N}\right)-1\right]b_v\\
	\X_2=\,&N^{-1}\sum_{\substack{r\in P_H\\v\in P_L}}\sum_{\substack{p\in\retp\\p\neq r+v}} N^\ka\wh V(p/N^{1-\ka})\h_rb_v^*\left(1-\fra{\cN_++1}{N}\right)(b_{-p}^*b_{-r}-\fra{1}{N}a_{-p}^*a_{-r})\\
	&\qquad\qquad\qquad\qquad\qquad\qquad\qquad\qquad\times(\g_{r+v-p}b_{r+v-p}+\s_{r+v-p}b_{p-r-v}^* )\\
	\X_3=\,&N^{-1}\sum_{\substack{r\in P_H\\v\in P_L}}\sum_{\substack{p\in\retp\\p\neq -r}} N^\ka\wh V(p/N^{1-\ka})\h_rb_v^*\left(1-\fra{\cN_+}{N}\right)(b_{-p}^*b_{r+v}-\fra{1}{N}a_{-p}^*a_{r+v})\\
	&\qquad\qquad\qquad\qquad\qquad\qquad\qquad\qquad\times(\g_{r+p}b_{r+p}+\s_{r+p}b_{-p-r}^* )\\
	\X_4=\,&-N^{-1}\sum_{\substack{r\in P_H\\v\in P_L}}\sum_{\substack{p\in\retp\\p\neq v}} N^\ka\wh V(p/N^{1-\ka})\h_r\left(1-\fra{\cN_+-2}{N}\right) b_{r+v}^*b_{-r}^*b_{-p}^*\\
	&\qquad\qquad\qquad\qquad\qquad\qquad\qquad\qquad\times(\g_{p-v}b_{v-p}+\s_{p-v}b_{p-v}^* )\\
	\X_5=\,&N^{-2}\sum_{\substack{r\in P_H\\v\in P_L}}\sum_{\substack{p,q\in\retp\\p\neq -q}} N^\ka\wh V(p/N^{1-\ka})\h_rb_v^*(a_{p+q}^*a_{r+v}b_{-r}+b_{r+v}a_{p+q}^*a_{-r})b_{-p}^*(\g_qb_q+\s_qb_{-q}^*) \\
	\X_6=\,&N^{-2}\sum_{\substack{r\in P_H\\v\in P_L}}\sum_{\substack{p,q\in\retp\\p\neq -q}} N^\ka\wh V(p/N^{1-\ka})\h_rb_{r+v}^*b_{-r}^*a_{p+q}^*a_vb_{-p}^*(\g_qb_q+\s_qb_{-q}^*)\, \\
	\X_7=\,&N^{-1/2}\sum_{\substack{p,q\in\retp\\p\neq -q}} N^\ka\wh V(p/N^{1-\ka})b_{p+q}^*[b_{-p}^*,A](\g_qb_q+\s_qb_{-q}^*),\\
	\X_8=\,&N^{-1/2}\sum_{\substack{p,q\in\retp\\p\neq -q}} N^\ka\wh V(p/N^{1-\ka})b_{p+q}^*b_{-p}^*[(\g_qb_q+\s_qb_{-q}^*),A]\,.
\end{align*} 
To get \eqref{eq:comm_CN_A} we set $\d_{\cC_N}= \sum_{j=1}^8[\X_j+\hc]$ and we proceed to bound the individual terms $\X_i$.
With $\sum_r|\wh V(r/N^{1-\ka})\h_r|\le CN$ we find $\pm\X_1\le CN^{-1}(\cN_++1)^2$. As for $\X_2$, we use $\norm{\h_H}_2,\norm{\s}_2\le CN^{\ka-\a/2}$ and $\abs{P_L}\le CN^{3\b}$ to conclude that 
\[\pm\X_2\le N^{-1+2\ka+3\b/2-\a/2}(\cN_++1)^2\le CN^{-1/2}(\cN_++1)^2\,.\]
Note that here we need the assumption $\a-\b>2\b+4\ka-1$ and the bound
\[\sum_{\substack{r\in P_H\\v\in P_L}}\sum_{\substack{p\in\retp\\p\neq r+v}}\abs{\s_{r+v-p}}^2\norm{b_{-r}b_{p-r-v}^*\x}\le C \abs{P_L}\norm{\s}_2\norm{(\cN_++1)^2\x}\,, \]
The terms $\X_{3}, \X_4$ are bounded similarly to $\X_2$. As for $\X_5$, we keep the factor $(\g_qb_q+\s_q b_{-q}^*)$ as it is, rearranging the other operators in normal order. Then, we apply Cauchy-Schwarz's inequality and the same bounds as above. We proceed analogously for $\X_6$ (here we only have to commute $a_v$ and $b_{-p}^*$) and we find 
\[\pm\X_{5}, \pm \X_6\le CN^{-1/2}(\cN_++1)^3\,.\]
Expanding $\X_7$ with \eqref{eq:comm_b_A} we find terms analogous to $\X_{1,\dots,6}$ that can be bounded as above. Expanding $\X_8$, however, we find some terms that need to be estimated with the kinetic energy $\cK$. To explain this point, let us write $\X_8=\X_8^{(a)}+\X_8^{(b)}$ respectively for the terms proportional to $\g_q, \s_q$ in the commutator with $A$. We compute $\X_8^{(a)}=\sum_{j=1}^{5}\X_8^{(a,j)}$, with
\[
\begin{split}
\X_8^{(a,1)}=&N^{-1+\ka}\sum_{\substack{r\in P_H\\v\in P_L}}\sum_{\substack{p\in\retp\\p\neq-r-v}}\wh V(p/N^{1-\ka})\h_r\g_{r+v}b_{p+r+v}^*b_{-p}^*\left(1-\fra{\cN_+}N\right)b_{-r}^*b_v\\
\X_8^{(a,2)}=&N^{-1+\ka}\sum_{\substack{r\in P_H\\v\in P_L}}\sum_{\substack{p\in\retp\\p\neq r}}\wh V(p/N^{1-\ka})\h_r\g_{r}b_{p-r}^*b_{-p}^*\left(1-\fra{\cN_+-1}N\right)b_{r+v}^*b_v\\
\X_8^{(a,3)}=&-N^{-1+\ka}\sum_{\substack{r\in P_H\\v\in P_L}}\sum_{\substack{p\in\retp\\p\neq-v}}\wh V(p/N^{1-\ka})\h_rb_{p+v}^*b_{-p}^*\left(1-\fra{\cN_+}N\right)b_{-r}b_{r+v}\\
\X_8^{(a,4)}=&-N^{-2+\ka}\sum_{\substack{r\in P_H\\v\in P_L}}\sum_{\substack{p,q\in\retp\\p\neq-q}}\wh V(p/N^{1-\ka})\h_r\g_{q}b_{p+q}^*b_{-p}^*\left(a_{r+v}^*a_qb_{-r}^*+b_{r+v}^*a_{-r}^*a_q\right)b_v\\
\X_8^{(a,5)}=&-N^{-2+\ka}\sum_{\substack{r\in P_H\\v\in P_L}}\sum_{\substack{p,q\in\retp\\p\neq-q}}\wh V(p/N^{1-\ka})\h_r\g_{q}b_{p+q}^*b_{-p}^*a_{v}^*a_qb_{-r}b_{r+v}.
\end{split}
\] By Cauchy-Schwarz we find 
\[
\begin{split}
\expec{\x}{\X_8^{(a,1)}}\le& N^{1-\ka}\Bigg(\sum_{\substack{r\in P_H\\v\in P_L}}\sum_{\substack{p\in\retp\\p\neq-r-v}}|p|^2\|b_{-r}b_{-p}b_{p+r+v}(\cN_++1)^{-1/2}\x\|^2\Bigg)^{1/2}\\
& \times \Bigg(\sum_{\substack{r\in P_H\\v\in P_L}}\sum_{\substack{p\in\retp\\p\neq-r-v}}\fra{\wh V(p/N^{1-\ka})}{|p|^2}\h_r^2\|b_{v}(\cN_++1)^{1/2}\x\|^2\Bigg)^{1/2}\\
\le & CN^{-1/2}\left[\|(\cK+1)^{1/2}(\cN_++1)^{1/2}\x\|^2+\|(\cN_++1)^{1/2}\x\|^2\right].
\end{split}
\] The other quartic terms in $\X_8^{(a)}$ are bounded in the same way, while the sestic terms are handled like $\X_5$. The terms arising from $\X_8^{(b)}$ are analogous and can be bounded similarly (we omit the details), and in the end we obtain 
\[ \pm\X_{8}\le C N^{-1/2}(\cK+\cN_+^2+1)(\cN_++1)\,,\] 
which concludes the proof of \eqref{eq:comm_X_j_A}. Finally, the bound \eqref{eq:comm_X_0_A} is a consequence of Lemma \ref{lem:G_PHI_A_commutator_bound} and the fact that for $v\in P_L$ we have the uniform bound
\[
\sum_{r\in P_H}N^{-1+\ka}(\wh V(r/N^{1-\ka})+\wh V((r+v)/N^{1-\ka}))\h_r\le CN^{\ka}.
\]
	
\end{proof}

\begin{lemma}\label{lem:conA_CN}
We have
\[
\conA{\cC_N}=\cC_N+2N^{-1}\sum_{\substack{r\in P_H\\v\in P_L}}N^\ka\left(\wh V(r/N^{1-\ka})+\wh V((v+r)/N^{1-\ka})\right)\h_rb_v^*b_v+ \cE_{\cC_N},\] 
with 
\[
\pm\cE_{\cC_N}\le CN^{-1/2+\ka}(\cH_N+\cN_+^2+1)(\cN_++1)\,.
\]
\end{lemma}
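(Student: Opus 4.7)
The strategy is a standard Duhamel expansion, using the commutator identity
\[
e^{-A}\cC_N e^A = \cC_N + \int_0^1 e^{-sA}[\cC_N, A]\, e^{sA}\, ds ,
\]
combined with the decomposition $[\cC_N, A] = \X_0 + \delta_{\cC_N}$ supplied by Lemma \ref{lem:comm_CN_A}, where $\X_0$ is precisely the quadratic operator appearing in the statement and $\delta_{\cC_N}$ is controlled by \eqref{eq:comm_X_j_A}.

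First, I would write
\[
e^{-A}\cC_N e^A = \cC_N + \X_0 + \int_0^1 \big(e^{-sA} \X_0 e^{sA} - \X_0\big)\, ds + \int_0^1 e^{-sA}\delta_{\cC_N} e^{sA}\, ds .
\]
The plan is to show that both integrals can be absorbed into $\cE_{\cC_N}$.

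For the first integral, I apply Duhamel once more to obtain
\[
e^{-sA}\X_0 e^{sA} - \X_0 = \int_0^s e^{-tA}[\X_0, A]\, e^{tA}\, dt ,
\]
and then use the bound \eqref{eq:comm_X_0_A}, i.e. $\pm [\X_0, A] \le CN^{-1/2}(\cN_++1)^2$, together with the number-of-particles control $e^{-tA}(\cN_++1)^2 e^{tA}\le C(\cN_++1)^2$ from Lemma \ref{lm:growNA}. This yields a contribution bounded by $CN^{-1/2}(\cN_++1)^2$, which fits comfortably into $\cE_{\cC_N}$.

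For the second integral, I use \eqref{eq:comm_X_j_A} together with the two bounds in Lemma \ref{lm:growNA}: the estimate $e^{-sA}(\cN_++1)(\cH_N+1)e^{sA} \le C(\cH_N + \cN_+^2 + N^\kappa \cN_+)(\cN_++1)$ handles the $(\cK+1)(\cN_++1)$ piece of $\delta_{\cC_N}$, producing the crucial extra factor $N^\kappa$ (from the $N^\kappa \cN_+$ term), while the simpler growth estimate $e^{-sA}(\cN_++1)^3 e^{sA} \le C(\cN_++1)^3$ handles the $(\cN_++1)^3$ piece. The cubic contribution $(\cN_++1)^3$ is in turn absorbed in $(\cH_N + \cN_+^2 + 1)(\cN_+ +1)$, so altogether we obtain
\[
\pm \int_0^1 e^{-sA}\delta_{\cC_N} e^{sA}\, ds \le CN^{-1/2+\kappa}(\cH_N + \cN_+^2 + 1)(\cN_++1) ,
\]
as required.

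I don't expect any real obstacle here since all the hard work has already been done in Lemma \ref{lem:comm_CN_A} (where the explicit constant $\X_0$ is isolated and the remainder is controlled) and in Lemma \ref{lm:growNA} (which provides the right growth estimates for $\cH_N$ and powers of $\cN_+$ under conjugation by $e^A$). The only point that requires some care is the appearance of the extra $N^\kappa$ factor in the final bound: it comes entirely from the $N^\kappa \cN_+$ term in the second estimate of Lemma \ref{lm:growNA}, and is the reason why $\cE_{\cC_N}$ is of size $N^{-1/2+\kappa}$ rather than $N^{-1/2}$.
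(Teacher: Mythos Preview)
Your proof is correct and follows essentially the same approach as the paper: a Duhamel expansion together with the decomposition $[\cC_N,A]=\X_0+\delta_{\cC_N}$ from Lemma \ref{lem:comm_CN_A}, a second Duhamel for $\X_0$ controlled via \eqref{eq:comm_X_0_A}, and the growth bounds of Lemma \ref{lm:growNA}. Your explicit identification of the origin of the extra $N^\kappa$ factor (from the $N^\kappa\cN_+$ term in the second estimate of Lemma \ref{lm:growNA}) is a detail the paper leaves implicit.
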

\begin{proof}
With Lemma \ref{lem:comm_CN_A}, we find
\spl{
\conA{\cC_N}=\,&\cC_N+\int_0^1\consA{[\cC_N,A]}ds\\
	=\,&\cC_N+\int_0^1\consA{\X_0}ds+\int_0^1\consA{\delta_{\cC_N}} ds\\
	=\,&\cC_N+\X_0+\int_0^1\int_0^{s_1}e^{-s_2A}[\X_0,A]e^{s_2A}ds_2ds_1+\int_0^1\consA{\delta_{\cC_N}}ds,
	} 
and the lemma follows from (\ref{eq:comm_X_j_A}) and (\ref{eq:comm_X_0_A}) and from Lemma \ref{lm:growNA}.
\end{proof}

\subsection{Analysis of $\conA{\cH_N}$}

Finally, we have to control the action of $A$ on the Hamilton operator $\cH_N = \cK + \cV_N$, introduced in (\ref{eq:KVN}). To this end, we will make use of the following two lemmas. 
\begin{lemma}\label{lem:comm_HN_A}
We have 
\be\label{eq:Th_j_worse_bound}\begin{split}
	\abs{\langle\x_1, [ \cH_N , A] \x_2\rangle}&\le C \expec{\x_1}{\cH_N+(\cN_++1)^2}+ C\expec{\x_2}{\cH_N+(\cN_++1)^2} \\	&\hspace{0.5cm} + CN^{\kappa} \expec{\x_1}{ (\cN_++1)}+ CN^{\kappa}\expec{\x_2}{ (\cN_++1)}
	\end{split}\ee 
Moreover, we can decompose 
\be  \label{eq:comm_A_HN} [ \cH_N , A] = \Th_0  + \delta_{\cH_N} \ee
where 
\begin{equation}\label{eq:Th0} 
\Th_0 = -N^{-1/2}\sum_{r\in P_H,v\in P_L}N^\ka\wh V(r/N^{1-\ka})b_{r+v}^*b_{-r}^*b_v +\hc
\end{equation}
and 
\begin{equation}\label{eq:deltaHN} 
\begin{split} 
\pm \delta_{\cH_N} \le & CN^{-\fra12+\ka+\fra\b2}(\cK+\cN_+^2+1)(\cN_++1)\,. \end{split}
\end{equation}
\end{lemma}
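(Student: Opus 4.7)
The plan is to compute $[\cH_N, A] = [\cK, A] + [\cV_N, A]$ directly using the commutation relations \eqref{eq:bpCCR}, adapting the scheme of \cite{BBCS4} to the present scaling. Both commutators contain per-summand contributions of size $N^\kappa$; only after invoking the scattering equation \eqref{eq:scattering_eta} do their leading parts cancel to yield the much smaller expression $\Theta_0 + \delta_{\cH_N}$.

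For $[\cK, A]$, using $[\cK, b_q^*] = q^2 b_q^*$ (which is exact, since $\cK$ commutes with $\cN_+$) and its h.c., one finds
\[ [\cK, A] = \frac{1}{\sqrt N}\sum_{r\in P_H,\, v\in P_L}\eta_r \big[(r+v)^2 + r^2 - v^2\big] \big(b_{r+v}^* b_{-r}^* b_v + \text{h.c.}\big). \]
I decompose $(r+v)^2 + r^2 - v^2 = 2 r^2 + 2 r \cdot v$. The cross-term proportional to $r\cdot v$ is small because $|v|\le N^\beta$ on $P_L$, $|\eta_r|\,|r|\le CN^\kappa/|r|$ on $P_H$, and the $v$-sum is restricted to $|P_L|\le CN^{3\beta}$; a careful Cauchy--Schwarz against $\cK$ and $\cN_+$ then gives a contribution compatible with the form of $\delta_{\cH_N}$. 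For the $2r^2\eta_r$ piece, I substitute the scattering equation \eqref{eq:scattering_eta} to get
\[ 2r^2 \eta_r = -N^\kappa\hat V(r/N^{1-\kappa}) - \frac{N^\kappa}{N}(\hat V(\cdot/N^{1-\kappa})\star\eta)_r + 2N^{3-2\kappa}\lambda_\ell (\hat\chi_\ell\star\hat f_N)_r. \]
The first term produces precisely $\Theta_0$; the last, being of size $N^\kappa/|r|^2$ by \eqref{eq:chi_f_bound}, is absorbed into $\delta_{\cH_N}$; and the middle term is kept aside for cancellation with a term from $[\cV_N, A]$.

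For $[\cV_N, A]$, expanding the commutator in normal order produces quartic, quintic, and sextic expressions. The quintic and sextic contributions are bounded by Cauchy--Schwarz in the same way as in Lemmas \ref{lem:G_PHI_A_commutator_bound} and \ref{lem:comm_CN_A}, using $\|\chi_{P_H}\eta\|_2 \le CN^{\kappa-\alpha/2}$, $|P_L|\le CN^{3\beta}$, and the constraints $\alpha > 4\kappa$, $3\beta + 4\kappa < 1 + \alpha$ from \eqref{eq:conditions_parameters_JN}; these fit inside $\delta_{\cH_N}$. Among the quartic pieces, the only non-negligible one is a contribution proportional to $N^{-1/2}\sum N^{\kappa-1}(\hat V(\cdot/N^{1-\kappa})\star\eta)_r (b_{r+v}^* b_{-r}^* b_v + \text{h.c.})$, which cancels exactly the middle term set aside above. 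The crude bound \eqref{eq:Th_j_worse_bound}, needed only to close the Gronwall argument of Lemma \ref{lm:growNA} and not requiring the scattering cancellation, follows instead by keeping $(r+v)^2 + r^2 - v^2 \le C(r+v)^2 + Cr^2$ in $[\cK, A]$, bounding the cubic operators $b^*b^*b$ by $\cH_N + (\cN_++1)^2$ in the spirit of Lemma \ref{lem:V<KN}, and absorbing the scattering-size contribution from $[\cV_N, A]$ into the $N^\kappa\langle\xi_j, \cN_++1\rangle$ terms on the right-hand side.

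The main obstacle is executing the operator-level scattering cancellation while tracking all subleading boundary contributions. Because of the infrared cutoffs $P_H$ in $A$ and $P_L$ in the $v$-summation, the scattering equation produces residues over $P_H^c$ and $P_L^c$ that must be separately estimated using \eqref{eq:eta_punctual_bound} and \eqref{eq:chi_f_bound}. The parameter constraints $\alpha > 4\kappa$ and $3\beta + 4\kappa < 1 + \alpha$ in \eqref{eq:conditions_parameters_JN} are exactly what guarantees that all these residues, together with the quintic and sextic Cauchy--Schwarz remainders, fit inside $CN^{-1/2+\kappa+\beta/2}(\cK+\cN_+^2+1)(\cN_++1)$.
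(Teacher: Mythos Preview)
Your proposal is correct and follows essentially the same approach as the paper: compute $[\cK,A]$ and $[\cV_N,A]$ separately, combine the $2r^2\eta_r$ piece with the convolution contribution from $[\cV_N,A]$ via the scattering equation \eqref{eq:scattering_eta} to isolate $\Theta_0$ (the paper's residues $\Theta_5,\Theta_6,\Theta_7$ are exactly the cutoff and scattering boundary terms you describe), and estimate the cross term $\Theta_1$ and the higher-degree pieces $\Theta_2,\Theta_3,\Theta_4$ by Cauchy--Schwarz, switching to position space for the latter. One minor slip: $[\cV_N,A]$ produces cubic and quintic terms rather than quartic or sextic ones, but this does not affect the argument.
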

 
\begin{proof} 
Proceeding as in \cite[Lemma 4.3]{BBCS4}, we find
\[ [\cH_N,A]=\sum_{j=0}^7\Th_j+\hc\,, \]
with $\Th_0$ as defined in (\ref{eq:Th0}) and 
\spl{
		\Th_1=\,&2N^{-1/2}\sum_{r\in P_H,v\in P_L}N^\ka\h_r(r\cdot v )b_{r+v}^*b_{-r}^*b_v,\\
		\Th_2=\,&N^{-3/2}\sum_{\substack{r\in P_H\\v\in P_L}}\sum_{\substack{q\in\retp,u\in\ret\\u\neq-q,-r-v}}N^\ka\wh V(u/N^{1-\ka})\h_rb_{r+v+u}^*b_{-r}^*a_q^*a_{q+u}b_v,\\
		\Th_3=\,&N^{-3/2}\sum_{\substack{r\in P_H\\v\in P_L}}\sum_{\substack{q\in\retp,u\in\ret\\u\neq-q,r}}N^\ka\wh V(u/N^{1-\ka})\h_rb_{r+v}^*b_{-r+u}^*a_q^*a_{q+u}b_v,\\
		\Th_4=\,&-N^{-3/2}\sum_{\substack{r\in P_H\\v\in P_L}}\sum_{\substack{q\in\retp,u\in\ret\\u\neq-q,r}}N^\ka\wh V(u/N^{1-\ka})\h_rb_{r+v}^*b_{-r}^*a_q^*a_{q+u}b_{-u+v},\\
		\Th_5=\,&-N^{-3/2}\sum_{\substack{p\in P_H\\v\in P_L}}\sum_{r\in P_L\cup \{0\}}N^\ka\wh V((p-r)/N^{1-\ka})\h_rb_{p+v}^*b_{-p}^*b_v,\\
		\Th_6=\,&N^{-3/2}\sum_{\substack{r\in P_H\\v\in P_L}}\sum_{\substack{p\in P_L\\p\neq-v}}N^\ka\wh V((p-r)/N^{1-\ka})\h_rb_{p+v}^*b_{-p}^*b_v,\\
		\Th_7=\,&2N^2\sqrt N\l_l\sum_{r\in P_H,v\in P_L}(\wh \chi_l\star\wh f_N)_r\h_rb_{r+v}^*b_{-r}^*b_v\,.
	} 
This is \eqref{eq:comm_A_HN}, with $\d_{\cH_N} = \sum_{j=1}^7\Th_j$. Note that to isolate $\Th_0$ we need to use the scattering equation \eqref{eq:scattering_eta}, which also produces the last error term $\Th_7$. 
Switching to position space, it is simple to show that 
\[\begin{split}\pm \big(\Th_0+\hc) &= \pm \bigg( \int dxdy\, N^{5/2-2\kappa}V(N^{1-\ka}(x-y))\check{b}^*_x \check{b}^*_y \Big( \sum_{v\in P_L} e^{ivy}b_v \Big)+\hc\bigg)\\ & \leq  C \cV_N+ CN^{\kappa}(\cN_++1). \end{split}\]
For all other terms $\Th_1, \dots , \Th_7$, we show (\ref{eq:Th_j_worse_bound}) and (\ref{eq:deltaHN}). 
The contribution $\Th_1$ can be bounded by Cauchy-Schwarz, yielding
\[ \abs{\langle\x_1|\Th_1|\x_2\rangle}\le CN^{-1/2} \| (\cK_+1)^{1/2}(\cN_++1)^{1/2}\x_1\| \| (\cN_++1)^{1/2}\x_2 \| \,,\] which implies (\ref{eq:Th_j_worse_bound}) and (\ref{eq:deltaHN}).
To bound the quintic terms $\Th_{2}, \Th_3, \Th_4$, we switch to position space. We find that
\[\Th_2=N^{-1/2}\int_{\L^2} N^{2-2\ka}V(N^{1-\ka}(x-y))\ckb_x^*\ckb^*(\cke_{H,x})\cka_y^*\cka_y\ckb(\check \chi_{L,x})dxdy\,,
	\] where $\cke,\check\chi_{L,x}$ are defined by $\cke_{H,x}(y)=\cke_H(y-x)$, $\check\chi_{L,x}(y)=\check\chi_{P_L}(y-x)$. Using the bounds $\norm{\cke_{H,x}}_2=\norm{\h_H}_2\le CN^{\ka-\a/2}$, $\norm{\check\chi_{L,x}}_2=\norm{\chi_{P_L}}_2\le C N^{3\b/2}$ we obtain 	
\spl{
		\langle\x_1|\Th_2|\x_2\rangle\le\,& C N^{-1/2+\ka+(3\b-\a)/2}\int_{\L^2}N^{2-2\ka}V(N^{1-\ka}(x-y))\norm{\cka_x\cka_y\x_1}\norm{\cka_y(\cN_++1)\x_2}\\
		\le\,& CN^{-1+(3\ka+3\b-\a)/2}\norm{\cV_N\x_1} \|(\cN_++1)^{3/2}\x_2 \| \\
		\le\,& CN^{-1/2}\norm{\cV_N\x_1} \| (\cN_++1)^{3/2}\x_2 \| \,,
	}where we used the assumption $\a-\b>2\b+4\ka-1$ in the last step. This bound implies both \eqref{eq:Th_j_worse_bound} and \eqref{eq:deltaHN} for $\Th_2$.  The terms $\Th_{3}, \Th_4$ can be handled analogously. As for $\Th_{5}, \Th_6$, we use again the estimate 
	\[
	\sum_{p\in P_H}\fra{N^\ka\wh V((p-r)/N^{1-\ka})^2}{\abs p^2}\le CN
	\] which holds uniformly in $r\in P_L$. With Cauchy-Schwarz, we find that
	
	\[\abs{\langle\x_1|\Th_5|\x_2\rangle}\le CN^{-1/2} \|(\cK+1)^{1/2}(\cN_++1)^{1/2}\x_1 \| 
	\| (\cN_++1)^{1/2}\x_2 \| \,,	
	\] and, similarly, 
	\spl{
		\abs{\langle\x_1|\Th_6|\x_2\rangle}\le\,&N^{-3/2}\bigg(\sum_{\substack{r\in P_H\\v\in P_L}}\sum_{\substack{p\in P_L\\p\neq-v}}\fra{N^\ka\wh V((p-r)/N^{1-\ka})\abs{\h_r}}{\abs p^2}\norm{b_v\x_2}^2\bigg)^{1/2}\\
		&\times\bigg(\sum_{\substack{r\in P_H\\v\in P_L}}\sum_{\substack{p\in P_L\\p\neq-v}}N^\ka\wh V((p-r)/N^{1-\ka})\abs{\h_r}\abs p^2\norm{b_{p+v}b_{-p}\x_1}^2\bigg)^{1/2}\\
		\le\,& CN^{(-1+\b+2\ka)/2} \|(\cK+1)^{1/2}(\cN_++1)^{1/2}\x_1\|  \|(\cN_++1)^{1/2}\x_2 \|.
	} Finally, it follows from \eqref{eq:scattering_lambda}, \eqref{eq:chi_f_bound}, and $\a>4\ka$ that
	\[
	\abs{\langle\x_1|\Th_7|\x_2\rangle}\le N^{-1/2}\norm{(\cN_++1)\x_1} \| (\cN_++1)^{1/2}\x_2\| \,.
	\] 
This concludes the proof of the lemma.
\end{proof}

\begin{lemma}\label{lem:comm_Th0_A}
Let $\Th_0$ be defined as in (\ref{eq:Th0}). Then, we have 
\be\label{eq:comm_Th0_A}
[\Th_0,A]= \Pi_0 + \delta_{\Th_0} 
\ee
where
\begin{equation}\label{eq:Pi0}  
\Pi_0= -2N^{-1}\sum_{\substack{r\in P_H\\v\in P_L}}N^\ka \left(\wh V(r/N^{1-\ka})+\wh V((r+v)/N^{1-\ka})\right)\h_rb_v^*b_v, 
\end{equation} 
and 
\begin{equation}\label{eq:deltaXi} 
\pm \delta_{\Th_0} \leq C N^{-1/2}(\cK+\cN_+^2+1)(\cN_++1) \, .
\ee 	
Moreover, 
\be\label{eq:comm_Pi0_A_bound}
	\pm[\Pi_0,A]\le CN^{-1/2}(\cN_++1)^2\,.
\end{equation}
\end{lemma}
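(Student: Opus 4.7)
The proof closely parallels Lemma~\ref{lem:comm_CN_A}; indeed the identity $\Pi_0 = -\Xi_0$ reflects that $\Theta_0$ is, up to the restricted summation domain, minus the leading cubic contribution of $\cC_N$ obtained by setting $\gamma_v\approx 1$, $\sigma_v\approx 0$. I would start by decomposing $\Theta_0 = -(\widetilde\Theta + \widetilde\Theta^*)$ with $\widetilde\Theta^* = N^{-1/2}\sum_{r\in P_H,v\in P_L}N^\ka\widehat V(r/N^{1-\ka})b_{r+v}^*b_{-r}^*b_v$, and $A=\widetilde A-\widetilde A^*$ with $\widetilde A = N^{-1/2}\sum\eta_r b_{r+v}^*b_{-r}^*b_v$, and expanding $[\Theta_0,A]$ into its four cross-commutators, applying the approximate CCR \eqref{eq:bpCCR} to commute, in each piece, the single $b$-operator past the paired $b^*$-operators of the other factor.

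Only the ``cross-type'' commutators $[\widetilde\Theta,\widetilde A]$ (of a $b^*bb$-factor with a $b^*b^*b$-factor) and its partner $[\widetilde\Theta^*,\widetilde A^*]$ produce quadratic output, since these are the configurations where both annihilation operators of one factor can fully contract with both creation operators of the other; the same-structure commutators $[\widetilde\Theta^*,\widetilde A]$ and $[\widetilde\Theta,\widetilde A^*]$ give only quartic and higher remainders. Within $[\widetilde\Theta,\widetilde A]$ the two admissible pairings of $(b_{-r},b_{r+v})$ with $(b_{r'+v'}^*,b_{-r'}^*)$ are (A) the ``parallel'' pairing enforcing $r'=r$, $v'=v$ and yielding coefficient $\widehat V(r/N^{1-\ka})\eta_r$, and (B) the ``crossed'' pairing enforcing $r'=-r-v$, $v'=v$ and, using $\eta_{-p}=\eta_p$, yielding coefficient $\widehat V(r/N^{1-\ka})\eta_{r+v}$. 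Relabelling $r\mapsto -r-v$ in case (B), which is legitimate since $P_H=-P_H$, and using $\widehat V(-p)=\widehat V(p)$, converts this into the form $\widehat V((r+v)/N^{1-\ka})\eta_r$ of \eqref{eq:Pi0}. Because $[\widetilde\Theta^*,\widetilde A^*]=-[\widetilde\Theta,\widetilde A]^*$, the two quadratic contributions enter $[\Theta_0,A]$ with signs that add coherently and produce the overall factor of $-2$ in $\Pi_0$.

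All remaining terms—sextic pieces from fully uncontracted products, quartic and quintic pieces from single contractions and from the $\cN_+/N$ and $a^*a/N$ corrections of \eqref{eq:bpCCR}, and the residual quadratic remainders produced by normal-ordering sextic expressions—are collected in $\delta_{\Theta_0}$. These are controlled following the pattern of $\Xi_{2},\dots,\Xi_{8}$ in the proof of Lemma~\ref{lem:comm_CN_A}: Cauchy--Schwarz together with $\|\eta_H\|_2\le CN^{\ka-\a/2}$, $|P_L|\le CN^{3\b}$ and the conditions \eqref{eq:conditions_parameters_JN} (in particular $3\b+4\ka-1<\a$) bounds the sextic contributions by $CN^{-1/2+\ka+\b/2}(\cN_++1)^3$. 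A handful of residual quadratic terms lack sufficient momentum decay to be bounded purely by $\cN_+$; for these I would borrow a factor $|p|^{-1}$ via Cauchy--Schwarz and use the uniform estimate $\sup_{r\in P_L}\sum_{p\in P_H}N^\ka\widehat V((p-r)/N^{1-\ka})^2/|p|^2\le CN$, exactly as in the treatment of $\Xi_8^{(a,1)}$, producing the $\cK(\cN_++1)$ contribution to \eqref{eq:deltaXi}.

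The second bound \eqref{eq:comm_Pi0_A_bound} is immediate from Lemma~\ref{lem:G_PHI_A_commutator_bound}: one writes $\Pi_0=\sum_{v\in\retp}\Phi_v b_v^*b_v$, extending $\Phi_v$ by zero to $P_L^c$, and checks the uniform bound $|\Phi_v|\le CN^\ka$. This follows by Parseval and $\check\eta(x)=-Nw_\ell(N^{1-\ka}x)$, giving $\sum_{r\in\retp}\widehat V(r/N^{1-\ka})\eta_r = -N\int V(N^{1-\ka}y)w_\ell(N^{1-\ka}y)\,dy = \cO(N)$, with the truncation to $P_H$ costing only $\cO(N^{\a+\ka})$; applying \eqref{eq:comm_Phi_A_bound} with this $\Phi$ closes the estimate. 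The main technical obstacle is simply the bookkeeping of the many terms in the commutator expansion, a task made routine by the structural parallel with Lemma~\ref{lem:comm_CN_A}, which already carries out essentially the same computation for the more intricate operator $\cC_N$.
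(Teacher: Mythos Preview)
Your approach is essentially the same as the paper's, which also computes the commutator term by term, organizes the error into pieces $\Pi_1,\dots,\Pi_8$ parallel to $\Xi_1,\dots,\Xi_8$, and deduces \eqref{eq:comm_Pi0_A_bound} from $\Pi_0=-\Xi_0$ via Lemma~\ref{lem:G_PHI_A_commutator_bound}. There is, however, one genuine gap in your extraction of $\Pi_0$.

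In case (B) the crossed contraction requires $r'=-r-v\in P_H$, i.e.\ $r+v\in P_H$, in addition to $r\in P_H$. Your relabelling $r\mapsto -r-v$ maps the summation set $\{r\in P_H,\,v\in P_L,\,r+v\in P_H\}$ to itself (this is where $P_H=-P_H$ is used), so after relabelling you obtain
\[
-2N^{-1}\sum_{\substack{r\in P_H,\,v\in P_L\\ r+v\in P_H}}N^\ka\,\wh V((r+v)/N^{1-\ka})\,\eta_r\,b_v^*b_v,
\]
not the unconstrained sum appearing in $\Pi_0$. The discrepancy is exactly the paper's correction term
\[
\delta=-2N^{-1+\ka}\sum_{\substack{r\in P_H,\,v\in P_L}}\chi_{P_H^c}(r+v)\,\wh V((r+v)/N^{1-\ka})\,\eta_r\,b_v^*b_v,
\]
which must be absorbed into $\delta_{\Theta_0}$. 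This is harmless---the constraint $r\in P_H$, $r+v\notin P_H$ forces $|r|$ into a shell of width $O(N^\beta)$ around $N^\alpha$, giving $|\langle\xi,\delta\,\xi\rangle|\le CN^{-1+2\ka+\a}\langle\xi,\cN_+\xi\rangle$---but it is not captured by your argument as written.

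One minor slip: the terms requiring the kinetic energy (analogous to $\Xi_8^{(a,1)}$) are quartic in $b,b^*$, not quadratic; they arise from expanding $[b_v,A]$ inside $b_{r+v}^*b_{-r}^*[b_v,A]$ and then applying Cauchy--Schwarz with a borrowed $|p|$.
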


\begin{proof}
Proceeding as in \cite[Lemma 8.6]{BBCS4}, we obtain
\[ [ \Th_0, A] = \Pi_0+ \sum_{j=1}^8 \Big[\Pi_j+\hc\Big] + \delta \]
with $\Pi_0$ defined as in (\ref{eq:Pi0}), 
\[
\delta=-2N^{-1+\ka}\sum_{\substack{r\in P_H\\v\in P_L}} \chi_{P_H^c}(r+v)\wh V((r+v)/N^{1-\ka})\h_rb_v^*b_v
\]
 and 
\begin{align*}
\Pi_1=\,&-N^{-1}\sum_{\substack{r\in P_H,v\in P_L\\ r+v\in P_H}}N^\ka\wh V((r+v)/N^{1-\ka})\h_rb_v^*\left[\left(1-\fra{\cN_+}{N}\right)^2-1\right]b_v,\\
		&-N^{-1}\sum_{\substack{r\in P_H\\v\in P_L}}N^\ka\wh V(r/N^{1-\ka})\h_rb_v^*\left[\left(1-\fra{\cN_++1}{N}\right)\left(1-\fra{\cN_+}{N}\right)-1\right]b_v,\\
		\Pi_2=\,&-N^{-1}\sum_{\substack{r\in P_H\\v\in P_L}}\sum_{\substack{w\in P_L\\w-r-v\in P_H}}N^\ka\wh V((r+v-w)/N^{1-\ka})\h_rb_v^*\left(1-\fra{\cN_++1}{N}\right),\\
		&\qquad\qquad\qquad\qquad\times(b_{w-r-v}^*b_{-r}-\fra{1}{N}a^*_{w-r-v}a_{-r})b_w,\\
		\Pi_3=\,&-N^{-1}\sum_{\substack{r\in P_H\\v\in P_L}}\sum_{\substack{w\in P_L\\w+r\in P_H}}N^\ka\wh V((r+w)/N^{1-\ka})\h_rb_v^*\left(1-\fra{\cN_+}{N}\right),\\
		&\qquad\qquad\qquad\qquad\times(b_{w+r}^*b_{r+v}-\fra{1}{N}a^*_{w+r}a_{r+v})b_w,\\
		\Pi_4=\,&N^{-1}\sum_{\substack{r\in P_H\\v\in P_L}}\sum_{\substack{w\in P_L\\w-v\in P_H}}N^\ka\wh V((v-w)/N^{1-\ka})\h_rb_{r+v}^*b_{-r}^*\left(1-\fra{\cN_+}{N}\right)b_{w-v}^*b_w,\displaybreak \\
		\Pi_5=\,&N^{-2}\sum_{\substack{r\in P_H\\v\in P_L}}\sum_{\substack{s\in P_H\\w\in P_L}}N^\ka\wh V(s/N^{1-\ka})\h_rb_{v}^*(b_{-r}a_{s+w}^*a_{r+v}+a_{s+w}^*a_{-r}b_{r+v})b_{-s}^*b_w, \\
		\Pi_6=\,&-N^{-2}\sum_{\substack{r\in P_H\\v\in P_L}}\sum_{\substack{s\in P_H\\w\in P_L}}N^\ka\wh V(s/N^{1-\ka})\h_rb_{r+v}^*b_{-r}^*a_{s+w}^*a_{v}b_{-s}^*b_w,  \\
		\Pi_7=\,&-N^{-1/2}\sum_{\substack{r\in P_H\\v\in P_L}}N^\ka\wh V(r/N^{1-\ka})b_{r+v}^*[b_{-r}^*,A]b_v,\\
		\Pi_8=\,&-N^{-1/2}\sum_{\substack{r\in P_H\\v\in P_L}}N^\ka\wh V(r/N^{1-\ka})b_{r+v}^*b_{-r}^*[b_v,A]\,.
\end{align*}
This is \eqref{eq:comm_Th0_A}, with $\d_{\Th_0} = \d+\sum_{j=1}^8[\P_j+\hc]$. We proceed to bound the terms in the error to show (\ref{eq:deltaXi}).  The characteristic function in $\d$ vanishes for any $v$, if $\abs r>N^{\a}+N^\b$: in particular, 
\[
\expec{\x}{\d}\le CN^{-1+\ka}\expec{\x}{\cN}\Big(\sum_{|r|\le 2N^\a}|\h_r|\Big)\le CN^{-1+2\ka+\a}\expec\x\cN.
\]
The term $\Pi_1$ is easily bounded by Cauchy-Schwarz's inequality and $\sum_{r\in\retp} |\wh V(r/N^{1-\ka})\h_r| \le C N$: we find $\pm\Pi_1\le CN^{-1+\ka}(\cN_++1)^2\le CN^{-1/2}(\cN_++1)^2$. As for $\Pi_2$ we estimate 
	\spl{
		|\expec{\x}{\Pi_2}|\le\,& CN^{-1+\ka}\bigg(\sum_{\substack{r\in P_H\\v\in P_L}}\sum_{\substack{w\in P_L\\w-r-v\in P_H}}\abs {\h_r}^2\norm{b_{w-r-v}b_v\x}^2\bigg)^{1/2}\\
		&\qquad\times\bigg(\sum_{\substack{r\in P_H\\v\in P_L}}\sum_{\substack{w\in P_L\\w-r-v\in P_H}}\norm{b_{-r}b_ {-w}\x}^2\bigg)^{1/2}\\
		\le\,& CN^{-1+\ka}\norm{\h_H}_2\abs{P_L}^{1/2}\norm{(\cN_++1)^2\x}^2\\
		\le\,&  CN^{-1/2}\norm{(\cN_++1)^2\x}^2\,,} 
where we used $\a-\b>2\b+4\k-1$ in the last step. The terms $\Pi_{3}, \Pi_4$ can be bounded similarly. The terms $\Pi_{5}, \Pi_6$ are first rearranged in normal order, producing additional terms from commutators, and then they are bounded as we did for $\X_{5}, \X_6$ in the proof of Lemma \ref{lem:comm_CN_A}. We find $\Pi_{5}, \Pi_6\le CN^{-1/2}(\cN_++1)^3$. Expanding the commutator in  $\Pi_{7}$ we find terms similar to $\Pi_1,...,\Pi_6$, which are bounded in the same way. Finally, the term $\Pi_8$ is analogous to $\X_8$ in Lemma \ref{lem:comm_CN_A} and can be bounded similarly, using the kinetic energy operator. This yields $\Pi_8\le CN^{-1/2}(\cK+\cN_+^2+1)(\cN_++1)$. Collecting all bounds, we get \eqref{eq:deltaXi}. Finally, the bound \eqref{eq:comm_Pi0_A_bound} follows from the fact that $\Pi_0=-\Xi_0$ and from \eqref{eq:comm_X_0_A}.
\end{proof}
Applying the last two lemmas, we can now control the action of $A$ on $\cH_N$. 
\begin{lemma}\label{prop:conA_HN}
We have
\spl{
		\conA{\cH_N}=\,&\cH_N-N^{-1/2}\sum_{\substack{r\in P_H\\v\in P_L}}N^\ka\wh V(r/N^{1-\ka})\Big[b_{r+v}^*b_{-r}^*b_v+\hc\Big]\\
		&-N^{-1}\sum_{\substack{r\in P_H\\v\in P_L}}N^\ka\left(\wh V(r/N^{1-\ka})+\wh V((r+v)/N^{1-\ka})\right)\h_rb_v^*b_v+\cE_{\cH_N} \,,
	}with 
	\[
	\pm\cE_{\cH_N} \le CN^{-1/2+\b/2+2\ka}(\cK+\cN_+^2+1)(\cN_++1)\,.
	\]
\end{lemma}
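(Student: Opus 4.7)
The plan is to iteratively apply Duhamel's formula three times, using Lemma \ref{lem:comm_HN_A} and Lemma \ref{lem:comm_Th0_A} to extract the leading contributions and Lemma \ref{lm:growNA} to control the growth of error operators under the action of $e^{sA}$.

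First, I would write
\[
\conA{\cH_N} = \cH_N + \int_0^1 \consA{[\cH_N,A]}\, ds,
\]
and invoke the decomposition $[\cH_N,A]=\Th_0+\d_{\cH_N}$ from Lemma \ref{lem:comm_HN_A}. The contribution coming from $\d_{\cH_N}$ is directly absorbed into $\cE_{\cH_N}$: combining the bound $\pm\d_{\cH_N}\le CN^{-1/2+\ka+\b/2}(\cK+\cN_+^2+1)(\cN_++1)$ with the second estimate of Lemma \ref{lm:growNA}, which produces a multiplicative factor of order $N^\ka$, yields precisely the error bound $CN^{-1/2+2\ka+\b/2}(\cK+\cN_++1)(\cN_++1)$ stated in the lemma.

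Next, for the remaining term $\int_0^1 \consA{\Th_0}\, ds$ I would expand once more,
\[
\consA{\Th_0} = \Th_0 + \int_0^s e^{-tA}[\Th_0,A]e^{tA}\, dt,
\]
and apply Lemma \ref{lem:comm_Th0_A} to write $[\Th_0,A]=\Pi_0+\d_{\Th_0}$. The $\Th_0$ term is already an exact contribution and matches the cubic piece in the statement. For the $\d_{\Th_0}$ contribution, the bound $\pm\d_{\Th_0}\le CN^{-1/2}(\cK+\cN_+^2+1)(\cN_++1)$ combined again with Lemma \ref{lm:growNA} fits within the claimed error bound on $\cE_{\cH_N}$.

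Finally, for the remaining $\Pi_0$ contribution I would apply Duhamel once more,
\[
e^{-tA}\Pi_0 e^{tA} = \Pi_0 + \int_0^t e^{-uA}[\Pi_0,A]e^{uA}\, du,
\]
so that the double integral $\int_0^1 ds\int_0^s dt\,\Pi_0 = \Pi_0/2$ produces exactly the quadratic term appearing in the statement (using that $\Pi_0=-\Xi_0$ with $\Xi_0$ from Lemma \ref{lem:comm_CN_A}, yielding the factor $-1$ in front of $N^{-1}\sum(\wh V(r/N^{1-\ka})+\wh V((r+v)/N^{1-\ka}))\h_r b_v^* b_v$). The triple remainder involving $[\Pi_0,A]$ is then controlled by the bound \eqref{eq:comm_Pi0_A_bound}, $\pm[\Pi_0,A]\le CN^{-1/2}(\cN_++1)^2$, together with Lemma \ref{lm:growNA}, and is again absorbed into $\cE_{\cH_N}$.

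The main obstacle is bookkeeping the growth of the kinetic operator $\cH_N$ under $e^{sA}$ for the $\d_{\cH_N}$ contribution, since the $N^\ka$ factor arising from the second estimate of Lemma \ref{lm:growNA} is the tightest point of the argument and is precisely what forces the final error exponent $-1/2+2\ka+\b/2$. All other error terms in this expansion involve only $\cN_+$ (not $\cH_N$) and are therefore controlled more easily via the first bound of Lemma \ref{lm:growNA}. Collecting the three contributions $\Th_0$, $\Pi_0/2$, and the three error terms then produces the desired identity.
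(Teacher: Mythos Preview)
Your proposal is correct and follows exactly the same three-step Duhamel expansion as the paper's proof, invoking Lemmas \ref{lem:comm_HN_A}, \ref{lem:comm_Th0_A} and \ref{lm:growNA} in the same order. One small inaccuracy: the bound on $\d_{\Th_0}$ in \eqref{eq:deltaXi} also involves $\cK$, not only powers of $\cN_+$, so the second estimate of Lemma \ref{lm:growNA} is needed for that term as well; this costs an additional $N^\ka$ but the resulting $N^{-1/2+\ka}$ is still comfortably absorbed into the claimed $N^{-1/2+\b/2+2\ka}$ error.
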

\begin{proof}
	With Lemma \ref{lem:comm_HN_A} and Lemma \ref{lem:comm_Th0_A}, we find
	\spl{
		\conA{\cH_N} =\,&\cH_N+\int_0^1\consA{\Th_0}ds+\int_0^1\consA{\delta_{\cH_N}}ds\\	=\,&\cH_N+\Th_0+\int_0^1\int_0^{s_1}e^{-s_2A} \Pi_0e^{s_2A}ds_2ds_1\\
		&+\int_0^1\int_0^{s_1}e^{-s_2A} \delta_{\Th_0} e^{s_2A}ds_2ds_1+\int_0^1\consA{\delta_{\cH_N}}ds\\
		=\,&\cH_N+\Th_0+\fra 1 2 \Pi_0+\int_0^1\int_0^{s_1}\int_0^{s_2}e^{-s_3A} [\Pi_0,A]e^{s_3A}ds_3ds_2ds_1\\
		&+\int_0^1\int_0^{s_1}e^{-s_2A} \delta_{\Th_0} e^{s_2A}ds_2ds_1+\int_0^1\consA{\delta_{\cH_N}}ds\\
		=\,&\cH_N+\Th_0+\fra 12\Pi_0+\cE_{\cH_N},
	} and the claim follows by \eqref{eq:Th0} \eqref{eq:Pi0}, the bounds (\ref{eq:deltaHN}), (\ref{eq:deltaXi}) and Lemma \ref{lm:growNA}.
\end{proof}

\subsection{Proof of Proposition \ref{prop:JN}}

Bringing together the results of Lemma \ref{lem:QGN_A_conj}, Lemma \ref{lem:conA_CN} and Lemma 
\ref{prop:conA_HN}, we find that
\bes{\label{eq:conj_GN_step1}
	\conA{\cG_N}=\,&C_{\cG_N}+\cQ_{\cG_N}+\cC_N+\cH_N\\
	&+N^{-1}\sum_{\substack{r\in P_H\\v\in P_L}}N^\ka\left(\wh V(r/N^{1-\ka})+\wh V((r+v)/N^{1-\ka})\right)\h_rb_v^*b_v\\
	&-N^{-1/2}\sum_{\substack{r\in P_H\\v\in P_L}}N^\ka\wh V(r/N^{1-\ka})\Big[b_{r+v}^*b_{-r}^*b_v+\hc\Big]+\wt\cE_{\cJ_N}+\conA{\cE_{\cG_N}}\,,
} with 
\[
\pm\wt\cE_{\cJ_N}\le CN^{-\fra12+\fra\b2+2\ka}(\cK+\cN_+^2+1)(\cN_++1)\,.
\] 
We also notice that, by Equation \eqref{eq:GN_error_bounds} and Lemma \ref{lm:growNA}, we have
\[
\pm\conA{\cE_{\cG_N}}\le CN^{-\fra{1}{2}+5\ka/2+\a}(\cH_N+\cN_+^2+1)(\cN_++1)\,.
\]

We consider
\spl{ 
	\cC_N-N^{-1/2}&\sum_{\substack{r\in P_H\\v\in P_L}}N^\ka\wh V(r/N^{1-\ka})\Big[b_{r+v}^*b_{-r}^*b_v+\hc\Big]\\
	=\,&N^{-1/2}\sum_{\substack{r\in\retp,v\in P_L^c\\v\neq-r}}N^\ka\wh V(r/N^{1-\ka})b_{r+v}^*b_{-r}^*(\g_vb_v+\s_vb_{-v}^*)\\
	&+N^{-1/2}\sum_{\substack{r\in P_H^c,v\in P_L\\v\neq-r}}N^\ka\wh V(r/N^{1-\ka})b_{r+v}^*b_{-r}^*b_v+\hc\\
	=\,&Z_1+Z_2+\hc
} 
It is simple to check that
\[
\pm (Z_2+\hc)\le CN^{-\fra12+\fra\a2}(\cK+1)(\cN_++1).
\]
The term $Z_1$, on the other hand, can be bounded in two ways, leading to the estimates  \eqref{eq:EJN_bounds} and, respectively, \eqref{eq:EJN_bounds2}. In the first case, we have 
	\[ \begin{split} 
	|\expec{\x}{Z_1}|\le\,& N^{-1/2} \sum_{\substack{r\in\retp,v\in P_L^c\\v\neq-r}}N^\ka\fra{\wh V(r/N^{1-\ka})}{r}r\norm{b_{r+v}b_{-r}\x}\left(\norm{b_v\x}+\abs{\s_v} \|(\cN_++1)^{1/2}\x \| \right)\\
	\le\,& N^{-1/2-\b+\ka}\bigg(\sum_{\substack{r\in\retp,v\in P_L^c\\v\neq-r}}r^2\norm{b_{r+v}b_r\x}^2\bigg)^{1/2}\\
	&\hspace{2cm}\times\bigg(\sum_{\substack{r\in\retp,v\in P_L^c\\v\neq-r}}\fra{\wh V(r/N^{1-\ka})^2}{r^2}v^2\norm{b_{v}\x}^2\bigg)^{1/2}\\
	&+ N^{-1/2+\ka}\bigg(\sum_{\substack{r\in\retp,v\in P_L^c\\v\neq-r}}r^2\norm{b_{r+v}b_r\x}^2\bigg)^{1/2}\\
	&\hspace{2cm}\times\bigg(\sum_{\substack{r\in\retp,v\in P_L^c\\v\neq-r}}\fra{\wh V(r/N^{1-\ka})^2}{r^2}{\s_v^2}\| (\cN_++1)^{1/2}\x \|^2\bigg)^{1/2}\\
	\le\, & CN^{\fra\ka2-\b} \|(\cK+1)^{1/2}(\cN_++1)^{1/2}\x \|^2\\
	& +CN^{\fra\ka2-\fra\b2} \|(\cK+1)^{1/2}(\cN_++1)^{1/2}\x \| \| (\cN_++1)^{1/2}\x \|,\end{split}\]
In the second case, we switch to position space and find for any $\delta>0$ that
\[\begin{split}
\big| \langle \xi| Z_1|\xi\rangle\big|&\leq \int dxdy\, N^{5/2-2\kappa}V(N^{1-\ka}(x-y))|\Big| \langle\xi, \check{b}^*_x\check{b}^*_y\sum_{v\in P_L^c} e^{ivy} (\g_vb_v+\s_vb_{-v}^*)|\xi\rangle\Big|\\
&\leq \delta \langle\xi|\cV_N|\xi\rangle + CN^{\kappa}\delta^{-1}\langle\xi|(\cN_++1)|\xi\rangle.
\end{split}\]

Finally, we combine $\cQ_{\cG_N}$, as defined in (\ref{eq:QGN_def}), with the quadratic terms on the r.h.s. of \eqref{eq:conj_GN_step1}. Comparing with the definition of $\cQ_{\cJ_N}$ in (\ref{eq:CJN_QJN}), we obtain 
\[ \cQ_{\cG_N} + \cK + N^{-1}\sum_{\substack{r\in P_H\\v\in P_L}}N^\ka\left(\wh V(r/N^{1-\ka})+\wh V((r+v)/N^{1-\ka})\right)\h_rb_v^*b_v = \cQ_{\cJ_N} + \wt{\cE}_1 + \wt{\cE}_2 +\wt\cE_3\]
where  
\[
\wt\cE_1=\sum_{p\in\retp}p^2\left(a_p^*a_p-b_p^*b_p\right)=N^{-1}\sum_{p\in\retp}p^2a_p^*\cN_+a_p
\] is easily bounded by $\pm\wt\cE_1\le N^{-1}(\cK+1)(\cN_++1)$, and 
\[ \begin{split} \wt{\cE}_2 = &-N^{-1}\sum_{\substack{r\in P_H,v\in P_L^c}}N^\ka\left(\wh V(r/N^{1-\ka})+\wh V((r+v)/N^{1-\ka})\right)\h_rb_v^*b_v \\
&-N^{-1}\sum_{\substack{r\in P_H^C,v\in \retp}}N^\ka\left(\wh V(r/N^{1-\ka})+\wh V((r+v)/N^{1-\ka})\right)\h_rb_v^*b_v \\ &- 2N^{-1}\sum_{p\in P_H}\s_p^2N^\ka\left(\convo{\h}_p+\convo{\h}_0\right)b_p^*b_p \end{split} \]  can be bounded by $\pm \wt{\cE}_2 \leq C N^{\ka-2\beta} (\cK + 1)$ (note that the most dangerous term is the one on the first line).  Finally, the off-diagonal part of the error
\[ \wt{\cE}_3 = \sum_{p \in \L^*_+} (G_p - \Gamma_p) [ b_p^*b_{-p}^*+b_pb_{-p} ] \] can be estimated by 
\[ \begin{split}
&|\expec{\x}{\wt\cE_3}|=|\expec{\x}{\sum_{p\in\retp}(G_p-\G_p)[b_p^*b_{-p}^*+b_pb_{-p}]}|\\
&\le CN^{-1+\ka}\!\!\sum_{p\in P_H}\bigg(\sum_{q\in P_H^c} \abs{\wh V((p-q)/N^{1-\ka})}\abs{\h_q}+\!\!\!\sum_{q\in\retp}\abs{\s_p}\abs{\wh V(q/N^{1-\ka})}\abs{\h_q}\bigg)\norm{b_p^*\x}\norm{b_{-p}\x}\\
&\le  C(N^{-\fra12+\fra{3\ka}2+\a}+N^{2\ka-3\a/2})\| (\cK+1)^{1/2}\x \| \| (\cN_++1)^{1/2}\x \|,\end{split}\]
to get \eqref{eq:EJN_bounds}, or, without using the kinetic energy for the second term, by 
\[
|\langle\x,\wt\cE_3\x\rangle|\le CN^{-\fra12+\fra{3\ka}2+\a}\| (\cK+1)^{1/2}\x \| \| (\cN_++1)^{1/2}\x \|+CN^{2\ka-\a/2}\| (\cN_++1)^{1/2}\x \|^2
\] to get \eqref{eq:EJN_bounds2}.
Collecting the estimates from above and taking into account the assumptions on $\kappa,\alpha,\beta$, we conclude the proof of Proposition \ref{prop:JN}.

\nocite{*}

\end{document}